\documentclass[11pt]{article}
\usepackage{amsmath, amssymb, amsthm, mathtools}
\usepackage{bm}
\usepackage{natbib}
\usepackage[
  left=2.3cm,
  right=2.3cm,
  top=2.5cm,
  bottom=2.5cm
]{geometry}
\usepackage{appendix}
\newtheorem{theorem}{Theorem}[section]
\newtheorem{lemma}[theorem]{Lemma}
\newtheorem{corollary}[theorem]{Corollary}
\newtheorem{proposition}[theorem]{Proposition}
\newtheorem{definition}[theorem]{Definition}
\newtheorem{remark}[theorem]{Remark}

\newcommand{\ind}{\mathds{1}}

\usepackage[utf8]{inputenc}
\usepackage{natbib}
\usepackage{booktabs}
\usepackage{longtable}
\usepackage{lineno,hyperref}
\usepackage{lscape}
\usepackage{xcolor}
\usepackage{tabularx}
\usepackage{threeparttable}
\usepackage{color}
\usepackage{graphics}
\usepackage{graphicx} 
\usepackage{diagbox}
\usepackage{colortbl}
\usepackage{setspace}
\usepackage{amssymb}
\usepackage{amsmath}
\usepackage{mathrsfs}
\usepackage{calrsfs}
\usepackage{multirow}
\usepackage{rotating}
\usepackage{pdflscape}
\usepackage{dsfont}
\usepackage{stmaryrd}
\usepackage{float}
\usepackage{lmodern} 
\usepackage{algorithm}
\usepackage{algpseudocode}
\usepackage{comment}
\usepackage[commentColor=cyan]{algpseudocodex}

\DeclareMathOperator*{\argmin}{arg\,min}
\DeclareMathAlphabet{\pazccal}{OMS}{zplm}{m}{n}

\usepackage{enumitem}
\usepackage{lmodern}

\begin{document}

\title{(In)Efficient Market States and Rough Volatility Detected via
Gr\"{u}nwald-Letnikov Fractional Derivative}
\author{Daniele Angelini\\ \small{MEMOTEF, Sapienza University of Rome (Italy)}\\\small{daniele.angelini@uniroma1.it} }
\date{}

\maketitle
\begin{abstract}
Testing self-similarity in fractional processes from a single observed trajectory is difficult under long-range dependence, because the associated Kolmogorov--Smirnov (KS) statistic undergoes a phase transition when $H>1/2$. In this regime, the classical limit collapses to a non-functional absolute Gaussian law and finite-sample convergence becomes severely distorted. This paper introduces a regime-adaptive KS/GL--KS framework based on the discrete Gr\"{u}nwald--Letnikov (GL) fractional derivative. The GL filter removes the low-frequency long-memory singularity while preserving the finite-dimensional $H$-self-similarity needed for distributional identification. We derive the filtered empirical-process limit, prove consistency and local asymptotic behavior of the resulting Hurst estimator, and validate the method through Monte Carlo simulations. Financial applications to realized volatility and equity index prices show how the procedure detects rough volatility and persistent, anti-persistent, or efficient market states.
\end{abstract}

\textbf{Keywords:} Hurst exponent; Fractional processes; Kolmogorov--Smirnov test; Gr\"{u}nwald-Letnikov derivative

\section{Introduction}\label{sec:intro}

Fractional Brownian motion (fBm), pioneered by Kolmogorov \citep{Kolmogorov1940} and formalized by Mandelbrot and Van Ness \citep{MandelbrotVanNess1968}, stands as a cornerstone stochastic framework for modeling complex systems that exhibit self-similarity, non-stationarity, and scaling behaviors. Fully characterized by its Hurst parameter $H\in(0,1]$, fBm has become indispensable across a multitude of quantitative disciplines, most notably in financial economics, where asset prices, exchange rates, and volatility dynamics frequently display persistent scaling laws. The increment process of fBm, known as fractional Gaussian noise (fGn), inherits this scaling behavior and exhibits a distinctive memory structure governed entirely by $H$. When $H=1/2$, fGn reduces to standard independent white noise; for $H<1/2$, it features anti-persistence or short-range dependence (SRD). Conversely, when $H>1/2$, the process enters the long-range dependence (LRD) regime, where its autocovariance function decays non-integrably and its spectral density diverges at the origin.

In empirical applications, precisely identifying and validating the true self-similarity parameter $H$ is a critical inferential task. Traditional estimation approaches for the Hurst exponent and long-range dependence are often
based on moment or second-order scaling relations, including variance plots, rescaled-range methods, absolute-moment scaling, DFA-type procedures, and periodogram regression
\cite{TaqquTeverovskyWillinger1995,Weron2002}. Although these methods are widely used, their finite-sample behavior can be strongly affected by short records, trends, shifts in the mean, and scaling crossovers \cite{TeverovskyTaqqu1997,Kantelhardt2001}. Such effects may generate spurious evidence of persistence or distort the estimated scaling exponent, especially in financial series where structural breaks and heavy-tailed fluctuations are common. To mitigate these vulnerabilities, recent advancements have introduced non-parametric, distribution-based approaches to evaluate self-similarity. By exploiting the strict invariance of rescaled increment distributions across varying time scales, one can define the diameter of a family of distributions, which theoretically achieves a global minimum of zero at the true parameter $H$. When applied to empirical data, this framework constructs a two-sample Kolmogorov--Smirnov (KS) type statistic from lagged increments extracted from a single observed trajectory \citep{Bianchi2004}.

However, applying this framework to time series encounters significant theoretical obstacles. Specifically, extracting rescaled processes from a single sample trajectory introduces a complex dependence structure that violates the core independence assumptions underlying the classical KS test. This issue manifests on two distinct fronts: the \textit{intradependence problem}, which stems from the internal autocorrelation of the process governed by the self-similarity parameter $H$, and the \textit{interdependence problem}, which arises from the statistical overlap among the extracted rescaled series \citep{AngeliniBianchi2025}. Consequently, the standard KS distribution proves inadequate, necessitating an alternative asymptotic distribution that explicitly accounts for the $H$-induced dependence structure. While \citep{AngeliniBianchi2025} partially mitigated these hurdles using a random permutation approach, this method remains computationally intensive and, crucially, introduces an additional source of estimation variance, thereby highlighting the need for an accelerated, robust analytical framework.

However, this transition is not merely computational; it is fundamentally dictated by the underlying stochastic properties of the process. As established in Propositions \ref{prop:SRD} and \ref{prop:LRD}, the asymptotic behavior of this KS-type statistic undergoes a sharp, structurally disruptive phase transition at the critical threshold of $H=0.1/2$. In the SRD and independent regimes ($H \leq 1/2$), the statistic converges swiftly to a theoretical limit defined by the supremum of a centered Gaussian process with an explicit covariance structure. Conversely, when $H > 1/2$ the LRD regime necessitates a distinct normalization, as the limit collapses to a scaled absolute Gaussian random variable. As shown in Remark \ref{rmk:slowLRD}, the persistent, heavy correlation structures inherent to the LRD regime generate severe finite-sample distortions, leading to a slow rate of empirical convergence. This slow convergence fundamentally devalues the standard asymptotic distribution for $H > 1/2$, rendering the classical framework practically ineffective for robust hypothesis testing and goodness-of-fit verification in long-memory environments.

To overcome this issue, this paper presents a novel mathematical architecture designed to accelerate convergence by separating the long-range memory of the process from its self-similar scaling properties. We achieve this by filtering the fGn process through a discrete fractional transformation based on the Gr\"{u}nwald-Letnikov (GL) derivative. By applying a discrete GL derivative of order $\alpha$, where $\alpha$ is strategically chosen such that $H-\alpha<1/2$, we fundamentally re-engineer the low-frequency behavior of the stochastic process. In the frequency domain, while the spectral density of the original fGn blows up at the origin due to long memory, the spectral density of the GL-filtered process behaves asymptotically as $f(\lambda)\sim |\lambda|^{1-2(H-\alpha)}$ as $\lambda\to 0$. Because $H-\alpha<1/2$, the exponent becomes positive, effectively shifting the filtered process out of the LRD domain and into a short-memory, anti-persistent regime. Crucially, we prove that this fractional transformation decouples memory from scaling without disrupting the underlying self-similarity relation required for distributional self-similarity testing; the discrete GL derivative preserves the strict $H$-self-similarity in the sense of finite-dimensional distributions across distinct scales.

Building upon this theoretical mechanism, this paper pursues two main aims:
\begin{itemize}
    \item Derivation of the Asymptotic Distribution: Our first objective is to find and formalize the limiting distribution of the newly proposed Gr\"{u}nwald--Letnikov--Kolmogorov-Smirnov (GL-KS) test statistic for this transformed short-memory process. By successfully neutralizing the LRD via the GL filter, we bypass the correlation-induced slow convergence rates that plague the classical test. We demonstrate that the empirical process of the GL-derivative adheres to short-range dependence functional central limit theorems, enabling the GL-KS statistic to converge rapidly to the supremum of a well-defined, centered Gaussian process. This provides a robust, computationally viable diagnostic tool with stable critical values, even when the underlying original process exhibits extreme long memory.
    \item Application to Finance: Our second objective is to showcase the practical utility of this accelerated framework within financial mathematics. Specifically, our methodology serves as a robust diagnostic tool to address key open questions in empirical finance, beginning with the validation of the rough volatility paradigm. By precisely testing the roughness parameter $H$, this approach enhances the modeling of asset price fluctuations, which is crucial for modern option pricing and risk management. More importantly, we contextualize the verification of the weak-form Efficient Market Hypothesis (EMH). In this context, $H=1/2$ represents the Brownian benchmark associated with the absence of statistically detectable persistence or anti-persistence in price increments. Confidence intervals lying entirely above or below $1/2$ are interpreted as evidence of positive or negative inefficiency \citep{BianchiPianese2018,carbone2004time}, respectively, while intervals containing $1/2$ are classified as statistically neutral.
\end{itemize}

The remainder of this paper is structured as follows. Section \ref{sec:Pre_Prob} reviews the mathematical preliminaries of fractional Brownian motion and details the convergence problem inherent to the long-memory KS test. Section \ref{sec:GL_derivative} formalizes the discrete Gr\"{u}nwald-Letnikov derivative filter, detailing its action in both direct and spectral spaces and proving how it successfully splits memory from self-similarity. Section \ref{sec:KS_LRD} presents our main theoretical contribution, deriving the limiting distribution of the accelerated GL-KS test statistic. Section \ref{sec:computational_analysis} provides extensive numerical simulations validating the rapid convergence rates and finite-sample properties of the test. Section \ref{sec:Application} presents the empirical application to financial time series, demonstrating the framework's performance on real-world asset data. Finally, Section \ref{sec:Conclusion} concludes.

\section{Preliminaries and problem statement}\label{sec:Pre_Prob}
A \textit{fractional Brownian motion} (fBm) with Hurst exponent $H\in (0,1]$ is a centered Gaussian, non-stationary and self-similar process $B^H_t$ fully characterized by its covariance function
\begin{equation}\nonumber
    \mathbb E\left[B_t^HB_s^H\right] = \frac{1}{2}\left(t^{2H}+s^{2H}-|t-s|^{2H}\right).
\end{equation}

    By definition, fBm is an $H$-self-similar process ($H$-ss), which implies the following scaling invariance in the sense of finite-dimensional distributions
    \begin{equation}\nonumber
        \left\{B^H_{at}\right\}_{t\geq 0}\overset{f.d.d.}{=} a^H\left\{B^H_{t}\right\}_{t\geq 0}, \qquad \forall a>0.
    \end{equation}
A core feature of fBm is that  it is the unique mean-zero Gaussian process with stationary and self-similar increments. Its discrete-time increment process, known as \textit{fractional Gaussian noise} (fGn) $Z_{t,a} = B_{t+a}^H-B^H_{t}$, inherits this underlying scaling structure and serves as the standard statistical foundation for inference on fractional time series. Its covariance function is
\begin{equation}\nonumber
    \begin{array}{ccl}
        K_{H}(k)=K_{Z_{\cdot,a}^H}(t-s) = \frac{1}{2}\left[|k+a|^{2H}+|k-a|^{2H}-2|k|^{2H}\right],\quad t,s\geq 0.
    \end{array}
\end{equation}
From a spectral point of view, the covariance function can be represented as $K_H(k) = \int_{-\pi}^\pi e^{ik\lambda}f_H(\lambda)d\lambda$, where the spectral density is
\begin{equation}\label{eq:spectrum_fGn}
\begin{array}{ccl}
    f_H(\lambda) & = &  \frac{1}{2\pi}\sum\limits_{k\in\mathbb Z}K_H(k)e^{-ik\lambda}\\
     & = & \frac{\sin(\pi H)\Gamma(2H+1)}{\pi}\left(1-\cos(\lambda)\right)\sum\limits_{m\in\mathbb Z}\vert \lambda+2\pi m\vert^{-2H-1}, \quad \lambda\in[-\pi,\pi].
\end{array}
\end{equation}
Writing $C_H=\frac{\sin(\pi H)\Gamma(2H+1)}{\pi}$, near zero the spectral density boils down to
\begin{equation}\label{eq:f_H_near_zero}
    f_H(\lambda)\sim C_H\vert \lambda\vert^{1-2H},\quad \lambda\to 0.
\end{equation}

The fGn inherits the property of self-similarity from the fBm. In particular, a fGn is $H-ss$ in the sense of the finite-dimensional distributions
\begin{equation}\nonumber
    \left\{Z_{at,a}^H\right\}_{t\geq 0}:= \left\{B^H_{a(t+1)}-B^H_{at}\right\}_{t\geq 0}\overset{f.d.d.}{=}a^H\left\{B^H_{t+1}-B^H_t\right\}_{t\geq 0} =: a^H\left\{Z_{t,1}^H\right\}_{t\geq 0}.
\end{equation}

From an empirical point of view, testing the previous self-similarity property of a finite fGn process is expensive in terms of the number of elements. In fact, simulating a fGn process of length $N$, the scaled fGn $Z_{at,a}^H$ has only $N/a$ elements. To overcome this issue, we give the following definition.

\begin{definition}\label{dfn:crossed_fGn}
Let $\left\{Z_{t,a}^H\right\}_{t\geq 0}$ be a fGn with Hurst exponent $H\in(0,1]$ and integer scaling parameter $a\geq 1$. A $crossed$ fGn is defined as \[G_{a,r}^H(t):= Z^H_{r+at,a} = B_{r+a(t+1)}^H-B^H_{r+at}, \quad r=0,\ldots,a-1.\] 
The union $\mathcal Y_{a}^H(t)=\bigcup\limits_{r=0}^{a-1}G_{a,r}^H(t)$ represents the set of all fBm increments with lag $a$.
\end{definition}
Clearly, fixing a branch $r$, a crossed fGn $G_{a,r}^H(t)$ is an $H$-ss process:
\[
\left\{G_{a,r}^H(t)\right\}_{t\geq 0}\overset{f.d.d.}{=}\left\{B^H_{a(t+1)}-B_{at}^H\right\}_{t\geq 0} \overset{f.d.d.}{=}a^H\left\{B_{t+1}^H-B_t^H\right\}_{t\geq 0}=:a^H\left\{G^H_{1,0}(t)\right\}_{t\geq 0}.
\]

By the previous self-similarity property, the covariance function can be written as
\begin{equation}\nonumber
        K_{G_{a,r}^H}(t-s) = \mathbb E\left[ Z^H_{r+at,a} Z^H_{r+as,a}\right] =  \mathbb E\left[Z_{at,a}Z_{as,a}\right]
         = a^{2H} \mathbb E\left[Z_{t,1}^H Z_{s,1}^H\right] = a^{2H}K_{G^H_{1,0}}(t-s).
\end{equation}
Therefore, the relation between the spectral density of an fGn $Z_{t,a}^H$ and a r-th crossed fGn $G_{a,r}^H(t)$ is
\begin{equation}\nonumber
f_{G_{a,r}^H}(\lambda) = a^{2H}f_{H}(\lambda).
\end{equation}

\subsection{Problem Statement and the Long-Memory Issue}
Let \(B^{H_0}=\{B_t^{H_0}\}_{t\ge 0}\) be a fractional Brownian motion with
unknown Hurst parameter \(H_0\in(0,1)\) defined on a probability space $(\Omega, \mathcal{F}, \mathbb{P})$. Since \(B^{H_0}\) has stationary and \(H_0\)-self-similar
increments, its increment process satisfies, for every integer scale \(a\ge 1\),
\[
\big\{B_{a(t+1)}^{H_0}-B_{at}^{H_0}\big\}_{t\ge0}
\overset{f.d.d.}{=}
a^{H_0}\big\{B_{t+1}^{H_0}-B_t^{H_0}\big\}_{t\ge0}.
\]
To exploit this property without losing observations at large scales, we work
with the crossed fractional Gaussian noises
\[
G_{a,r}^{H_0}(t)
:=
B_{r+a(t+1)}^{H_0}-B_{r+at}^{H_0},
\qquad
r=0,\ldots,a-1 ,
\]
where \(t\) ranges over all indices such that the increment is contained in the
observed path. For each fixed branch \(r\),
\[
\{G_{a,r}^{H_0}(t)\}_{t\ge0}
\overset{f.d.d.}{=}
a^{H_0}\{G_{1,0}^{H_0}(t)\}_{t\ge0}.
\]
Thus the unit-scale reference sample is
\begin{equation}\label{eq:X_i_def}
X_i:=G_{1,0}^{H_0}(i)=B_{i+1}^{H_0}-B_i^{H_0},
\qquad i=0,\ldots,n,
\end{equation}
where $n=N-1$, while, for a candidate value $\theta\in(0,1)$, the rescaled
crossed sample at scale $a>1$ is
\begin{equation}\label{eq:Y_tr_def}
Y_{t,r}^{(\theta)}
:=
a^{-\theta}G_{a,r}^{H_0}(t)
=
a^{-\theta}\big(B_{r+a(t+1)}^{H_0}-B_{r+at}^{H_0}\big).
\end{equation}
Pooling all branches gives $\mathcal Y_a^{(\theta)}=\{Y_{t,r}^{(\theta)}:\ r=0,\ldots,a-1,\ t=0,\ldots,m_r-1\}$,
 $m=\sum_{r=0}^{a-1}m_r\simeq n$.

At the theoretical level, if \(\theta=H_0\), then every rescaled crossed branch has
the same finite-dimensional distributions as the unit-scale fGn:
\[
\{Y_{t,r}^{(H_0)}\}_{t\ge0}
\overset{f.d.d.}{=}
\{X_i\}_{i\ge0}.
\]
Equivalently, denoting by \(\Phi\) the distribution function of \(X_i\), one has
\[
Y_{t,r}^{(\theta)}
\overset{d}{=}
a^{H_0-\theta}X_i,
\qquad
\Phi_{a,\theta}(x)
:=
\mathbb P(Y_{t,r}^{(\theta)}\le x)
=
\Phi\big(a^{\theta-H_0}x\big).
\]

For a set of scales \(\mathcal A=[\underline{a},\overline{a}]\subset\mathbb R_+\), define the family of rescaled
distribution functions
\[
\Psi_\theta
:=
\left\{
\Phi_{a,\theta}(x):\ a\in \mathcal A,\ x\in\mathbb R
\right\}
\]
and its Kolmogorov-type diameter
\(
\delta(\theta)
:=
\sup\limits_{x\in\mathbb R}\sup\limits_{a,b\in \mathcal A}
\left|
\Phi\big(a^{\theta-H_0}x\big)
-
\Phi\big(b^{\theta-H_0}x\big)
\right|.
\)

By Propositions 1-3 proved in \citep{Bianchi2004}, the diameter is minimized at the true
self-similarity exponent: it is non-increasing for \(\theta\le H_0\), non-decreasing
for \(\theta\ge H_0\), and it increases away from zero as the scale interval is
enlarged. Hence, in the population case,
\begin{equation}\nonumber
H_0=\argmin_{\theta\in(0,1)}\delta(\theta).
\end{equation}

Empirically, fixing $\underline{a}=1$ and $\overline{a}=a$ for simplicity, the unknown distribution functions are replaced by the empirical
distribution functions
\begin{equation}\nonumber
F_n(x) :=\frac1n\sum_{i=0}^{n-1}\ind_{\{X_i\le x\}},
\qquad
G_m^{(\theta)}(x):=\frac1m\sum_{r=0}^{a-1}\sum_{t=0}^{m_r-1}
\ind_{\{Y_{t,r}^{(\theta)}\le x\}} .
\end{equation}
Therefore, the empirical diameter becomes the two-sample
Kolmogorov--Smirnov statistic
\begin{equation}\nonumber
\widehat\delta_a(\theta)
:=
D_{n,m}(\theta)
=
\sup_{x\in\mathbb R}
\left|
F_n(x)-G_m^{(\theta)}(x)
\right|,
\end{equation}
and the Hurst exponent is estimated by
\begin{equation}\label{eq:Est_Emp_H}
\widehat H
=
\argmin_{\theta\in(0,1)}
\widehat\delta_a(\theta).
\end{equation}
More generally, when several scales are used, one may minimize
\[
\widehat\delta_\mathcal A(\theta)
:=
\sup_{x\in\mathbb R}\sup_{a,b\in \mathcal A}
\left|
\widehat\Phi_a^{(\theta)}(x)-\widehat\Phi_b^{(\theta)}(x)
\right|,
\qquad
\widehat H
=
\argmin_{\theta\in(0,1)}\widehat\delta_\mathcal A(\theta).
\]

The main difficulty is that the two empirical samples are not independent.
They are extracted from the same sample path $B^{H_0}$ and therefore combine two sources
of dependence: the cross-dependence induced by overlapping or nearby branches, and the internal temporal dependence governed by $H$. This affects both the asymptotic distribution of $D_{n,m}$ and the speed at which its finite-sample law approaches the limit.

The asymptotic behavior of the statistic splits into two regimes. In the present paper, these two regimes are used as a benchmark: they identify the obstruction that the Gr\"{u}nwald--Letnikov derivative is designed to remove in Section \ref{sec:GL_derivative}.

Throughout this section we consider only the balanced asymptotic regime $n,m\to\infty,
\frac{n}{m}\to 1,$ or, equivalently, $\frac{n}{n+m}\to \frac12.$ This is the regime corresponding to the empirical construction used in the
applications, where the two samples have the same asymptotic size.

\begin{proposition}[Short-memory benchmark]\label{prop:SRD}
Assume that the auto-covariance and cross-covariance sequences associated with
the two standardized increment samples are absolutely summable. Assume also that $n,m\to\infty,
\frac{n}{m}\to 1.$ Then
\begin{equation}\label{eq:SRD_distribution}
D^{\star}_{n,m}
:=
\sqrt{\frac{nm}{n+m}}D_{n,m}
\Rightarrow
\sup_{x\in\mathbb R}|U(x)|,
\end{equation}
where \(U\) is a centered Gaussian process with covariance kernel
\begin{equation}\label{eq:cov_SRD}
    \operatorname{Cov}(U(x),U(y))=
\frac12\Gamma_X(x,y) +\frac12\Gamma_Y(x,y)
-\frac12\left[\Gamma_{XY}(x,y)+\Gamma_{YX}(x,y)
\right].
\end{equation}
Here \(\Gamma_X\) and \(\Gamma_Y\) are the long-run covariance kernels of the
two marginal empirical processes, while \(\Gamma_{XY}\) and \(\Gamma_{YX}\) are
the corresponding cross-covariance kernels. In particular, $D_{n,m}=O_P\left((n\wedge m)^{-1/2}\right)$.
\end{proposition}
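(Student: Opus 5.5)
The plan is to write the two-sample statistic as the supremum of a linear combination of two marginal empirical processes, establish a joint functional CLT for the pair under the absolute summability assumption, and conclude with the continuous mapping theorem.

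\emph{Reduction to marginal processes.} At $\theta=H_0$ the two samples share the marginal law $\Phi$, so the centering terms cancel and
\[
F_n(x)-G_m^{(\theta)}(x)=\frac{1}{\sqrt n}\,\mathbb G_n^X(x)-\frac{1}{\sqrt m}\,\mathbb G_m^Y(x),
\]
where $\mathbb G_n^X=\sqrt n(F_n-\Phi)$ and $\mathbb G_m^Y=\sqrt m(G_m^{(\theta)}-\Phi)$. Multiplying by $\sqrt{nm/(n+m)}$ gives
\[
D^\star_{n,m}=\sup_x\left|\sqrt{\tfrac{m}{n+m}}\,\mathbb G_n^X(x)-\sqrt{\tfrac{n}{n+m}}\,\mathbb G_m^Y(x)\right|.
\]
Since $n/(n+m)\to 1/2$, both weights converge to $2^{-1/2}$. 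It therefore suffices to prove that $(\mathbb G_n^X,\mathbb G_m^Y)$ converges jointly in $\ell^\infty(\mathbb R)^2$ to a centered Gaussian pair $(W_X,W_Y)$ with long-run covariance kernels $\Gamma_X,\Gamma_Y$ and cross kernels $\Gamma_{XY},\Gamma_{YX}$. Setting $U=2^{-1/2}(W_X-W_Y)$ and computing its covariance yields exactly \eqref{eq:cov_SRD}. The continuous mapping theorem, together with Slutsky's lemma for the vanishing weight errors, then gives \eqref{eq:SRD_distribution}. The rate $D_{n,m}=O_P((n\wedge m)^{-1/2})$ follows because $\sup|U|$ is tight and $\sqrt{(n+m)/(nm)}\asymp (n\wedge m)^{-1/2}$.

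\emph{Finite-dimensional convergence.} For fixed points $x_1,\dots,x_k$ and fixed coefficients, I will treat the combination $\sum_j c_j\mathbb G_n^X(x_j)+d_j\mathbb G_m^Y(x_j)$ as a normalized sum of bounded functionals of a single jointly stationary Gaussian sequence. This sequence collects the unit increments and the $a$ crossed branches; since $a$ is fixed and every branch is a block of increments of the same path, it is a finite-dimensional stationary Gaussian vector sequence. For Gaussian subordination, indicator functions expand in Hermite polynomials with Hermite rank at least $1$. Hence the covariance of the indicators at lag $k$ is bounded by a constant times $|\rho(k)|$, where $\rho$ denotes the underlying auto- and cross-correlations. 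Absolute summability of $\rho$ is then exactly the Breuer--Major condition, so the normalized sums are asymptotically normal with the long-run variances given by the summed kernels. Cramér--Wold completes this step.

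\emph{Tightness, the main obstacle.} Tightness of indicator-indexed empirical processes under dependence is where I expect the real work. I would follow the Dehling--Taqqu / Csörgő--Mielniczuk line: bound the fourth moments of increments $\mathbb G_n^X(y)-\mathbb G_n^X(x)$ by $C\,[(\Phi(y)-\Phi(x))^{1+\epsilon}+n^{-\delta}]$ using the Hermite expansion and the summable correlations, and then apply a chaining lemma of the type in Billingsley's Theorem 22.1. Alternatively, I could invoke the known Donsker theorem for short-memory Gaussian subordinated sequences, in the form of Arcones' 1994 empirical-process CLT for stationary Gaussian vector sequences with summable covariances. Arcones covers the vector (joint) case directly, so it gives joint tightness of the pair at once, and I would cite it rather than redo the chaining. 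Any remaining subtlety concerns the fact that $m=\sum_r m_r$ pools $a$ branches of length about $n/a$ each. I would handle this by writing $\mathbb G_m^Y$ as a normalized sum over the vector index $t$ of $a$-tuples, which again falls within Arcones' vector framework.
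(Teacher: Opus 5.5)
Your proposal is correct and follows the same route as the paper's proof (Appendix~\ref{app_A}). You split $\sqrt{nm/(n+m)}\,(F_n-G_m)$ into the two marginal empirical processes with weights tending to $2^{-1/2}$, invoke a joint functional CLT in $\ell^\infty(\mathbb R)^2$, read off the covariance of $U=2^{-1/2}(W_X-W_Y)$, and finish with the continuous mapping theorem. The paper only asserts the joint FCLT from absolute summability, so your additional steps make that step rigorous rather than change the route: Breuer--Major with Cram\'er--Wold for the finite-dimensional laws, Arcones (1994) for tightness, and an explicit derivation of the $O_P((n\wedge m)^{-1/2})$ rate. The $a$-tuple blocking is unnecessary: indexed by $s=r+at$, the pooled crossed sample is already the stationary overlapping-increment sequence $a^{-\theta}(B_{s+a}-B_s)$, and it is jointly stationary with $X_s$.
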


\begin{proof}
See Appendix \ref{app_A}.
\end{proof}
In the short-memory regime, the KS statistic therefore retains
the usual two-sample order. The limiting distribution is not the classical
KS law, because the two samples are extracted from the same
trajectory and are therefore cross-dependent, but the normalization remains the
standard one.
\begin{proposition}[Long-memory benchmark]\label{prop:LRD}
Assume \(H>1/2\). Assume also that $n,m\to\infty,\frac{n}{m}\to 1.$ Then
\begin{equation}\label{eq:LRD_distribution_unfiltered}
D^{\diamond}_{n,m}
:=
\frac{n^{1-H}m^{1-H}}
{n^{1-H}+m^{1-H}}
D_{n,m}
\Rightarrow
\frac{1}{\sqrt{2\pi}}|Z_0|,
\end{equation}
where $Z_0=\frac12\left(Z_X-Z_Y\right).$

Here \((Z_X,Z_Y)\) is the centered Gaussian limit of the normalized linear
partial sums
\begin{equation}\nonumber
n^{-H}\sum_{i=1}^n X_i,
\qquad
m^{-H}\sum_{j=1}^m Y_j.
\end{equation}
Consequently, $Z_0$ is centered Gaussian with variance $\sigma_0^2=\frac14
\left(\sigma_X^2+\sigma_Y^2-2\sigma_{XY}\right),$ where
\begin{equation}\nonumber
    \sigma_X^2
=
\lim_{n\to\infty}
n^{-2H}
\operatorname{Var}
\left(
\sum_{i=1}^n X_i
\right),\qquad \sigma_Y^2
=
\lim_{m\to\infty}
m^{-2H}
\operatorname{Var}
\left(
\sum_{j=1}^m Y_j
\right)\quad 
\end{equation}
\begin{equation}\nonumber
\text{and}\quad \sigma_{XY}
=
\lim_{n,m\to\infty}
(nm)^{-H}
\operatorname{Cov}
\left(
\sum_{i=1}^n X_i,
\sum_{j=1}^m Y_j
\right).
\end{equation}
\end{proposition}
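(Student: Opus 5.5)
The plan is to reduce the empirical processes to their linear parts by the Dehling--Taqqu reduction principle for Gaussian subordinated sequences with long memory. Since the rescaled crossed increments are Gaussian and the test is carried out under the null $\theta=H_0$, every $X_i$ and every $Y_{t,r}^{(H_0)}$ has the same marginal law. Standardizing by the unit variance, this law is $\Phi$, with density $\phi$.

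\textbf{Step 1: Hermite expansion.} For each $x$, expand
\[
\ind_{\{X_i\le x\}}-\Phi(x)=\sum_{q\ge1}\frac{J_q(x)}{q!}H_q(X_i).
\]
Here $J_1(x)=\mathbb E[\ind_{\{X\le x\}}X]=-\phi(x)$. The covariance of fGn decays like $H(2H-1)k^{2H-2}$, which is not summable when $H>1/2$, so the Hermite rank one term dominates. Dehling--Taqqu give
\[
\sup_x\Big|n^{1-H}\big(F_n(x)-\Phi(x)\big)+\phi(x)\,n^{-H}\sum_i X_i\Big|\to0
\]
in probability. All remaining terms are $o_P(n^{H-1})$ uniformly in $x$, because $\sum_{|k|\le n}|K_H(k)|^2=O(n^{\max(0,4H-3)}\log n)=o(n^{2H})$.

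\textbf{Step 2: the pooled crossed sample.} I would prove the same reduction for $G_m^{(H_0)}$, with the linear term $m^{-H}\sum_j Y_j$. Each branch is a rescaled fGn. The pooled sample $\mathcal Y_a$ is a finite union of $a$ such stationary sequences. Equivalently, its elements are a fixed linear filter of the unit-lag increments, namely sums of $a$ consecutive $X$'s divided by $a^{H_0}$, so their joint covariances still decay like $|k|^{2H-2}$. The uniform reduction therefore goes through branch by branch, using a chaining over $x$ with monotonicity of the indicators. I expect this to be \textbf{the main obstacle}: the original uniform reduction theorem is stated for one stationary sequence. I would handle it by applying the Dehling--Taqqu variance bound to the whole vector $(X,Y)$, which is a stationary Gaussian vector sequence with common marginals. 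The fourth-moment/chaining argument only needs a bound on $\sum_{i,j}|\mathrm{Cov}|^2$ over both samples jointly, and that bound holds for the vector sequence.

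\textbf{Step 3: joint Gaussian limit of the linear parts.} The pair $\big(n^{-H}\sum X_i,\;m^{-H}\sum Y_j\big)$ is jointly Gaussian for each $n,m$. Its covariance matrix converges by the definitions of $\sigma_X^2,\sigma_Y^2,\sigma_{XY}$. These limits exist because all the sums are linear functionals of the same fBm path: $\sum X_i=B_{n+1}-B_1$, and $\sum Y_j$ is a sum of at most $a$ telescoped increments, which is of order $m^{H}$ times a fixed Gaussian. Hence $(Z_X,Z_Y)$ exists as a Gaussian limit.

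\textbf{Step 4: assembling the statistic.} Write $F_n-G_m=(F_n-\Phi)-(G_m-\Phi)$. With $c_{n,m}=n^{1-H}m^{1-H}/(n^{1-H}+m^{1-H})$, the balanced regime $n/m\to1$ gives $c_{n,m}n^{H-1}\to\tfrac12$ and $c_{n,m}m^{H-1}\to\tfrac12$. Substituting the reductions from Steps 1 and 2,
\[
c_{n,m}\big(F_n(x)-G_m(x)\big)=-\phi(x)\tfrac12(Z_{X,n}-Z_{Y,m})+o_P(1)
\]
uniformly in $x$. Taking the supremum and using the continuous mapping theorem yields $\sup_x\phi(x)\,|Z_0|$. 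Since $\sup_x\phi(x)=\phi(0)=1/\sqrt{2\pi}$, this is the stated limit $\frac{1}{\sqrt{2\pi}}|Z_0|$. The variance formula for $Z_0$ follows by expanding $\mathrm{Var}\big(\tfrac12(Z_X-Z_Y)\big)$.
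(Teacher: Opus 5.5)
Your proposal is correct and follows essentially the same route as the paper's proof: isolate the Hermite-rank-one term $-\phi(x)Z$, show the remainder is negligible at scale $n^{1-H}$ because its covariances are $O(\rho^2)$, take the joint Gaussian limit of the normalized partial sums, and assemble using $c_{n,m}n^{H-1},\,c_{n,m}m^{H-1}\to\tfrac12$ and $\sup_x\phi(x)=\phi(0)$. You cite the Dehling--Taqqu uniform reduction where the paper only sketches the discretization step; two small remarks: in Step~1 the comparison you need is $\sum_{|k|\le n}K_H(k)^2=o(n^{2H-1})$, not $o(n^{2H})$, and your displayed bound does satisfy it. Also, your Step~2 obstacle disappears, because the pooled crossed sample is, as a multiset, exactly the single stationary Gaussian sequence $a^{-H_0}(B_{s+a}-B_s)$, $s=0,1,\dots$, with covariance $\sim C|k|^{2H_0-2}$, to which Dehling--Taqqu applies directly (no vector version is needed, since each uniform reduction is separately $o_P(1)$ and your Step~3 handles the joint Gaussianity of the linear parts).
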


\begin{proof}
See Appendix \ref{app_B}.
\end{proof}

\begin{remark}\label{rmk:slowLRD}
The long-memory case is substantially more delicate from a statistical point of
view. Although the limiting statistic is simpler, since the functional limit
collapses to the deterministic profile \(\phi(x)\) multiplied by a single
Gaussian random variable, the convergence to this limit may be slow.

The reason is that the centered indicator admits the Gaussian projection
decomposition
\begin{equation}\nonumber
\ind_{\{Z\le x\}}-\Phi(x)
=
-\phi(x)Z+r_x(Z),
\qquad
\mathbb E[r_x(Z)Z]=0,
\end{equation}
where $Z\sim \mathcal N(0,1)$. The first term produces the limiting profile
$\phi(x)$, while the residual term $r_x(Z)$ is asymptotically negligible
only at a rate depending on the strength of the long-range dependence.

More precisely, after removing the leading linear projection, the residual
empirical process $R_{n,m}$ satisfies, for $n\asymp m$,
\begin{equation}\nonumber
\left\|
\sup_{x\in\mathbb R}|R_{n,m}(x)|
\right\|_{L^2}
=
\begin{cases}
\mathcal{O}\left((n\wedge m)^{1/2-H}\right),
& 1/2<H<3/4, \\[4pt]
\mathcal{O}\left((n\wedge m)^{-1/4}\sqrt{\log(n\wedge m)}\right),
& H=3/4, \\[4pt]
\mathcal{O}\left((n\wedge m)^{H-1}\right),
& 3/4<H<1.
\end{cases}
\end{equation}
Thus, the closer $H$ is to one, the slower the residual component vanishes.
This explains why the KS statistic becomes
progressively less stable in the long-memory regime: finite-sample deviations
from the asymptotic law persist for much longer samples.

The Gr\"{u}nwald--Letnikov transformation introduced below is designed precisely
to remove this long-memory obstruction before applying the
Kolmogorov--Smirnov comparison. By weakening the low-frequency persistence of
the increment process, the filtered statistic is brought back to a
short-memory empirical-process regime, where the convergence to the limiting
law is substantially more stable.
\end{remark}
\begin{remark} At first sight, since fBm is Gaussian, the study of its first two moments would be sufficient to characterize its finite-dimensional distributions. In this sense, for a purely Gaussian fBm, a covariance-based approach would already contain the full probabilistic information.

\begin{table}[!ht]
\centering

\caption{\label{tab:acf-decay} Long-lag dependence in representative fractional processes.}
\small
\begin{tabular}{>{\raggedright}m{3.2cm} >{\centering}m{5cm} m{3.5cm}}
\toprule
\textbf{Process} & \textbf{Asymptotic behavior} & \textbf{Key parameter} \\
\midrule

fGn &
$\mathbb{E}[Z_{t}Z_{t+\tau}] \sim |\tau|^{2H-2}$ & 
 \( H \in (0,1) \)\\
 \vspace{.15cm}
fPP &$\mathbb{E}[N(t)N(t+\tau)] \sim |\tau|^{-\alpha}$ & 
 \( \alpha \in (0,1) \) \\
 \vspace{.15cm}
 fLm & 
$\mathbb{E}[L_H(t)L_H(t+\tau)] \sim |\tau|^{2H-2}$ & 
 \( H \in (0,1) \) \\
  \vspace{.15cm}
 ARFIMA & 
$\rho(\tau) \sim |\tau|^{2d-1}$ & 
 \( d \in (0, 1/2) \) \\
 \vspace{.15cm}
fOU, Langevin type &$\gamma(\tau) \sim C_{H,\lambda} |\tau|^{2H-2}$
&$H\in(0,1)$ \\
fOU, Lamperti type &$\gamma(\tau) \sim C e^{-\lambda \tau}$&$H\in(0,1)$ \\
\bottomrule
\end{tabular}
\vspace{0.15cm} \begin{minipage}{0.94\textwidth} \footnotesize \emph{Notes.} fGn denotes fractional Gaussian noise; fPP fractional Poisson process; fLm fractional L\'evy motion; ARFIMA autoregressive fractionally integrated moving average; fOU fractional Ornstein--Uhlenbeck process. The table is intended only as a qualitative benchmark: the listed processes are not assumed to be distributionally equivalent to fGn. In infinite-variance L\'evy-type settings, covariance-based expressions must be replaced by appropriate dependence notions such as codifference or covariation. \end{minipage}
\end{table}

The reason for adopting a distributional KS criterion is different. The aim is not only to recover the covariance structure of Gaussian fBm, but to build a non-parametric self-similarity diagnostic that can be transported, at least in principle, to broader classes of fractional processes. In such settings, the first two moments may be insufficient, unstable, or even undefined. A distributional comparison of rescaled increments is therefore more flexible: it tests the scaling relation directly at the level of empirical distributions rather than only through moment scaling.
\end{remark}
The fGn is used here as the canonical benchmark because it is the stationary increment process of fBm, it has an explicit covariance and spectral structure, and it exhibits the standard power-law memory pattern governed by the Hurst exponent. Moreover, many fractional models used in applications display analogous hyperbolic memory, or can be compared with fGn at the level of their long-lag dependence. Table~\ref{tab:acf-decay} summarizes the relevant asymptotic regimes. The purpose of the table is not to claim that all these models are identical to fGn, but to justify why fGn provides the natural reference model for deriving the asymptotic theory developed below.

\section{Gr\"{u}nwald-Letnikov-Kolmogorov--Smirnov test}\label{sec:GL_derivative}
To overcome the slow empirical convergence rates caused by LRD inside the classical KS test framework, we introduce in Section \ref{sec:GL_filter} a mathematical architecture designed to accelerate convergence by decoupling the long-memory properties of the process from its underlying self-similar scaling behavior. This is achieved by filtering the fGn through a discrete fractional transformation based on the Gr\"{u}nwald-Letnikov (GL) derivative. In Section \ref{sec:KS_LRD} we will introduce the KS asymptotic distribution for LRD after a GL filtration.

\subsection{Gr\"{u}nwald-Letnikov filter}\label{sec:GL_filter}
\begin{definition}[Discrete Gr\"{u}nwald-Letnikov derivative \citep{OuannasEtal2023}]
The discrete Gr\"{u}nwald-Letnikov derivative of order $\alpha \in \mathbb{R}^+$ with step size $h > 0$ for a real-valued function $f(x) \in \mathbb{R}$ is defined as:

\begin{equation}\label{eq:GL_definition}
    \Delta_{h}^{GL,\alpha}f(x) = \frac{1}{h^{\alpha}}\sum_{k=0}^{\infty}(-1)^{k}\binom{\alpha}{k}f(x-kh) = \frac{1}{h^{\alpha}}\sum_{k=0}^{\infty}\omega_{k}(\alpha)f(x-kh),
\end{equation}
    where the GL binomial coefficients $\omega_{k}(\alpha)=\frac{\Gamma(k-\alpha)}{\Gamma(-\alpha)\Gamma(k+1)}$.
\end{definition}

In terms of the lag operator $L_hf(x):=f(x-h)$, and the repeated lag operator $L_h^kf(x):=f(x-hk)$, the discrete GL derivative operator can be compactly expressed as a fractional power of the differencing operator:

\begin{equation}\label{eq:GL_lag_operator}
    \Delta_{h}^{GL,\alpha}f(x) = \frac{1}{h^{\alpha}}\sum_{k=0}^{\infty}\omega_{k}(\alpha)L_{h}^{k}f(x) = h^{-\alpha}(1-L_{h})^{\alpha}f(x).
\end{equation}
\subsubsection{Filter application at the direct space}
We now explore how the GL filter acts upon fractional Brownian motion and its increments in the time domain, focusing specifically on how it modifies the self-similarity index depending on the choice of the scaling operator definition.

\begin{proposition}\label{prop:fBm_GL_scaling}
Let $B_{t}^{H}$ be a fBm with Hurst exponent $H\in(0,1)$, and let $a\in \mathbb R^+$ be a scale parameter. Then, the discrete GL derivative applied to the fBm process is $H-ss$:
\begin{equation}\label{eq:prop33_1}
    \Delta_{h}^{GL,\alpha}B_{at}^{H} \stackrel{f.d.d.}{=} a^{H} \cdot \Delta_{h}^{GL,\alpha}B_{t}^{H}.
\end{equation}
\end{proposition}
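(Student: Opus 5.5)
The plan is to read the left-hand side of \eqref{eq:prop33_1} as the GL derivative of the rescaled path $s\mapsto B^H_{as}$:
\[
\Delta_h^{GL,\alpha}B^H_{at}=h^{-\alpha}\sum_{k\ge 0}\omega_k(\alpha)\,B^H_{a(t-kh)} .
\]
With this reading, \eqref{eq:prop33_1} becomes a statement about a linear functional of the path, applied to two processes with the same finite-dimensional distributions up to the factor $a^H$. The whole argument therefore reduces to the $H$-self-similarity of fBm recalled in Section~\ref{sec:Pre_Prob}, combined with two facts: linear maps preserve equality of finite-dimensional distributions, and such equality passes to $L^2$ limits.

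\textbf{Finite-sum case.} I would first treat finite sums. Fix times $t_1,\dots,t_p$ and a truncation level $K$. Set
\[
S^{K}_a(t_j):=h^{-\alpha}\sum_{k=0}^{K}\omega_k(\alpha)B^H_{a(t_j-kh)} .
\]
The vector $(S^K_a(t_1),\dots,S^K_a(t_p))$ is a fixed deterministic linear map $M$ applied to the finite vector $(B^H_{as})_{s\in\mathcal S}$, where $\mathcal S=\{t_j-kh\}$. By $H$-self-similarity this vector has the same law as $a^H(B^H_s)_{s\in\mathcal S}$. Applying $M$ gives $(S^K_a(t_j))_j\overset{d}{=}a^H(S^K_1(t_j))_j$.

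Since everything is centered Gaussian, the same conclusion can be checked directly on covariances. Using the covariance formula of fBm, which is homogeneous of degree $2H$,
\[
\mathbb E\big[B^H_{au}B^H_{av}\big]=a^{2H}\,\mathbb E\big[B^H_uB^H_v\big],
\]
so the covariance matrices of the two vectors differ exactly by the factor $a^{2H}$. If the GL derivative is understood with the usual convention that $f(x-kh)=0$ for $x-kh<0$, the sum is finite and the proof ends here.

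\textbf{Passage to the infinite series.} For the genuinely infinite series I would work with a two-sided fBm. The key observation is that $\sum_k\omega_k(\alpha)=(1-1)^\alpha=0$ for $\alpha>0$. This allows the rewriting
\[
h^{-\alpha}\sum_{k\ge0}\omega_k(\alpha)\big(B^H_{t-kh}-B^H_t\big).
\]
One then shows that $S^K_a(t)$ is Cauchy in $L^2(\mathbb P)$. The relevant estimates are $\omega_k(\alpha)\sim k^{-\alpha-1}/\Gamma(-\alpha)$ and $\|B^H_{t-kh}-B^H_t\|_{L^2}=(kh)^H$. Together these give absolute $L^2$ convergence whenever $\alpha>H$.

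\textbf{Main obstacle.} The case $\alpha\le H$ is where I expect the real difficulty, because convergence there is only conditional. For it I would pass to the harmonizable representation of fBm. The partial sums then correspond to the truncated transfer functions $\sum_{k\le K}\omega_k e^{-ik\lambda h}$, which converge to $(1-e^{-i\lambda h})^\alpha$. The $L^2$ convergence would follow by dominated convergence against the spectral measure $|\lambda|^{-2H-1}d\lambda$, using the vanishing $|1-e^{-i\lambda h}|^{2\alpha}\lesssim|\lambda|^{2\alpha}$ at the origin. This step is delicate, and the hypotheses on $\alpha$ may need to be stated explicitly for it to go through.

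\textbf{Conclusion.} Once $L^2$ convergence holds for both $a$ and $1$, the equality in law of the truncated vectors passes to the limit, because $L^2$ convergence implies convergence in distribution. Equivalently, the limiting Gaussian covariances keep the factor $a^{2H}$. This yields \eqref{eq:prop33_1}.
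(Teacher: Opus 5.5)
Your core argument matches the paper's proof: you view the left-hand side as a fixed linear functional of the path $s\mapsto B^H_{as}$ and transport the $H$-self-similarity of fBm through it, while the paper simply pulls $a^H$ through the infinite sum formally, with no discussion of convergence. Two of your steps are correct and make that formal step rigorous: the finite-sum step under the zero-extension convention, and the $L^2$ passage for $\alpha>H$ based on $\sum_k\omega_k(\alpha)=0$ and $|\omega_k(\alpha)|(kh)^H\asymp k^{H-\alpha-1}$.

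The step that fails is your plan for $0<\alpha\le H$, and no refinement will rescue it. The convergence there is not conditional; the series diverges. Write $P_K(\lambda)=\sum_{k\le K}\omega_k(\alpha)e^{-ikh\lambda}$. The harmonizable representation gives $\operatorname{Var}(S^K_1(t))=c\,h^{-2\alpha}\int_{\mathbb R}|e^{it\lambda}P_K(\lambda)-P_K(0)|^2|\lambda|^{-2H-1}\,d\lambda$. Since $\sum_k|\omega_k(\alpha)|<\infty$ for $\alpha>0$, one has $P_K(\lambda)\to(1-e^{-ih\lambda})^\alpha$ pointwise and $P_K(0)\to0$. Fatou's lemma then gives
\[
\liminf_{K\to\infty}\operatorname{Var}\big(S^K_1(t)\big)\ \ge\ c\,h^{-2\alpha}\int_{\mathbb R}\big|1-e^{-ih\lambda}\big|^{2\alpha}|\lambda|^{-2H-1}\,d\lambda=\infty ,
\]
because the integrand is of exact order $|\lambda|^{2\alpha-2H-1}$ at the origin. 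The bound $|1-e^{-ih\lambda}|^{2\alpha}\lesssim|\lambda|^{2\alpha}$ you invoke is sharp, and it dominates integrably only when $\alpha>H$.

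Centered Gaussians with exploding variance do not converge even in law. So for $0<\alpha\le H$, neither side of \eqref{eq:prop33_1} is defined in the two-sided infinite-memory reading. Heuristically, $(1-L_h)^\alpha B^H$ is a fractional integral of order $1-\alpha$ of fGn, whose memory parameter $H+\tfrac12-\alpha$ is then at least $\tfrac12$.

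The correct resolution is a restriction, not a sharper limit argument. State the result either with the finite-sum convention, where your first step already finishes the proof, or under $\alpha>H$. This does not affect what the paper uses later. Proposition~\ref{prop:crossed_fGn_GL} filters the fGn $G^H_{a,r}$, whose values have constant $L^2$ norm $a^H$, so $\sum_k|\omega_k(\alpha)|<\infty$ gives absolute $L^2$ convergence for every $\alpha>0$.
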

\begin{proof}
To establish \eqref{eq:prop33_1}, we expand the definition of the derivative operator $\Delta_{h}^{GL,\alpha}$ and utilize the intrinsic $H$-self-similarity of the underlying fBm process ($B_{at}^H \stackrel{f.d.d.}{=} a^H B_t^H$):

\begin{equation}\nonumber
    \Delta_{h}^{GL,\alpha}B_{at}^{H} = \frac{1}{h^{\alpha}}\sum_{k=0}^{\infty}\omega_{k}(\alpha)B_{a(t-kh)}^{H} \stackrel{f.d.d.}{=} a^{H} \left( \frac{1}{h^{\alpha}}\sum_{k=0}^{\infty}\omega_{k}(\alpha)B_{t-kh}^{H} \right) = a^{H} \cdot \Delta_{h}^{GL,\alpha}B_{t}^{H}.
\end{equation}
\end{proof}
The core operational breakthrough of this paper relies on applying this filtering technique to the crossed fGn framework defined in Section \ref{sec:Pre_Prob}. The next proposition guarantees that filtering the multi-scale branches preserves the exact distributional scaling law required for our test statistic in the finite-dimensional sense.

\begin{proposition}\label{prop:crossed_fGn_GL}
The discrete Gr\"{u}nwald-Letnikov derivative $\Delta_{h}^{GL,\alpha}$ of a crossed fGn process $G_{a,r}^H(t)$ remains strictly $H$-self-similar for any scale $a \ge 1$ and branch $r = 0, \dots, a-1$:
\begin{equation}\nonumber
\Delta_{h}^{GL,\alpha}G_{a,r}^{H}(t) \stackrel{f.d.d.}{=} a^{H} \cdot \Delta_{h}^{GL,\alpha}G_{1,0}^{H}(t).
\end{equation}
\end{proposition}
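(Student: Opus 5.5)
The plan is to reduce the claim to the self-similarity of the increment process of $B^H$ already recorded in Section~\ref{sec:Pre_Prob}, then transport that equality through the linear GL filter. The filter is applied in the time index $t$ of the crossed fGn. Taking $h$ to be an integer step so that the lags stay on the index grid, we have
\begin{equation}\nonumber
\Delta_h^{GL,\alpha}G_{a,r}^H(t)=h^{-\alpha}\sum_{k\ge0}\omega_k(\alpha)\,G_{a,r}^H(t-kh).
\end{equation}
This is a fixed, deterministic linear functional of the whole path $\{G_{a,r}^H(s)\}_s$. The same coefficients $\omega_k(\alpha)$ and step $h$ appear on both sides of the claimed identity, and they do not depend on $a$ or $r$.

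First, I would establish the process-level identity
\begin{equation}\nonumber
\{G_{a,r}^H(s)\}_{s\in\mathbb Z}\overset{f.d.d.}{=}a^H\{G_{1,0}^H(s)\}_{s\in\mathbb Z}.
\end{equation}
Stationarity of fBm increments removes the shift $r$, since $B^H_{r+\cdot}-B^H_r$ is again an fBm. The scaling $B^H_{a\cdot}\overset{f.d.d.}{=}a^HB^H_\cdot$ then gives the factor $a^H$. This step is exactly the display following Definition~\ref{dfn:crossed_fGn}, extended to negative indices; the extension can be made either by using the two-sided fBm or by noting that the covariances match, $K_{G_{a,r}^H}=a^{2H}K_{G_{1,0}^H}$. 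Because both processes are centered Gaussian, equality of covariances already gives equality of all finite-dimensional distributions.

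Second, I would pass the identity through the filter. For finitely many times $t_1,\dots,t_p$, truncate the series at $K$ terms. Each truncated filtered vector is a fixed linear map applied to finitely many coordinates of the process, so it inherits the f.d.d. equality, with the factor $a^H$ pulled out by linearity. The coefficients satisfy $\omega_k(\alpha)\sim k^{-\alpha-1}/\Gamma(-\alpha)$ and the increments have bounded variance, so the partial sums converge in $L^2$ whenever $\alpha>0$. One can also argue through the covariance directly, since $\sum_{k,l}\omega_k\omega_l K(\cdot)$ converges. $L^2$ limits of Gaussian vectors with equal laws have equal laws. This yields
\begin{equation}\nonumber
\Delta_h^{GL,\alpha}G_{a,r}^H(t)\overset{f.d.d.}{=}a^H\,\Delta_h^{GL,\alpha}G_{1,0}^H(t).
\end{equation}

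I expect the main obstacle to be this limiting step, not the algebra. The argument must guarantee convergence of the infinite GL sum, and it must keep the step $h$ in the time index of the crossed process rather than rescaling it by $a$. Otherwise one only recovers Proposition~\ref{prop:fBm_GL_scaling} with a changed step. To control convergence, I would first check absolute summability, $\sum_k|\omega_k(\alpha)|<\infty$ for $\alpha>0$. Combined with the uniform bound $\operatorname{Var}(G^H_{a,r}(s))=a^{2H}$, Minkowski's inequality then gives the $L^2$ convergence needed.
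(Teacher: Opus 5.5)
Your proposal is correct and follows the same route as the paper: remove the shift $r$ by stationarity of increments, extract $a^H$ by self-similarity, and pass the identity through the linear GL filter. Your truncation-plus-$L^2$-limit step, based on $\sum_k|\omega_k(\alpha)|<\infty$ and the uniform variance bound, justifies the infinite sum that the paper handles only term by term; the restriction to integer $h$ is unnecessary, since the process-level identity holds for all real indices when $B^H$ is the two-sided fBm.
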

\begin{proof}
By mapping the definition of the crossed fGn $G_{a,r}^H(t) = B_{r+a(t+1)}^H - B_{r+at}^H$ into the GL operator, we obtain:
\begin{equation}\nonumber
    \Delta_{h}^{GL,\alpha}G_{a,r}^{H}(t) = \frac{1}{h^{\alpha}}\sum_{k=0}^{\infty}\omega_{k}(\alpha)\left( B_{r+a(t-kh+1)}^{H} - B_{r+a(t-kh)}^{H} \right).
\end{equation}
Since fBm features strictly stationary increments, the deterministic time shift $r$ can be dropped without modifying the joint finite-dimensional distributions of the process:
\begin{equation}\nonumber
    \Delta_{h}^{GL,\alpha}G_{a,r}^{H}(t) \stackrel{f.d.d.}{=} \frac{1}{h^{\alpha}}\sum_{k=0}^{\infty}\omega_{k}(\alpha)\left( B_{a(t-kh+1)}^{H} - B_{a(t-kh)}^{H} \right).
\end{equation}
Finally, factoring out the scale parameter $a$ using the $H$-self-similarity of fBm yields the targeted relation:
\begin{equation}\nonumber
    \begin{array}{ccl}
        \Delta_{h}^{GL,\alpha}G_{a,r}^{H}(t) &\stackrel{f.d.d.}{=}& a^{H} \left[ \frac{1}{h^{\alpha}}\sum_{k=0}^{\infty}\omega_{k}(\alpha)\left( B_{t-kh+1}^{H} - B_{t-kh}^{H} \right) \right] \\
        &=& a^{H} \cdot \Delta_{h}^{GL,\alpha}G_{1,0}^{H}(t).
    \end{array}
\end{equation}
This completes the proof.
\end{proof}

\begin{corollary}[Identification invariance under GL filtering]\label{cor:invariance}
Let $B^{H_0}$ be an fBm with $H_0\in(0,1)$, and let
$T_\alpha=\sigma_{\alpha,h}^{-1}\Delta^{GL,\alpha}_h$ be the standardized GL filter, with $\alpha\ge 0$. For a fixed scale $a>1$, define the population filtered KS criterion
\[D_\alpha(\theta)
=
\sup_{x\in\mathbb R}
\left|
\Phi_\alpha(x)
-
\Phi_\alpha\left(a^{\theta-H_0}x\right)
\right|,
\]
where $\Phi_\alpha$ is the distribution function of
$T_\alpha G^{H_0}_{1,0}$. Then
\[
D_\alpha(\theta)=0
\quad\Longleftrightarrow\quad
\theta=H_0.
\]
Therefore, the GL transformation preserves the population identification of the Hurst exponent. It may change the empirical criterion and its asymptotic covariance structure, but it does not change the self-similarity
parameter identified by the minimum-distance problem.
\end{corollary}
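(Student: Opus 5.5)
The proof reduces to a statement about a single centered Gaussian distribution function. First, by Proposition \ref{prop:crossed_fGn_GL}, the filtered crossed noise $T_\alpha G^{H_0}_{a,r}$ has the same finite-dimensional distributions as $a^{H_0}T_\alpha G^{H_0}_{1,0}$. The standardization constant $\sigma_{\alpha,h}$ is deterministic and common to both scales, so it does not affect this relation. Rescaling by $a^{-\theta}$ then gives the marginal law $\mathbb P(a^{-\theta}T_\alpha G^{H_0}_{a,r}(t)\le x)=\Phi_\alpha(a^{\theta-H_0}x)$. This justifies the form of $D_\alpha(\theta)$ and shows it is the population diameter of the filtered family.

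Next I identify $\Phi_\alpha$. Since $T_\alpha G^{H_0}_{1,0}(t)$ is a linear combination of the jointly Gaussian increments $X_{t-k}$, it is a centered Gaussian variable whenever the series converges in $L^2$. Convergence holds because $\omega_k(\alpha)=O(k^{-1-\alpha})$ and $K_{H_0}(k)=O(k^{2H_0-2})$. The only delicate range is small $\alpha$ with $H_0>1/2$. There I would argue spectrally: the variance equals $\int_{-\pi}^{\pi}|1-e^{-i\lambda}|^{2\alpha}f_{H_0}(\lambda)\,d\lambda$, which is finite by \eqref{eq:f_H_near_zero} since $|\lambda|^{2\alpha+1-2H_0}$ is integrable near $0$. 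The variance is also strictly positive, because $f_{H_0}>0$ on a set of positive measure and the filter's transfer function vanishes only at $\lambda=0$. After standardization, $\Phi_\alpha$ is therefore a nondegenerate centered normal distribution function, $\Phi_\alpha(x)=\Phi(x/s)$ with $s>0$.

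With this in hand, both directions are elementary. If $\theta=H_0$, then $a^{\theta-H_0}=1$, so $D_\alpha(\theta)=0$. Conversely, suppose $D_\alpha(\theta)=0$. Then $\Phi(x/s)=\Phi(cx/s)$ for all $x$, where $c=a^{\theta-H_0}>0$. Because $\Phi$ is strictly increasing, $x=cx$ for every $x$; taking $x=1$ gives $c=1$. Since $a>1$, the map $\theta\mapsto a^{\theta-H_0}$ is injective, so $\theta=H_0$. Quantitatively, $D_\alpha(\theta)=\sup_x|\Phi(x)-\Phi(cx)|>0$ for $c\neq1$, matching the monotonicity of $\delta(\theta)$ recalled from \citep{Bianchi2004}.

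The main obstacle is the nondegeneracy and well-definedness of $\Phi_\alpha$: the infinite GL sum must converge and the filtered variance must be nonzero. Otherwise $\Phi_\alpha$ could be a point mass at $0$, and identification would fail. I would settle this with the spectral representation above rather than with direct covariance summation.
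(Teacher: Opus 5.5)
Your proof is correct and follows the paper's route: Proposition~\ref{prop:crossed_fGn_GL} plus rescaling by $a^{-\theta}$ gives the law of $a^{H_0-\theta}T_\alpha G^{H_0}_{1,0}$, and equality of laws forces $a^{H_0-\theta}=1$, i.e.\ $\theta=H_0$ since $a>1$. The paper leaves implicit the nondegeneracy of $\Phi_\alpha$ that you check explicitly, which the converse direction and the very definition of $T_\alpha$ require; note that your ``delicate'' case is not actually delicate, since $\sum_k|\omega_k(\alpha)|<\infty$ for every $\alpha>0$ already gives $L^2$ convergence of the filter.
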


\begin{proof}
    By Proposition \ref{prop:crossed_fGn_GL},
\[
T_\alpha G^{H_0}_{a,r}
\stackrel{f.d.d.}{=}
a^{H_0}T_\alpha G^{H_0}_{1,0}.
\]
After rescaling by $a^{-\theta}$, the filtered multi-scale sample has distribution $a^{H_0-\theta}T_\alpha G^{H_0}_{1,0}.$ Hence, its distribution coincides with that of the unit-scale filtered sample if and only if $a^{H_0-\theta}=1$. Since $a>1$, this is equivalent
to $\theta=H_0$. Therefore, the population KS diameter is uniquely minimized at $H_0$.
\end{proof}

\subsubsection{Filter application at the spectral space}
While Proposition \ref{prop:crossed_fGn_GL} establishes that the GL filter leaves the structural scaling multiplier $a^H$ perfectly intact across scales, we now demonstrate its effect in the frequency domain, proving that it simultaneously eliminates the long-memory singularity at the origin.

Let $Y_{a,r}^{H,\alpha}(t) := \Delta_{h}^{GL,\alpha}G_{a,r}^{H}(t)$ denote the GL-filtered crossed fGn process. Its spectral characterization is formalized below.

\begin{proposition}\label{prop:spectral_GL}
The spectral density of the discrete GL-filtered crossed fGn process $Y_{a,r}^{H,\alpha}(t)$ is given by:
\begin{equation}\label{eq:filtered_spectrum}
    f_{Y}^{(h)}(\lambda) = a^{2H} h^{-2\alpha} \left| 1 - e^{-ih\lambda} \right|^{2\alpha} f_{H}(\lambda), \quad \lambda \in [-\pi, \pi],
\end{equation}
where $f_H(\lambda)$ represents the spectral density of the unscaled fGn process (Equation \eqref{eq:spectrum_fGn}).
\end{proposition}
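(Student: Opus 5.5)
The approach is the standard linear-filter argument for stationary sequences. The GL derivative is a time-invariant linear filter applied to the stationary sequence $G_{a,r}^H(t)$, so its spectral density is the squared modulus of the filter's transfer function times the input spectral density. I would proceed in three steps.

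\textbf{Step 1: the input spectral density.} From Section \ref{sec:Pre_Prob}, $K_{G_{a,r}^H}(k)=a^{2H}K_H(k)$, and therefore $f_{G_{a,r}^H}(\lambda)=a^{2H}f_H(\lambda)$ on $[-\pi,\pi]$. This step is only a recall of the earlier computation.

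\textbf{Step 2: the filter's transfer function.} I would treat $h$ as the sampling step of the index $t$ and read the lag $L_h$ as the unit-time shift. Write the filter as $Y(t)=\sum_{k\ge0}c_k\,G_{a,r}^H(t-kh)$ with $c_k=h^{-\alpha}\omega_k(\alpha)$. Its transfer function is
\[
A(\lambda)=h^{-\alpha}\sum_{k\ge0}\omega_k(\alpha)e^{-ikh\lambda}=h^{-\alpha}\left(1-e^{-ih\lambda}\right)^{\alpha},
\]
which follows from the binomial series $(1-z)^\alpha=\sum_k\omega_k(\alpha)z^k$, consistent with \eqref{eq:GL_lag_operator}. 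For $\alpha>0$ we have $\omega_k(\alpha)=O(k^{-\alpha-1})$, so the coefficients are absolutely summable. Hence the series converges absolutely and uniformly on $|z|=1$, and $A$ is continuous, with $|A(\lambda)|^2=h^{-2\alpha}|1-e^{-ih\lambda}|^{2\alpha}$.

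\textbf{Step 3: the autocovariance and the spectral density.} Since $\sum_k|c_k|<\infty$ and $G_{a,r}^H$ has bounded variance, the series defining $Y$ converges in $L^2$ and almost surely, and $Y$ is stationary and centered Gaussian. Its autocovariance is
\[
K_Y(\tau)=\sum_{j,k}c_jc_k\,K_{G_{a,r}^H}(\tau+(k-j)h).
\]
Inserting the spectral representation $K_{G}(s)=\int_{-\pi}^{\pi}e^{is\lambda}a^{2H}f_H(\lambda)\,d\lambda$ and interchanging the sum and the integral (by Fubini, using absolute summability and $f_H\in L^1$) gives
\[
K_Y(\tau)=\int_{-\pi}^{\pi}e^{i\tau\lambda}\,|A(\lambda)|^2a^{2H}f_H(\lambda)\,d\lambda.
\]
By uniqueness of the spectral measure, this identifies $f_Y^{(h)}$ as stated in \eqref{eq:filtered_spectrum}.

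\textbf{Main obstacle.} The delicate point is Step 2/3 at $\lambda=0$ and for non-integer $h$. When $\alpha>0$, $A(\lambda)$ vanishes at the origin like $|h\lambda|^\alpha$, while $f_H$ blows up like $|\lambda|^{1-2H}$ there. The product is still integrable, so the Fubini step and the resulting $L^1$ spectral density remain valid. If $h$ is not an integer, the shifts $t-kh$ leave the lattice, and one must either restrict to integer $h$ or interpret $G$ via its continuous-time extension. I would state the result for integer $h$ (typically $h=1$), where everything stays on $\mathbb Z$ and the formula holds exactly. Finally, I would note the consequence $f_Y^{(h)}(\lambda)\sim a^{2H}C_H|\lambda|^{1-2(H-\alpha)}$ as $\lambda\to0$, using \eqref{eq:f_H_near_zero}.
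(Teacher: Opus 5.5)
Your proposal is correct and follows the same route as the paper. Both treat $\Delta_h^{GL,\alpha}$ as a linear filter with transfer function $h^{-\alpha}(1-e^{-ih\lambda})^{\alpha}$ and multiply its squared modulus by the input density $a^{2H}f_H(\lambda)$; your Fubini computation of $K_Y$ just makes explicit the filtering theorem the paper cites. Your restriction to integer $h$ is a correct refinement that the paper leaves implicit: for non-integer $h$, the aliased factors $|1-e^{-ih(\lambda+2\pi m)}|^{2\alpha}$ differ across $m$, so the output density on $[-\pi,\pi]$ does not in general factor in this form.
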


\begin{proof}
The discrete GL derivative operator $\Delta_{h}^{GL,\alpha} = h^{-\alpha}(1-L_h)^\alpha$ represents a linear filter whose characteristic transfer function in the frequency domain is given by $A_{\alpha,h}(\lambda) = h^{-\alpha}(1-e^{-ih\lambda})^\alpha$. Under linear filtering theory, the spectral density of the output process is equal to the spectral density of the input process multiplied by the squared modulus of the transfer function:
\begin{equation}\nonumber
    f_{Y}^{(h)}(\lambda) = \left| A_{\alpha,h}(\lambda) \right|^{2} f_{G_{a,r}^{H}}(\lambda) = \frac{1}{h^{2\alpha}} \left| 1 - e^{-ih\lambda} \right|^{2\alpha} f_{G_{a,r}^{H}}(\lambda).
\end{equation}
Substituting the crossed fGn spectral relation $f_{G_{a,r}^{H}}(\lambda) = a^{2H}f_{H}(\lambda)$ into the filter formula directly yields \eqref{eq:filtered_spectrum}, completing the proof.
\end{proof}
\begin{proposition}\label{prop:asymptotic_spectral_GL}
In the low-frequency limit $\lambda \to 0$, the spectral density of the filtered process $Y_{a,r}^{H,\alpha}(t)$ behaves as:
\begin{equation}\label{eq:asymptotic_spectral_shift}
    f_{Y}^{(h)}(\lambda) \sim a^{2H} C_{H} |\lambda|^{1-2(H-\alpha)}, \quad \text{as } \lambda \to 0,
\end{equation}
where $C_H = \frac{\sin(\pi H)\Gamma(2H+1)}{\pi}$.
\end{proposition}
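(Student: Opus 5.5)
The plan is to combine the exact formula of Proposition \ref{prop:spectral_GL} with the known low-frequency behaviour \eqref{eq:f_H_near_zero} of the fGn spectral density. Asymptotic equivalence $\sim$ is preserved under products of positive functions, so it suffices to find the behaviour of the transfer-function factor $h^{-2\alpha}|1-e^{-ih\lambda}|^{2\alpha}$ as $\lambda\to 0$ and multiply.

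First I would compute the modulus exactly:
\[
|1-e^{-ih\lambda}|^2=(1-\cos(h\lambda))^2+\sin^2(h\lambda)=2-2\cos(h\lambda)=4\sin^2(h\lambda/2),
\]
so that $|1-e^{-ih\lambda}|^{2\alpha}=|2\sin(h\lambda/2)|^{2\alpha}$. Since $\sin u/u\to 1$ as $u\to 0$ and $t\mapsto t^{\alpha}$ is continuous at $1$, this gives $|1-e^{-ih\lambda}|^{2\alpha}\sim (h|\lambda|)^{2\alpha}$. Hence $h^{-2\alpha}|1-e^{-ih\lambda}|^{2\alpha}\sim|\lambda|^{2\alpha}$, and the step size $h$ cancels in the leading term.

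Second, for $f_H$ I would invoke \eqref{eq:f_H_near_zero}. If a self-contained justification is wanted, I would use the series representation \eqref{eq:spectrum_fGn}. The terms with $m\neq 0$ form a sum $\sum_{m\neq0}|\lambda+2\pi m|^{-2H-1}$ that is bounded for $|\lambda|\le\pi$ (it converges since $2H+1>1$). After multiplication by $1-\cos\lambda=O(\lambda^2)$ these terms therefore vanish and are negligible relative to $|\lambda|^{1-2H}$, because $1-2H<2$. The $m=0$ term gives the power $|\lambda|^{1-2H}$. I keep the normalising constant $C_H$ exactly as fixed in \eqref{eq:f_H_near_zero}, since it is the paper's convention.

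Multiplying the two equivalences together with the constant $a^{2H}$ yields
\[
f_Y^{(h)}(\lambda)\sim a^{2H}C_H|\lambda|^{2\alpha}|\lambda|^{1-2H}=a^{2H}C_H|\lambda|^{1-2(H-\alpha)}.
\]

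The only point needing care is the constant bookkeeping. This concerns the exact cancellation of $h^{\pm2\alpha}$ and the constant inherited from \eqref{eq:f_H_near_zero}; the exponent is immediate. I would also remark that the equivalence is meant for $\lambda\neq0$ with both sides positive, so that taking ratios is legitimate.
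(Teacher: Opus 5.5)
Your route is the paper's route. The paper also multiplies $h^{-2\alpha}|1-e^{-ih\lambda}|^{2\alpha}\sim|\lambda|^{2\alpha}$ by the low-frequency law \eqref{eq:f_H_near_zero}; it gets the first factor from a first-order Taylor expansion, where you use the exact identity $|1-e^{-ih\lambda}|^2=4\sin^2(h\lambda/2)$. The exponent $1-2(H-\alpha)$, the factor $a^{2H}$, the cancellation of $h^{\pm 2\alpha}$, and your bound on the $m\neq 0$ terms of the series are all correct.

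The argument breaks at the constant, which is exactly the point you flagged and then settled by fiat instead of computing. If you carry out the self-contained check you propose, the $m=0$ term of \eqref{eq:spectrum_fGn} is $C_H(1-\cos\lambda)|\lambda|^{-2H-1}$. Since $1-\cos\lambda\sim\lambda^2/2$, this gives $f_H(\lambda)\sim\tfrac12 C_H|\lambda|^{1-2H}$, not $C_H|\lambda|^{1-2H}$.

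The case $H=1/2$ confirms this. Unit-variance white noise has $f_H\equiv 1/(2\pi)$ under the normalisation $K_H(k)=\int_{-\pi}^{\pi}e^{ik\lambda}f_H(\lambda)\,d\lambda$, whereas $C_{1/2}=1/\pi$. So \eqref{eq:f_H_near_zero} is off by a factor $2$ relative to \eqref{eq:spectrum_fGn}. Because Proposition~\ref{prop:spectral_GL} ties $f_H$ explicitly to \eqref{eq:spectrum_fGn}, the constant is not a free ``convention'' you may keep.

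What your argument actually proves is $f_Y^{(h)}(\lambda)\sim\tfrac12 a^{2H}C_H|\lambda|^{1-2(H-\alpha)}$. The paper's own proof cites \eqref{eq:f_H_near_zero} directly and inherits the same slip. You should either state the result with $C_H$ replaced by $\sin(\pi H)\Gamma(2H+1)/(2\pi)$, or write it as $f_Y^{(h)}(\lambda)\sim a^{2H}c_H|\lambda|^{1-2(H-\alpha)}$ with $c_H:=\lim_{\lambda\to 0}|\lambda|^{2H-1}f_H(\lambda)$. Nothing downstream is affected, since later sections use only the sign of the exponent, i.e.\ that $f_Y^{(h)}(\lambda)\to 0$ when $H-\alpha<1/2$.
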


\begin{proof}
We examine the asymptotic behavior of the individual terms in Equation \eqref{eq:filtered_spectrum} as $\lambda \to 0$. First, applying a first-order Taylor expansion to the exponential component of the transfer function yields:
\begin{equation}\nonumber
    \left| 1 - e^{-ih\lambda} \right|^{2\alpha} = \left| 1 - \left(1 - ih\lambda + \mathcal{O}(\lambda^2)\right) \right|^{2\alpha} \sim |ih\lambda|^{2\alpha} = h^{2\alpha}|\lambda|^{2\alpha}.
\end{equation}
Second, from Equation \eqref{eq:f_H_near_zero}, we know that the unscaled fGn spectral density scales near the origin as $f_H(\lambda) \sim C_H|\lambda|^{1-2H}$. Combining these limits inside the expression for $f_Y^{(h)}(\lambda)$ leads to:
\begin{equation}\nonumber
    f_{Y}^{(h)}(\lambda) \sim \left( \frac{1}{h^{2\alpha}} \right) \left( h^{2\alpha}|\lambda|^{2\alpha} \right) \left( a^{2H} C_{H} |\lambda|^{1-2H} \right) = a^{2H} C_{H} |\lambda|^{2\alpha + 1 - 2H}.
\end{equation}
Gathering the exponents of $|\lambda|$ simplifies the expression directly to $a^{2H} C_{H} |\lambda|^{1-2(H-\alpha)}$, completing the proof.
\end{proof}

By choosing a fractional filter order $\alpha$ such that $H - \alpha < 1/2$, the exponent $1 - 2(H-\alpha)$ in Equation \eqref{eq:asymptotic_spectral_shift} becomes strictly positive. Consequently, the spectral density no longer diverges at the origin $\lim_{\lambda \to 0} f_Y^{(h)}(\lambda) = 0$, meaning the long-memory singularity is successfully removed. The filtering operation shifts the process out of the LRD domain and into a short-memory or anti-persistent domain while preserving the scaling constant $a^{2H}$ intact. This dual feature provides the mathematical framework required to derive stable, fast-converging non-parametric test statistics under short-range dependence functional limit theorems, as formalized in Section \ref{sec:KS_LRD}.

\subsection{KS asymptotic distribution for LRD}\label{sec:KS_LRD}

Having established that the discrete Gr\"{u}nwald-Letnikov filter maps a long-memory fractional Gaussian noise into a SRD regime while preserving its exact finite-dimensional scaling laws, we are now positioned to derive the asymptotic distribution of the accelerated test statistic.

Before introducing the filtered samples, we specify the null hypothesis considered in this subsection. For a fixed \(H_0\in(1/2,1)\), the null hypothesis is the correctly specified self-similarity relation \[ \mathcal H_0(H_0):\qquad \left\{ a^{-H_0}G^{H_0}_{a,r}(t) \right\}_{t\geq0} \overset{f.d.d.}{=} \left\{ G^{H_0}_{1,0}(t) \right\}_{t\geq0}, \qquad r=0,\ldots,a-1. \] This is not the classical independence null of the two-sample KS test, since both samples are extracted from the same trajectory. The aim is to derive the limiting law of the KS distance under this dependent self-similarity null.

Let $\{X_i\}_{i=1}^n$ and $\{Y_{t,r}\}$ ($t=0,\dots,m_r-1$; $r=0,\dots,a-1$) be the unit-lag and pooled multi-scale samples defined in Equations \eqref{eq:X_i_def} and \eqref{eq:Y_tr_def} with $\theta=H_0$, respectively. We apply the pre-scaled discrete GL derivative operator $\Delta_{h}^{GL,\alpha}$ of order $\alpha > H_0 - 1/2$ to both samples. Under \(\mathcal H_0(H_0)\), we normalize the filtered increments by their common theoretical standard deviation $\sigma_{\alpha,h} := \left(\mathbb{E}\left[(\Delta_{h}^{GL,\alpha}X_1)^2\right]\right)^{1/2}$. This yields the standardized, GL-filtered stochastically equivalent sequences:
\begin{equation}\nonumber
    \tilde{X}_i = \frac{1}{\sigma_{\alpha,h}} \Delta_{h}^{GL,\alpha}X_i, \quad \tilde{Y}_{t,r} = \frac{1}{\sigma_{\alpha,h}} \Delta_{h}^{GL,\alpha}Y_{t,r}.
\end{equation}
The pooled multi-scale filtered sample $\{\tilde{Y}_j\}_{j=1}^m$ is constructed by aggregating all branches $\tilde{Y}_{t,r}$, maintaining a total sample size of $m = \sum_{r=0}^{a-1} m_r \approx n$. Under \(\mathcal H_0(H_0)\), Proposition \ref{prop:crossed_fGn_GL} guarantees that $\tilde{X}_i \stackrel{d}{=} \tilde{Y}_j \sim \mathcal{N}(0,1)$ marginally, and their joint finite-dimensional distributions satisfy $\{\tilde{Y}_{t,r}\}_{t \ge 0} \stackrel{f.d.d.}{=} \{\tilde{X}_i\}_{i \ge 1}$.

We define the corresponding ECDFs of the filtered samples as:
\begin{equation}\nonumber
    \tilde{F}_n(x) = \frac{1}{n}\sum_{i=1}^{n}\ind_{\{\tilde{X}_i \le x\}}, \quad \tilde{G}_m(x) = \frac{1}{m}\sum_{r=0}^{a-1}\sum_{t=0}^{m_r - 1}\ind_{\{\tilde{Y}_{t,r} \le x\}}.
\end{equation}
The Gr\"{u}nwald-Letnikov-Kolmogorov--Smirnov (GL-KS) test statistic is then formalized as:
\begin{equation}\nonumber
    \tilde{D}_{n,m} = \sup_{x \in \mathbb{R}} |\tilde{F}_n(x) - \tilde{G}_m(x)|.
\end{equation}

The main theoretical contribution of this paper is established in the following theorem, which demonstrates that the GL filter successfully neutralizes the long-memory convergence bottleneck.

\begin{theorem}[Asymptotic Distribution of the GL-KS Statistic]\label{thm:GL_KS_limit}
Let $B_t^{H_0}$ be a fBm with Hurst parameter $H_0 \in (1/2, 1)$. If the GL filter order $\alpha$ is chosen such that $H_0 - \alpha < 1/2$, then as $n, m \to \infty$ with $\frac{n}{n+m} \to \lambda \in (0,1)$, the standardized GL-KS statistic satisfies:
\begin{equation}\label{eq:GL_KS_weak_conv}
    D_{n,m}^\dag := \sqrt{\frac{nm}{n+m}}\tilde{D}_{n,m} \xrightarrow{d} \sup_{x \in \mathbb{R}} |\tilde{U}(x)|,
\end{equation}
where $\widetilde U$ is a centered continuous Gaussian process on $\mathbb R$. Its covariance kernel is 
\[ \operatorname{Cov}\big(\widetilde U(x),\widetilde U(y)\big) = (1-\lambda)\Gamma_{\widetilde X\widetilde X}(x,y) + \lambda\Gamma_{\widetilde Y\widetilde Y}(x,y) - \sqrt{\lambda(1-\lambda)} \left[ \Gamma_{\widetilde X\widetilde Y}(x,y) + \Gamma_{\widetilde Y\widetilde X}(x,y) \right], \]
where, for $A,B\in\{\widetilde X,\widetilde Y\}$, $\Gamma_{AB}(x,y) = \sum_{k\in\mathbb Z} \operatorname{Cov} \left( \mathds 1_{\{A_0\le x\}}, \mathds 1_{\{B_k\le y\}} \right).$

Equivalently, by the Hermite expansion of the centered Gaussian indicator, \[ \Gamma_{AB}(x,y) = \phi(x)\phi(y) \sum_{\ell=1}^{\infty} \frac{He_{\ell-1}(x)He_{\ell-1}(y)}{\ell!} \sum_{k\in\mathbb Z}\rho_{AB}(k)^\ell , \]
where $He_{\ell}(\cdot)$ denotes the probabilist's Hermite polynomials of order $\ell$, $\phi(\cdot)$ is the standard Gaussian PDF, and $\rho_{\tilde{X}}(k)$, $\rho_{\tilde{Y}}(k)$, $\rho_{\tilde{X}\tilde{Y}}(k)$ are the autocorrelation and cross-correlation functions of the GL-filtered Gaussian sequences.
\end{theorem}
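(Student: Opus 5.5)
The plan is to reduce the statement to a functional central limit theorem for the joint empirical process of a stationary Gaussian sequence whose correlations are absolutely summable. Once that is available, the argument of Proposition \ref{prop:SRD} applies essentially verbatim, with general $\lambda$ in place of $1/2$.

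\emph{Step 1: summable correlations.} First I would show that the filtered Gaussian sequences $\tilde X$, $\tilde Y$ have absolutely summable auto- and cross-correlations. By Propositions \ref{prop:spectral_GL} and \ref{prop:asymptotic_spectral_GL}, the spectral density of each filtered branch is
\[
f(\lambda)=c\,|1-e^{-ih\lambda}|^{2\alpha}f_{H_0}(\lambda)\sim c'|\lambda|^{1-2(H_0-\alpha)}
\]
near $0$. Since $H_0-\alpha<1/2$, this density is bounded and vanishes at the origin; away from the origin it is smooth on $[-\pi,\pi]\setminus\{0\}$. A standard Tauberian/Fourier argument then gives $\rho(k)=O(|k|^{2(H_0-\alpha)-2})$. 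The exponent is below $-1$, so $\sum_k|\rho(k)|<\infty$.

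The cross-correlations between $\tilde X$ and the pooled branches $\tilde Y_{t,r}$ are handled the same way. They are covariances of two linear filters of one fGn, so their cross-spectral density is dominated by $|1-e^{-i\lambda}|^{2\alpha}f_{H_0}$ times bounded transfer factors from the branch subsampling at scale $a$. This gives the same polynomial decay. I expect this bookkeeping, which must account for the $a$ interleaved branches and the GL weights $\omega_k(\alpha)=O(k^{-\alpha-1})$, to be the main technical obstacle.

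If the spectral route is awkward, I would try a direct argument instead. Write the filtered covariance as $\sum_{j,l}\omega_j\omega_l K_{H_0}(k+j-l)$, which is a fractional difference of order $2\alpha$ acting on $|k|^{2H_0}$. Apply the asymptotic $\Delta^{2\alpha}|k|^{2H_0}\sim C|k|^{2H_0-2-2\alpha}$ termwise.

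\emph{Step 2: functional CLT.} Stack the two centered empirical processes
\[
\sqrt n(\tilde F_n-\Phi),\qquad \sqrt m(\tilde G_m-\Phi).
\]
They form indicator functionals of one jointly stationary Gaussian vector sequence with summable correlations. By the Hermite expansion
\[
\mathbf 1_{\{Z\le x\}}-\Phi(x)=-\sum_{\ell\ge1}\frac{\phi(x)He_{\ell-1}(x)}{\ell!}He_\ell(Z),
\]
every Hermite rank is at least $1$. Since $\sum_k|\rho(k)|<\infty$, also $\sum_k|\rho(k)|^\ell<\infty$ for all $\ell\geq 1$. The Breuer--Major theorem then gives finite-dimensional convergence, with covariances computed by Mehler's formula $\mathrm{Cov}(He_\ell(A),He_{\ell'}(B))=\ell!\rho^\ell\delta_{\ell\ell'}$. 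This yields exactly the stated $\Gamma_{AB}$.

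Tightness in $D[-\infty,\infty]$ follows from the empirical-process FCLT for functionals of Gaussian sequences with summable covariance (Arcones 1994, or Dehling--Taqqu type chaining with fourth-moment increment bounds). Continuity of $\tilde U$ follows from Hölder continuity of the kernel in $x$.

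\emph{Step 3: continuous mapping.} Write
\[
\sqrt{\tfrac{nm}{n+m}}(\tilde F_n-\tilde G_m)=\sqrt{\tfrac{m}{n+m}}\sqrt n(\tilde F_n-\Phi)-\sqrt{\tfrac{n}{n+m}}\sqrt m(\tilde G_m-\Phi).
\]
The weights tend to $\sqrt{1-\lambda}$ and $\sqrt\lambda$, so the limit has the stated covariance with cross term $-\sqrt{\lambda(1-\lambda)}[\Gamma_{\tilde X\tilde Y}+\Gamma_{\tilde Y\tilde X}]$. Finally, apply the continuous mapping theorem to $\sup_x|\cdot|$.

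Two minor points need care. The infinite GL sum must be truncated to the observed sample; the truncation error is $O(\sum_{k>n}|\omega_k|)$ in $L^2$, which I would argue is negligible after scaling, at least when $\alpha>1/2$ or with a mild growth restriction. The pooling of $a$ finite branches into one sequence is handled by treating the data as an $(a+1)$-dimensional stationary vector sequence.
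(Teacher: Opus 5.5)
Your proposal is correct and follows essentially the same route as the paper's proof. Both arguments run through the same steps: absolute summability of the filtered auto- and cross-correlations, deduced from the low-frequency spectral behaviour $|\cdot|^{1-2(H_0-\alpha)}$; the rank-one Hermite expansion of the indicator together with Breuer--Major and Arcones' empirical-process CLT for the joint limit $(U_X,U_Y)$; and then the weighted difference and the continuous mapping theorem for $\sup_x|\cdot|$.

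Only your side remarks need adjusting:
blocking gives a $2a$-dimensional (not $(a+1)$-dimensional) stationary vector; truncation is not part of this theorem, because $\widetilde X_i$ is the infinite-memory filter (the paper treats it separately in Lemma~\ref{lem:burn-in}, and the omitted tail for observation $i$ starts at $k=i$, not $k>n$); and, exactly as in the paper, the block device forces $n\approx m$, so the cross term with $\Gamma_{\widetilde X\widetilde Y}$ is really justified only in the balanced case $\lambda=1/2$ (for $\lambda\neq 1/2$ the cross-covariance depends on how the two samples overlap in time).
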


\begin{proof}
Let $\widetilde\beta_n(x)=\sqrt{n}\big(\widetilde F_n(x)-\Phi(x)\big)$, and $\widetilde\gamma_m(x)=\sqrt{m}\big(\widetilde G_m(x)-\Phi(x)\big)$. We first establish the joint weak convergence of $\big(\widetilde\beta_n,\widetilde\gamma_m\big)$ in $\ell^\infty(\mathbb R)\times\ell^\infty(\mathbb R)$.

For any standardized Gaussian random variable \(\xi\), the centered indicator admits the Hermite
expansion
\begin{equation}\label{eq:indicator}
\mathds 1_{\{\xi\le x\}}-\Phi(x)
=
-\phi(x)\sum_{\ell=1}^{\infty}
\frac{He_{\ell-1}(x)}{\ell!}He_\ell(\xi).
\end{equation}
The Hermite rank is equal to one, because the first coefficient is non-zero.

In the unfiltered long-memory case, the autocorrelation behaves as $\rho(k)\sim C|k|^{2H_0-2}$, and for $H_0>1/2$ the series $\sum_k |\rho(k)|$ is not summable. The Hermite reduction principle for long-range dependent Gaussian sequences then implies that the first-order Hermite projection dominates the empirical process, producing the non-standard normalization and the collapsed limit described in Proposition~\ref{prop:LRD}; see \cite{taqqu1975weak,dobrushin1979non,dehling1989}.

After the GL filtering, Proposition~\ref{prop:asymptotic_spectral_GL} gives
\[
f_{\widetilde Y}^{(h)}(\lambda)\sim C|\lambda|^{1-2(H_0-\alpha)},
\qquad \lambda\to0.
\]
Equivalently, the long-lag correlations of the filtered sequences satisfy
\[
\rho_{\widetilde X}(k)\sim C_{\alpha,h}|k|^{2(H_0-\alpha)-2},
\qquad
\rho_{\widetilde Y}(k)\sim C_{\alpha,h}|k|^{2(H_0-\alpha)-2}.
\]
Since \(H_0-\alpha<1/2\), we have $2(H_0-\alpha)-2<-1,$ and therefore
\begin{equation}\label{eq:rhos}
\sum_{k\in\mathbb Z}|\rho_{\widetilde X}(k)|<\infty,
\qquad
\sum_{k\in\mathbb Z}|\rho_{\widetilde Y}(k)|<\infty,
\qquad
\sum_{k\in\mathbb Z}|\rho_{\widetilde X\widetilde Y}(k)|<\infty.
\end{equation}
Moreover, since \(|\rho(k)|\le1\), the same summability holds for every Hermite power
\(\rho(k)^\ell\), \(\ell\ge1\). Hence, each Hermite component in \eqref{eq:indicator} satisfies a standard central limit theorem of Breuer--Major type, and the Hermite series can be handled by the usual empirical-process central limit theorem for short-memory Gaussian subordinated sequences; see \cite{breuer1983central,Arcones1994,dehling1989}. Consequently,
\[
\big(\widetilde\beta_n,\widetilde\gamma_m\big)
\Rightarrow
\big(U_X,U_Y\big)
\qquad\text{in }\ell^\infty(\mathbb R)^2,
\]
where $(U_X,U_Y)$ is a centered bivariate Gaussian process.

For \(A,B\in\{\widetilde X,\widetilde Y\}\), define the long-run covariance kernel
\[
\Gamma_{AB}(x,y)
=
\sum_{k\in\mathbb Z}
\operatorname{Cov}
\left(
\mathds 1_{\{A_0\le x\}},
\mathds 1_{\{B_k\le y\}}
\right).
\]
Equivalently, using the Hermite expansion,
\[
\Gamma_{AB}(x,y)
=
\phi(x)\phi(y)
\sum_{\ell=1}^{\infty}
\frac{He_{\ell-1}(x)He_{\ell-1}(y)}{\ell!}
\sum_{k\in\mathbb Z}\rho_{AB}(k)^\ell .
\]
The absolute summability in \eqref{eq:rhos} guarantees that these kernels are finite.

Since \(n/(n+m)\to\lambda\in(0,1)\), we have
\[
\sqrt{\frac{nm}{n+m}}
\big(\widetilde F_n(x)-\widetilde G_m(x)\big)
=
\sqrt{\frac{m}{n+m}}\,\widetilde\beta_n(x)
-
\sqrt{\frac{n}{n+m}}\,\widetilde\gamma_m(x),
\]
and therefore
\[
\sqrt{\frac{nm}{n+m}}
\big(\widetilde F_n-\widetilde G_m\big)
\Rightarrow
\widetilde U
\qquad\text{in }\ell^\infty(\mathbb R),
\]
where $\widetilde U(x)=\sqrt{1-\lambda}\,U_X(x)-\sqrt{\lambda}\,U_Y(x)$. Its covariance kernel is
\[
\operatorname{Cov}\big(\widetilde U(x),\widetilde U(y)\big)
=
(1-\lambda)\Gamma_{\widetilde X\widetilde X}(x,y)
+\lambda\Gamma_{\widetilde Y\widetilde Y}(x,y)
-\sqrt{\lambda(1-\lambda)}
\left[
\Gamma_{\widetilde X\widetilde Y}(x,y)
+
\Gamma_{\widetilde Y\widetilde X}(x,y)
\right].
\]
In the balanced case \(\lambda=1/2\), this reduces to the same covariance structure as in Proposition~\ref{prop:SRD}, with the unfiltered kernels replaced by their GL-filtered counterparts.

Finally, since the limit process has continuous sample paths and the map
$f\mapsto \sup_{x\in\mathbb R}|f(x)|$ is continuous on the limiting support, the continuous mapping
theorem gives
\begin{equation}\nonumber
D_{n,m}^{\dagger}
=
\sqrt{\frac{nm}{n+m}}\,
\widetilde D_{n,m}
\Rightarrow
\sup_{x\in\mathbb R}|\widetilde U(x)|.
\end{equation}
\end{proof}

\begin{remark}
Theorem \ref{thm:GL_KS_limit} confirms that the GL-KS statistic bypasses the LRD phase transition. While the classical unfiltered statistic requires a non-standard normalization $n^{1-H}$ and delivers a degenerate absolute Gaussian limit for $H>1/2$, the filtered alternative retains a stable $\sqrt{n}$ rate and converges to a rich, functional Gaussian supremum across the entire parameter space. This structural rehabilitation eliminates the inflation of Type I errors and underpins the fast numerical convergence rates analyzed in Section \ref{sec:computational_analysis}.
\end{remark}

\subsubsection{Finite-Sample Truncation and Burn-in Mechanics}\label{subsec:truncation_error}

In practical applications, the underlying fGn sequence is observed only over a finite time horizon. Consequently, the infinite summation defining the discrete Gr\"{u}nwald--Letnikov derivative cannot be implemented exactly. This is the standard initialization problem arising in finite-sample implementations of infinite-order linear filters and fractional-difference operators; see \cite{brockwell2009time,giraitis2012large,podlubny1998fractional}. In the present setting, the issue is particularly relevant because the GL coefficients decay only polynomially. We therefore introduce a burn-in deletion rule whose purpose is to remove the finite-sample effect of the unobserved pre-sample history.

Let $X_i$ be the unscaled fGn process. For each $i=1,\ldots,n$, the theoretical infinite-memory filtered variable $\widetilde X_i$ and its finite-sample truncated counterpart $\widehat X_i$ are related by
\begin{equation}\nonumber
\widehat X_i = \frac{1}{\sigma_{\alpha,h}} \sum_{k=0}^{i-1}\omega_k(\alpha)X_{i-k} = \widetilde X_i-E_i,
\end{equation}
where $E_i = \frac{1}{\sigma_{\alpha,h}} \sum_{k=i}^{\infty}\omega_k(\alpha)X_{i-k}$ is the omitted pre-sample tail. Since the GL coefficients satisfy $\omega_k(\alpha)\sim \frac{k^{-\alpha-1}}{\Gamma(-\alpha)}$, $k\to\infty$, and the fGn autocovariance satisfies $\gamma_X(k)\sim C_H k^{2H-2}$, $k\to\infty,$ the variance of the truncation error satisfies
\begin{equation}\nonumber
\mathbb E[E_i^2] \le C i^{2H-2\alpha-2} = C i^{-2(1+\alpha-H)}.
\end{equation} 
Therefore, after deleting the first $N_0=\lfloor n^\gamma\rfloor$ observations, with 
\begin{equation} \label{eq:gamma_condition}
\gamma>\frac{1}{2(1+\alpha-H)},
\end{equation}
we obtain, uniformly for $i\ge N_0$, 
\begin{equation}\label{eq:error}
\mathbb E[E_i^2] \le C n^{-2\gamma(1+\alpha-H)}.
\end{equation}

\begin{lemma}[Uniform negligibility of the truncation error] \label{lem:burn-in}
Let $B^H$ be a fBm with $H\in(1/2,1)$, and let the GL order satisfy $H-\alpha<1/2$. If the burn-in exponent $\gamma$ satisfies condition \eqref{eq:gamma_condition}, then 
\begin{equation} \nonumber
\sup_{x\in\mathbb R} \sqrt n \left| \widehat F_n(x)-\widetilde F_n(x) \right| \overset{P}{\longrightarrow}0,
\end{equation}
where \(\widehat F_n(x)\) and \(\widetilde F_n(x)\) denote the empirical distribution functions of the finite-sample truncated filtered sequence and of the theoretical infinite-memory filtered sequence, respectively.
\end{lemma}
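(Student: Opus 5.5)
The plan is to compare the two empirical distribution functions pointwise through a small-perturbation argument, using the burn-in bound \eqref{eq:error}. Here both $\widehat F_n$ and $\widetilde F_n$ are understood as built on the indices $i\ge N_0$. Since $N_0=\lfloor n^\gamma\rfloor=o(n)$ when $\gamma<1$, deleting these observations changes $\widetilde F_n$ by at most $N_0/n$ in sup-norm. The effect of the deletion itself is then handled separately: it is negligible after multiplying by $\sqrt n$ provided $\gamma<1/2$. If instead $\gamma\ge 1/2$ is forced by \eqref{eq:gamma_condition}, I would note that Theorem \ref{thm:GL_KS_limit} can be applied directly to the post-burn-in sample of size $n-N_0\sim n$, so no deletion error arises.

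The key step is a deterministic sandwich. For any threshold $\epsilon_n>0$ and every $x$, on the event $A_n=\{\max_{N_0\le i\le n}|E_i|\le\epsilon_n\}$ we have
\[
\widetilde F_n(x-\epsilon_n)\le \widehat F_n(x)\le \widetilde F_n(x+\epsilon_n),
\]
since $\widehat X_i=\widetilde X_i-E_i$. Writing $\widetilde F_n=\Phi+n^{-1/2}\widetilde\beta_n$, this gives
\[
\sqrt n\sup_x|\widehat F_n(x)-\widetilde F_n(x)|\le \sqrt n\sup_x\big(\Phi(x+\epsilon_n)-\Phi(x-\epsilon_n)\big)+2\sup_{|x-y|\le\epsilon_n}|\widetilde\beta_n(x)-\widetilde\beta_n(y)|.
\]
The first term is at most $2\phi(0)\sqrt n\,\epsilon_n$. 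The second term tends to zero in probability by asymptotic equicontinuity, i.e.\ tightness of $\widetilde\beta_n$, which comes from the joint weak convergence to a process with continuous paths established in the proof of Theorem \ref{thm:GL_KS_limit}; this holds as long as $\epsilon_n\to0$.

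It therefore remains to choose $\epsilon_n=o(n^{-1/2})$ with $P(A_n)\to1$. The $E_i$ are centered Gaussian, and \eqref{eq:error} gives $\operatorname{Var}(E_i)\le Cn^{-2\gamma(1+\alpha-H)}$ uniformly for $i\ge N_0$. A union bound with Gaussian tails gives
\[
\max_{N_0\le i\le n}|E_i|=O_P\big(\sqrt{\log n}\;n^{-\gamma(1+\alpha-H)}\big).
\]
Condition \eqref{eq:gamma_condition} makes $\gamma(1+\alpha-H)>1/2$, so we may take $\epsilon_n=n^{-1/2-\eta}$ for a small $\eta>0$. Then $\sqrt n\,\epsilon_n\to0$, and $P(A_n)\to1$. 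Combining this with the sandwich bound proves the lemma.

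The main obstacle is the maximal bound on $E_i$. Chebyshev alone would cost a factor $n$ in the union bound, so Gaussianity is essential, and it is available here because $E_i$ is a linear functional of fGn. The bound on $\operatorname{Var}(E_i)$ also relies on \eqref{eq:error}, which itself deserves checking. The tail weights satisfy $\sum_{k\ge i}|\omega_k|\sim i^{-\alpha}$, while the covariances are positive and long-memory, and a double-sum estimate gives $\operatorname{Var}(E_i)\lesssim i^{2H-2-2\alpha}$ only after using $\sum_k\omega_k=0$-type cancellations. If that cancellation fails, I would redo the variance calculation spectrally, writing $E_i$ as $X$ filtered by the tail of $(1-z)^\alpha$, and keep the same structure of proof with the corrected exponent in the condition on $\gamma$.
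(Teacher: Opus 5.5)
Your proof is correct, but it differs from the paper's at the decisive step. The paper works in $L^1$. It uses
\[
|\ind_{\{\widehat X_i\le x\}}-\ind_{\{\widetilde X_i\le x\}}|\le \ind_{\{|\widetilde X_i-x|\le |E_i|\}}
\]
and a bounded conditional density of $\widetilde X_i$ given $E_i$ to get $\sup_x\mathbb P(|\widetilde X_i-x|\le|E_i|)\le C\,\mathbb E|E_i|\le Cn^{-\kappa}$, where $\kappa=\gamma(1+\alpha-H)>1/2$. It then averages over $i\ge N_0$ and applies Markov. This needs no maximal inequality, no property of the empirical process of $\widetilde X$, and no logarithm.

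As written, however, it moves $\sup_x$ inside the expectation. It bounds $\mathbb E\sup_x \frac{1}{n-N_0}\sum_i \ind_{\{|\widetilde X_i-x|\le|E_i|\}}$ by $\frac{1}{n-N_0}\sum_i \sup_x \mathbb P(\cdot)$, which is the wrong direction. So strictly it only gives a pointwise-in-$x$ statement.

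Your route controls the supremum legitimately:
\begin{itemize}
\item a Gaussian union bound gives $\max_{N_0\le i\le n}|E_i|=O_P(\sqrt{\log n}\,n^{-\kappa})$;
\item the deterministic sandwich $\widetilde F_n(x-\epsilon_n)\le \widehat F_n(x)\le \widetilde F_n(x+\epsilon_n)$ makes the comparison uniform in $x$;
\item the remaining error splits into the Lipschitz bound on $\Phi$ and the oscillation modulus of $\widetilde\beta_n$.
\end{itemize}
The price is twofold. First, you import asymptotic equicontinuity of the infinite-memory empirical process from Theorem~\ref{thm:GL_KS_limit}. This is legitimate, since the post-burn-in $\widetilde X_i$ form a stationary sequence of length $n-N_0\sim n$. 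Take the equicontinuity with respect to a semimetric such as $|\Phi(x)-\Phi(y)|$, which $|x-y|\le\epsilon_n$ controls, since $\mathbb R$ is not compact. Second, you pay a $\sqrt{\log n}$ factor, which the strict inequality $\kappa>1/2$ absorbs. In effect, your sandwich is what is needed to repair the paper's sup/expectation interchange. Your reading that both ECDFs are built on the common post-burn-in index set is also the paper's; with its choices, e.g. $H-\alpha=0.25$, one has $\gamma>2/3$, so this reading is forced.

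Your closing worry about \eqref{eq:error} is unfounded: no cancellation is needed, and none is available. For $0<\alpha<1$ the weights $\omega_k(\alpha)$, $k\ge1$, share a sign, and $\gamma_X>0$. The crude rate $i^{-2\alpha}=(\sum_{k\ge i}|\omega_k|)^2$ appears only if you ignore the decay of $\gamma_X$.

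Use $|\omega_l|\lesssim l^{-\alpha-1}$ and $\gamma_X(j)\lesssim(1+|j|)^{2H-2}$, and split the inner sum into $l\le k/2$, $k/2<l\le 2k$ and $l>2k$. This gives
\[
\sum_{l\ge i}|\omega_l|\,\gamma_X(k-l)\lesssim i^{-\alpha}k^{2H-2}+k^{2H-2-\alpha}.
\]
Summing against $|\omega_k|$ over $k\ge i$ yields $\operatorname{Var}(E_i)\lesssim i^{2H-2-2\alpha}$ directly; all series converge because $2H-2<\alpha$. The spectral fallback is therefore unnecessary.
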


\begin{proof}
Set $\kappa=\gamma(1+\alpha-H).$ By assumption, \(\kappa>1/2\). From \eqref{eq:error}, \[ \sup_{i\ge N_0}\mathbb E[E_i^2]\le C n^{-2\kappa}, \qquad \sup_{i\ge N_0}\mathbb E|E_i|\le C n^{-\kappa}. \] 
For each $x\in\mathbb R$, 
\[ \left| \mathds 1_{\{\widehat X_i\le x\}} - \mathds 1_{\{\widetilde X_i\le x\}} \right| \le \mathds 1_{\{|\widetilde X_i-x|\le |E_i|\}}. \]
Since \((\widetilde X_i,E_i)\) is jointly Gaussian and \(\operatorname{Var}(E_i)\to0\), the conditional density of \(\widetilde X_i\) given \(E_i\) is uniformly bounded for all sufficiently large \(n\). Hence there exists a constant \(C>0\), independent of \(x\) and \(i\), such that \[ \sup_{x\in\mathbb R} \mathbb P\left(|\widetilde X_i-x|\le |E_i|\right) \le C\mathbb E|E_i| \le C n^{-\kappa}.\] 
Therefore, \[ \mathbb E \left[ \sup_{x\in\mathbb R} \left| \widehat F_n(x)-\widetilde F_n(x) \right| \right] \le \frac{1}{n-N_0} \sum_{i=N_0+1}^n \sup_{x\in\mathbb R} \mathbb P\left(|\widetilde X_i-x|\le |E_i|\right) \le C n^{-\kappa}.  \] Multiplying by \(\sqrt n\) gives 
\[ \sqrt n\, \mathbb E \left[ \sup_{x\in\mathbb R} \left| \widehat F_n(x)-\widetilde F_n(x) \right| \right] \le C n^{1/2-\kappa} \longrightarrow0, \]
because \(\kappa>1/2\). Markov's inequality yields 
\[ \sup_{x\in\mathbb R} \sqrt n \left| \widehat F_n(x)-\widetilde F_n(x) \right| \overset{P}{\longrightarrow}0. \] 
The argument for the pooled crossed sample is identical after applying the same burn-in deletion inside each branch. This proves the claim.
\end{proof}

By virtue of Lemma \ref{lem:burn-in}, the difference between the two statistics is negligible at the $\sqrt n$-scale. Consequently, the asymptotic distribution derived in Theorem \ref{thm:GL_KS_limit} remains invariant under finite-sample truncation.

\begin{remark}[Effective sample sizes in the pooled crossed sample]
\label{rmk:effective_sample_sizes}
The notation used above suppresses a minor but important implementation detail. For the unit-scale sample, the burn-in deletion simply removes the first $N_0=\lfloor n^\gamma\rfloor$
filtered observations, so that the effective sample size is $n_{\mathrm{eff}}:=n-N_0.$
For the multi-scale sample, however, the observations are obtained by pooling the crossed branches
$\widehat Y_{t,r}$, with $t=0,\ldots,m_r-1$, and $r=0,\ldots,a-1$. Since the finite-memory approximation of the GL filter is initialized separately along each branch, the burn-in deletion must also be applied branch by branch. Thus, if
$M_{0,r}=\lfloor m_r^\gamma\rfloor$
denotes the burn-in length of branch $r$, the effective size of the pooled multi-scale sample is
$m_{\mathrm{eff}}
:=
\sum_{r=0}^{a-1}(m_r-M_{0,r})$.
Equivalently, the empirical distribution function computed in finite samples is
$\widehat G_m(x)=\frac{1}{m_{\mathrm{eff}}}
\sum_{r=0}^{a-1}
\sum_{t=M_{0,r}+1}^{m_r}
\ind_{\left\{\widehat Y_{t,r}\le x\right\}}$.
When the branch lengths are asymptotically balanced and the scale $a$ is fixed, one has $m_{\mathrm{eff}}\sim m$,
because $M_{0,r}/m_r\to0$ for every branch. Therefore, this branch-wise correction has no first-order effect on the asymptotic distribution derived above, but it is the relevant finite-sample sample size used in the numerical and empirical implementations.
\end{remark}

\subsection{GL-KS estimator}
The previous section derived the asymptotic distribution of the GL--KS statistic under a fixed null value of the Hurst parameter. We now invert this testing procedure and define an estimator of the unknown self-similarity exponent. Let \(H_0\in(1/2,1)\) denote the true Hurst parameter of the underlying fractional Brownian motion. Let \(\mathbb H\subset(0,1)\) be a compact parameter set such that \(H_0\in\mathbb H\). In this subsection, we denote by \(\theta\in\mathbb H\) a generic candidate value of the Hurst parameter, in order to avoid confusion with the step size $h$ of the GL operator. For every \(\theta\in\mathbb H\), we construct the GL-filtered unit-scale sample and the GL-filtered multi-scale sample, where the latter is rescaled by \(a^{-\theta}\). We denote the corresponding empirical distribution functions by \(\widetilde F_n\) and \(\widetilde G_{m,\theta}\), respectively. The GL--KS criterion is defined as
\begin{equation}\nonumber
\widetilde D_{n,m}(\theta) = \sup_{x\in\mathbb R} \left| \widetilde F_n(x) - \widetilde G_{m,\theta}(x) \right|.
\end{equation}

The GL--KS estimator of the Hurst parameter is then
\begin{equation}\nonumber
\widehat H_{GLKS} = \operatorname*{\argmin}_{\theta\in\mathds{H}} \widetilde D_{n,m}(\theta). 
\end{equation} 
We now describe the population criterion associated with this estimator. Under the true value \(H_0\), the standardized GL-filtered unit-scale sample has limiting distribution \(\mathcal N(0,1)\). On the other hand, if the multi-scale sample is rescaled by \(a^{-\theta}\), then its limiting variance is proportional to \( a^{2(H_0-\theta)}. \) 

Therefore, the limiting distribution function of the rescaled multi-scale sample is 
\[ x \mapsto \Phi\left(a^{\theta-H_0}x\right). \] 
Hence, the deterministic counterpart of the empirical difference is \[ M(\theta,x) = \Phi(x) - \Phi\left(a^{\theta-H_0}x\right), \] and the corresponding population GL--KS criterion is \[ D(\theta) = \sup_{x\in\mathbb R} \left| M(\theta,x) \right|. \] At the true value \(\theta=H_0\), we have \( M(H_0,x)=0 \) for every \(x\in\mathbb R\), and therefore \( D(H_0)=0. \) Moreover, if \(\theta\neq H_0\), then the two Gaussian marginal distributions are different, so that \( D(\theta)>0. \) Thus \(H_0\) is the unique minimizer of the population criterion. In particular, if \[ \sup_{\theta\in\mathds H} \left| \widetilde D_{n,m}(\theta) - D(\theta) \right| \overset{P}{\longrightarrow} 0, \] then the usual $\argmin$ consistency argument gives \( \widehat H_{GLKS} \overset{P}{\longrightarrow} H_0. \) 

The next proposition gives the local asymptotic distribution of the GL--KS estimator and the corresponding expression for its asymptotic variance.

\begin{proposition}[Local distribution and asymptotic variance of the GL--KS estimator]\label{prop:variance} Let $B^{H_0}_t$ be a fractional Brownian motion with $H_0\in(1/2,1)$. Let the GL filter order $\alpha>0$ satisfy $ H_0-\alpha<\frac12.$ Assume that the conditions of Theorem \ref{thm:GL_KS_limit} and Lemma \ref{lem:burn-in} hold. Assume also that the empirical process is locally asymptotically equicontinuous with respect to the parameter $\theta$ in a neighbourhood of $H_0$. 

Let $r_{n,m} = \sqrt{\frac{nm}{n+m}}.$ If $H_0\in\operatorname{int}(\mathds H)$ and the limiting $\argmin$ below is almost surely unique, then 
\[ r_{n,m} \left( \widehat H_{GLKS} - H_0 \right) \Rightarrow T_{KS}, \]
where $T_{KS} = \operatorname*{\argmin}\limits_{t\in\mathbb R} \left\| \widetilde U - t\ell_a \right\|_\infty, $ with $\ell_a(x) = (\log a)x\phi(x),$ and where $\widetilde U$ is the centered Gaussian process appearing in Theorem \ref{thm:GL_KS_limit}. Consequently, if the sequence \(r_{n,m}^2(\widehat H_{GLKS}-H_0)^2\) is uniformly integrable, then \[ r_{n,m}^2 \operatorname{Var} \left( \widehat H_{GLKS} \right) \longrightarrow \sigma_{KS}^2, \] where \( \sigma_{KS}^2 = \operatorname{Var} \left( T_{KS} \right). \) Equivalently, \[ \operatorname{Var} \left( \widehat H_{GLKS} \right) = \frac{n+m}{nm} \sigma_{KS}^2 + o\left( \frac{n+m}{nm} \right). \] In the balanced case $m\simeq n$, this becomes $\operatorname{Var} \left( \widehat H_{GLKS} \right) = \frac{2}{n} \sigma_{KS}^2 + o\left( \frac{1}{n} \right).$ \end{proposition}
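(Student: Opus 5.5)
The plan is to follow the standard argmin (M-estimation) route for minimum-distance estimators of Kolmogorov type, in three steps. The local parameter is $t=r_{n,m}(\theta-H_0)$, and $\widehat t=r_{n,m}(\widehat H_{GLKS}-H_0)$ minimizes the rescaled criterion
\[
\mathbb Z_{n,m}(t):=r_{n,m}\widetilde D_{n,m}\bigl(H_0+t/r_{n,m}\bigr)=\Bigl\|r_{n,m}\bigl(\widetilde F_n-\widetilde G_{m,H_0+t/r_{n,m}}\bigr)\Bigr\|_\infty .
\]
Since $\widehat H_{GLKS}\overset{P}{\to}H_0$ (consistency, discussed before the proposition), it suffices to identify the limit of $\mathbb Z_{n,m}$ on compacts of $t$ and then prove tightness of $\widehat t$.

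\textbf{Step 1: local expansion of the criterion.} Rescaling the filtered multi-scale sample by $a^{-\theta}$ instead of $a^{-H_0}$ multiplies each $\widetilde Y_j$ by $a^{H_0-\theta}$. Hence
\[
\widetilde G_{m,\theta}(x)=\widetilde G_{m,H_0}\bigl(a^{\theta-H_0}x\bigr).
\]
Decompose
\[
r_{n,m}\bigl(\widetilde F_n-\widetilde G_{m,\theta}\bigr)(x)=r_{n,m}\bigl(\widetilde F_n-\widetilde G_{m,H_0}\bigr)(x)-r_{n,m}\bigl[\widetilde G_{m,H_0}(a^{\theta-H_0}x)-\widetilde G_{m,H_0}(x)\bigr].
\]
\begin{itemize}
\item The first term converges to $\widetilde U$ in $\ell^\infty(\mathbb R)$ by Theorem \ref{thm:GL_KS_limit}. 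Lemma \ref{lem:burn-in} lets us use truncated and untruncated ECDFs interchangeably.
\item In the second term, write $\widetilde G_{m,H_0}=\Phi+m^{-1/2}\widetilde\gamma_m$. The deterministic part is
\[
r_{n,m}\bigl[\Phi(a^{t/r_{n,m}}x)-\Phi(x)\bigr]=t(\log a)\,x\phi(x)+o(1)=t\,\ell_a(x)+o(1),
\]
uniformly in $x$ and in $|t|\le K$, because $x^2\phi'(x)$ is bounded.
\item The stochastic part is $\sqrt{r_{n,m}^2/m}\,\bigl[\widetilde\gamma_m(a^{t/r}x)-\widetilde\gamma_m(x)\bigr]$. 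It is $o_P(1)$ uniformly on $|t|\le K$ by the assumed local asymptotic equicontinuity, together with continuity of the limit $U_Y$ under the metric $\rho(x,y)=|\Phi(x)-\Phi(y)|$. Note that $\sup_x|\Phi(a^{s}x)-\Phi(x)|\to0$ as $s\to0$.
\end{itemize}
Together these give $\mathbb Z_{n,m}(t)\Rightarrow\mathbb Z(t):=\|\widetilde U-t\ell_a\|_\infty$ in $\ell^\infty([-K,K])$ for every $K$.

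\textbf{Step 2: tightness of $\widehat t$, the main obstacle.} The difficulty is that uniform convergence holds only on compacts. I would use the population separation
\[
\|M(\theta,\cdot)\|_\infty\ge c\,|\theta-H_0|\quad\text{near }H_0,
\]
which follows from $\partial_\theta M(H_0,x)=-\ell_a(x)$ and $\|\ell_a\|_\infty>0$. Combining it with the triangle inequality gives
\[
r_{n,m}\widetilde D_{n,m}(\theta)\ge c\,r_{n,m}|\theta-H_0|-\sup_{|\theta-H_0|\le\delta}\Bigl\|r_{n,m}\bigl(\widetilde F_n-\widetilde G_{m,\theta}\bigr)-r_{n,m}M(\theta,\cdot)\Bigr\|_\infty .
\]
The supremum on the right is $O_P(1)$ by Theorem \ref{thm:GL_KS_limit} and equicontinuity. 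Since $r_{n,m}\widetilde D_{n,m}(\widehat H)\le r_{n,m}\widetilde D_{n,m}(H_0)=O_P(1)$, this forces $r_{n,m}|\widehat H-H_0|=O_P(1)$.

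\textbf{Step 3: conclusion.} The limit $\mathbb Z$ is convex in $t$ and tends to $\infty$ as $|t|\to\infty$. With the assumed almost-sure uniqueness of its argmin and the tightness from Step 2, the argmax continuous mapping theorem (van der Vaart--Wellner, Thm.~3.2.2) yields $\widehat t\Rightarrow T_{KS}$. Convergence in distribution plus uniform integrability of $\widehat t^{\,2}$ gives convergence of the first two moments. Hence
\[
r_{n,m}^2\operatorname{Var}(\widehat H_{GLKS})\to\operatorname{Var}(T_{KS})=\sigma^2_{KS}.
\]
The displayed variance expansions follow by substituting $r_{n,m}^2=nm/(n+m)$, and taking $m\simeq n$ gives the factor $2/n$.
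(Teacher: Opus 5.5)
Your proposal is correct and follows essentially the same route as the paper: the local expansion $M(\theta,x)=-(\theta-H_0)\ell_a(x)+o(|\theta-H_0|)$, the reparametrization $\theta=H_0+t/r_{n,m}$, Theorem \ref{thm:GL_KS_limit} together with the assumed equicontinuity to obtain $\|\widetilde U-t\ell_a\|_\infty$ on compacts, then the argmin continuous mapping theorem, and finally uniform integrability for the variance. Your Step 2, the bound $r_{n,m}|\widehat H_{GLKS}-H_0|=O_P(1)$ via the linear separation $\|M(\theta,\cdot)\|_\infty\ge c|\theta-H_0|$, supplies the uniform tightness hypothesis that the paper's proof uses implicitly when it invokes the argmin theorem but never verifies; your identity $\widetilde G_{m,\theta}(x)=\widetilde G_{m,H_0}(a^{\theta-H_0}x)$ also makes the equicontinuity step concrete.
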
 \begin{proof} We study the local behavior of the criterion in a shrinking neighbourhood of \(H_0\). The deterministic component of the difference between the two limiting distribution functions is 
\[ M(\theta,x) = \Phi(x) - \Phi\left(a^{\theta-H_0}x\right). \]
Differentiating with respect to $\theta$, we obtain 
\[ \frac{\partial}{\partial\theta} M(\theta,x) = - (\log a) a^{\theta-H_0} x \phi\left(a^{\theta-H_0}x\right). \] 
Therefore, at $\theta=H_0$, 
\[ \dot M_{H_0}(x) = \left. \frac{\partial}{\partial\theta} M(\theta,x) \right|_{\theta=H_0} = - (\log a)x\phi(x). \] 
Set $ \ell_a(x) = (\log a)x\phi(x).$ Then $\dot M_{H_0}(x) = -\ell_a(x).$ Hence, uniformly in \(x\in\mathbb R\), \[ M(\theta,x) = - (\theta-H_0)\ell_a(x) + o\left( |\theta-H_0| \right),\quad \text{as}\quad \theta\to H_0.\] 

Now take a local perturbation of the form $\theta = H_0 + \frac{t}{r_{n,m}}.$ Then \[ r_{n,m} M\left( H_0+\frac{t}{r_{n,m}}, x \right) \longrightarrow - t\ell_a(x). \] 
By Theorem \ref{thm:GL_KS_limit}, at the true parameter value, \[ r_{n,m} \left[ \widetilde F_n(x) - \widetilde G_{m,H_0}(x) \right] \Rightarrow \widetilde U(x) \qquad \text{in } \ell^\infty(\mathbb R).\] Combining this convergence with the previous local expansion and the assumed local stochastic equicontinuity gives, for every compact set \(K\subset\mathbb R\), \[ r_{n,m} \widetilde D_{n,m} \left( H_0+\frac{t}{r_{n,m}} \right) \Rightarrow \left\| \widetilde U - t\ell_a \right\|_\infty \qquad \text{in } \ell^\infty(K).\] Since $H_0$ is the unique minimizer of the population criterion and the limiting minimizer is assumed to be almost surely unique, the $\argmin$ continuous mapping theorem yields \[r_{n,m} \left( \widehat H_{GLKS} - H_0 \right) \Rightarrow \operatorname*{\argmin}_{t\in\mathbb R} \left\| \widetilde U - t\ell_a \right\|_\infty.\] This proves the claimed weak convergence, with $T_{KS} = \operatorname*{\argmin}_{t\in\mathbb R} \left\| \widetilde U - t\ell_a \right\|_\infty.$ The variance statement follows from the convergence of second moments. If \(r_{n,m}^2(\widehat H_{GLKS}-H_0)^2\) is uniformly integrable, then $r_{n,m}^2 \operatorname{Var} \left( \widehat H_{GLKS} \right) \longrightarrow \operatorname{Var} \left( T_{KS} \right).$ Since $r_{n,m}^{-2} = \frac{n+m}{nm},$ we obtain \[ \operatorname{Var} \left( \widehat H_{GLKS} \right) = \frac{n+m}{nm} \operatorname{Var} \left( T_{KS} \right) + o\left( \frac{n+m}{nm} \right). \] This concludes the proof. \end{proof}

\begin{remark}[The short-memory case as the limiting case ($\alpha=0$)]\label{rmk:GL_for_SRD}
    Theorem~\ref{thm:GL_KS_limit} has been stated for the long-memory case $H_0>1/2$, where the Gr\"{u}nwald--Letnikov derivative is needed in order to bring the empirical process back to the short-memory domain. However, the same local argument also covers the case $H_0\le 1/2$ by setting the fractional order equal to zero.

Indeed, when $H_0<1/2$, the fGn autocovariance sequence is absolutely summable, while for $H_0=1/2$ the increments are independent. Therefore, the obstruction created by long-range dependence is absent from the outset. In this case one may set
\[\alpha=0, \Delta^{GL,0}_h=(1-L_h)^0=I,\] so that the GL-filtered statistic reduces to the unfiltered KS-type statistic. No pre-sample fractional-filter truncation is introduced, and the burn-in correction can be omitted. Consequently, the local expansion of the population criterion remains unchanged:
\[
M(\theta,x)
=
\Phi(x)-\Phi\left(a^{\theta-H_0}x\right),
\qquad
\dot M_{H_0}(x)
=
-(\log a)x\phi(x).
\]
The same argmin argument therefore gives
\[
r_{n,m}\left(\widehat H_{\mathrm{KS}}-H_0\right)
\Rightarrow
\argmin_{t\in\mathbb R}
\left\|
U-t\ell_a
\right\|_\infty,
\qquad
\ell_a(x)=(\log a)x\phi(x),
\]
where $U$ denotes the centered Gaussian limit of the unfiltered short-memory empirical process specified in Proposition \ref{prop:SRD}. Thus, the variance formula in Proposition~\ref{prop:variance} remains valid in the short-memory and independent regimes after replacing the filtered limiting process by its unfiltered counterpart and taking $\alpha=0$.

\end{remark}

\begin{remark}\label{rmk:role_scaling}
The scaling factor \(a\) plays a direct role in the local identification of the Hurst parameter. Around the true value \(H_0\), the deterministic component of the criterion satisfies
\[
M(\theta,x)
=
-(\theta-H_0)(\log a)x\phi(x)
+
o(|\theta-H_0|).
\]
Hence the local drift of the criterion is proportional to \(\log a\). Larger values of \(a\) increase the separation between the unit-scale distribution and the rescaled multi-scale distribution when \(\theta\neq H_0\). In this sense, increasing \(a\) steepens the local objective function and tends to improve the precision of the Hurst estimate.

This effect is also visible in the limiting argmin representation
\[
T_{\mathrm{KS}}(a)
=
\argmin_{t\in\mathbb R}
\left\|
\widetilde U-t\ell_a
\right\|_\infty,
\qquad
\ell_a(x)=(\log a)x\phi(x).
\]
If the covariance structure of the limiting process were kept fixed as $a$ varies, the deterministic slope would imply an approximate variance reduction of order $(\log a)^{-2}$. Therefore, from a purely local-identification perspective, larger scales provide a stronger signal for distinguishing nearby Hurst parameters.

In finite samples, however, this gain is counterbalanced by the structure of the crossed sample. As $a$ increases, each branch $r=0,\ldots,a-1$ contains fewer observations, and the burn-in deletion is applied branch by branch. Thus the effective multi-scale sample size $m_{\mathrm{eff}}
=
\sum_{r=0}^{a-1}(m_r-M_{0,r})$ may decrease relative to the nominal pooled size, especially for large $a$ and moderate $N$. Very large scales may therefore improve identification through the factor $\log a$, but at the cost of shorter branches, stronger finite-sample effects, and a smaller effective sample after burn-in correction.

The choice of $a$ should consequently be interpreted as a bias--variance and identification--sample-size trade-off. Moderate values of $a$ are expected to improve power and reduce estimator dispersion, whereas excessively large values can deteriorate finite-sample stability because the crossed branches become too short.
    
\end{remark}

\begin{remark}[Regime-adaptive implementation]\label{rmk:regime_adaptive}
The GL filter is not used to redefine the Hurst exponent. It is used only to select a fast-converging limiting distribution when the estimated regime is persistent. In the empirical implementation in Section \ref{sec:Application}, we first estimate $H$ from
the distributional self-similarity criterion. If $\widehat H\le 1/2$, we set $\alpha=0$, use the short-memory limit, and no burn-in correction is applied. If $\widehat H>1/2$, we choose $\alpha$ such that $\widehat H-\alpha<1/2$, apply the GL filter, and compute the GL-KS statistic with the branch-wise effective sample sizes. This plug-in regime selection is asymptotically valid away from the boundary $H=1/2$, because $\widehat H\to H_0$ in probability.
\end{remark}

The theoretical results above are asymptotic. The next section verifies whether the GL-KS statistic retains its predicted stabilization in finite samples, whether the burn-in deletion is sufficient to remove pre-sample truncation effects, and whether the estimator remains reliable across both LRD and SRD regimes.

\section{Computational Analysis}
\label{sec:computational_analysis}

In this section, we present a comprehensive Monte Carlo simulation study designed with a two-fold objective: first, to empirically validate the asymptotic invariance and phase-transition bypass established in Theorem \ref{thm:GL_KS_limit}; and second, to evaluate the finite-sample performance of the Lemma \ref{lem:burn-in} boundary conditions in mitigating pre-history truncation bias. The computational analysis systematically contrasts the standard KS test against our proposed fractional Gr\"{u}nwald-Letnikov filtered alternative GL-KS.

\subsection{General Monte Carlo Design}
\label{subsec:setup}
To ensure statistical exactness and prevent the structural distortions often induced by Cholesky factorizations or truncated spectral methods, the data generating process employs the circulant embedding technique pioneered by Wood and Chan in \citep{WoodChan1994}. This approach enables the exact synthesis of fBm trajectories by decomposing the associated embedded circulant covariance matrix. It preserves the underlying LRD structure intact while maintaining an optimal computational complexity of $\mathcal{O}(N \log N)$ via the Fast Fourier Transform. Any fractional Brownian motion was generated using \textit{fbmwoodchan()}  MATLAB function, available in the FracLab Toolbox 2.02 (INRIA package).

The experimental parameter space for the Monte Carlo framework is calibrated as follows:
\begin{itemize}
    \item \textit{Hurst exponent}. In the LRD regime we tested the true Hurst exponent $H_{true}=\{0.51,0.7,0.9\}$, while in the SRD regime we tested $H_{true}=\{0.1,0.2,0.3,0.4,0.5\}$. The candidate Hurst values used in the minimization procedure are evaluated on the grid $\mathds H\in[0.001,0.999]$ with step $\Delta \mathds H = 0.001$.
    \item \textit{Fractional GL order} $\alpha$. In the LRD regime, the differencing parameter $\alpha$ is chosen to strictly enforce the short memory regime. Specifically, $\alpha$ is selected such that $H - \alpha = 0.25$ to enable a consistent comparison across all simulations. In the SRD regime, we use the unfiltered short-memory implementation of Remark~\ref{rmk:GL_for_SRD}.
    \item \textit{Burn-in exponent} $\gamma$. For the GL-KS statistic, $\gamma$ is selected above the theoretical threshold of Lemma~\ref{lem:burn-in} to control the pre-sample truncation error. Unless otherwise stated, its value is fixed within each experiment; in the burn-in sensitivity analysis, $\gamma$ is varied over a grid. No burn-in correction is applied when $\alpha=0$.
    \item \textit{Sample size} $N$. To investigate the rate of asymptotic convergence, fBm trajectories are simulated across five progressive lengths: $N \in \{100, 250, 500, 1000, 5000\}$.
    \item \textit{Scaling factor} $a$. In alignment with the underlying cross-scaled fGn framework, the scaling factor is varied over $a \in \{2,3,4,5,10,20\}$.
    \item \textit{Monte Carlo replications} $\mathcal M$. For each parameter configuration, we perform $\mathcal M=10{,}000$ independent Monte Carlo replications. This number of replications provides stable estimates of the empirical distributions, rejection frequencies, and critical quantiles used in the finite-sample analysis.
\end{itemize}

\subsection{Convergence Acceleration and Asymptotic Invariance}
\label{subsec:convergence}

The first numerical experiment is dedicated to demonstrating the collapse of the LRD-driven phase transition discussed in Theorem \ref{thm:GL_KS_limit}. As documented in the classical literature (e.g., Dehling and Taqqu, 1989), applying non-parametric goodness-of-fit tests directly to long-memory processes compromises the standard convergence rate—slowing it down to $n^{1-H}$—and causes the limiting distribution to become degenerate and structurally dependent on the unknown parameter $H$.

By introducing the fractional GL filter and scaling the resultant $\tilde{D}_{n,m}$ statistic by the standard $\sqrt{n}$ rate, asymptotic invariance is successfully rehabilitated. We monitor this phenomenon by tracking the evolution of the ECDFs as $N$ increases.

\begin{figure}[!ht]
    \centering
    \includegraphics[width=0.85\linewidth]{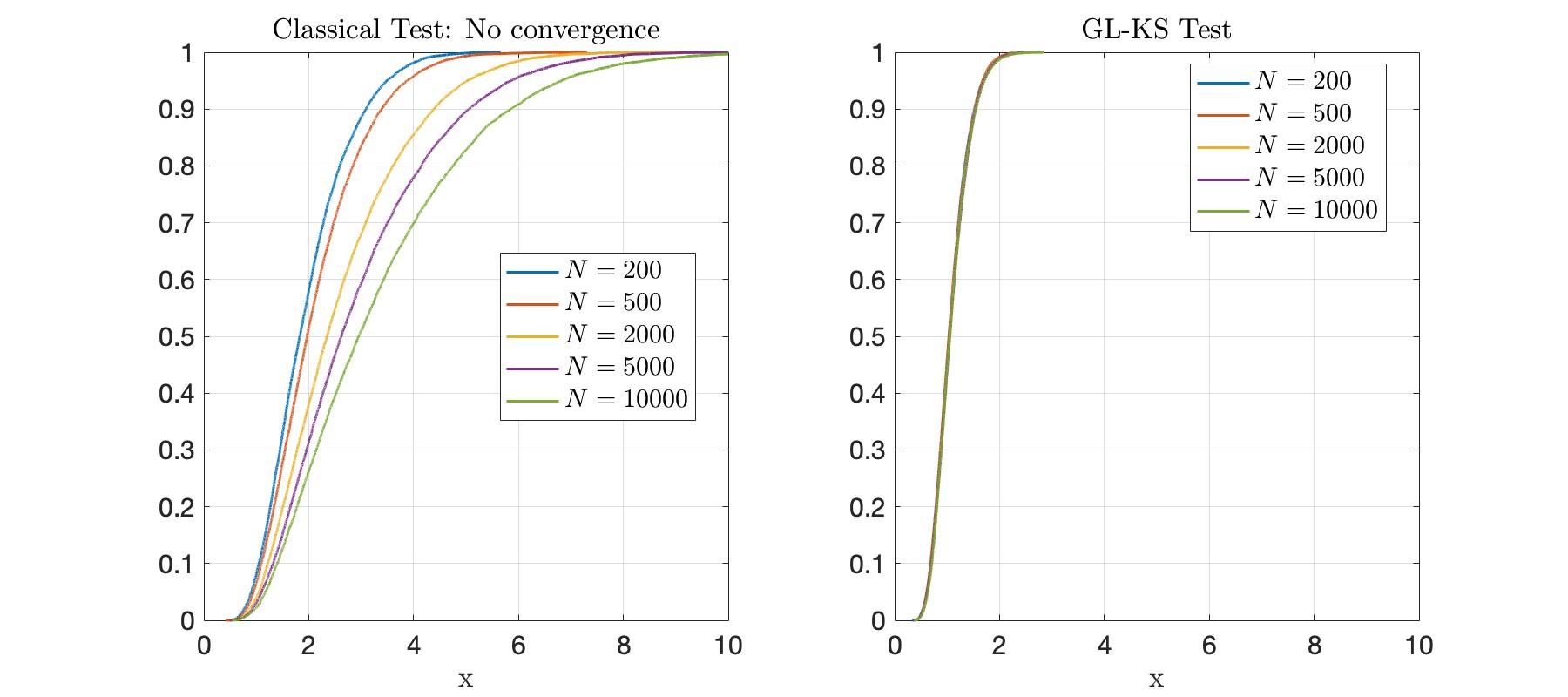}
    \caption{\scriptsize Empirical null distributions of the normalized KS-type statistic for increasing sample sizes. The left panel reports the classical unfiltered statistic under long-range dependence, showing the lack of stabilization under the standard $\sqrt n$ normalization. The right panel reports the GL-KS statistic after fractional filtering, whose empirical distributions collapse onto a common limiting curve, consistently with the short-memory asymptotic regime restored by the GL derivative.}
    \label{fig:normalization}
\end{figure}
Figure~\ref{fig:normalization} illustrates the stabilization effect predicted by Theorem~\ref{thm:GL_KS_limit}. In the left panel, the unfiltered classical statistic exhibits ECDF curves that shift systematically to the right as $N$ expands, verifying that the $\sqrt{n}$ normalization is mathematically misspecified under LRD. Conversely, the right panel illustrates that the GL-KS test statistics are much more stable across sample sizes, even at a moderate sample size of $N=200$. This provides visual evidence that the fractional filter successfully neutralizes the singularity at the origin of the spectral density, effectively relocating the process into the domain of attraction of short-memory functionals.

\subsection{Finite-Sample Properties: Size and Power}
\label{subsec:size_power}

We now investigate the finite-sample inferential performance of the proposed statistic. The analysis focuses on two complementary aspects: empirical size under the null hypothesis and power against misspecified Hurst exponents.

For each Monte Carlo replication, a fractional Brownian motion with parameter $H_{\mathrm{true}}$ is generated, and the corresponding unit-scale and crossed multi-scale increments are constructed as in Section~\ref{subsec:setup}. The null hypothesis is $\mathcal H_0:H_{\mathrm{test}}=H_{\mathrm{true}},$ where $H_{\mathrm{test}}$ is the value used to rescale the multi-scale sample. Under the null, the rescaled unit-scale and multi-scale samples should have the same distribution. Rejections are therefore driven by the two-sample KS-type distance between the corresponding empirical distribution functions.

In the persistent case $H_{\mathrm{true}}>1/2$, two statistics are compared. The first one is the unfiltered statistic, which is affected by the long-memory phase transition described in Proposition~\ref{prop:LRD}. The second one is the GL--KS statistic. For the filtered statistic, the GL order is chosen so that the effective memory parameter satisfies $H_{\mathrm{true}}-\alpha=0.25,$ and the burn-in exponent is selected above the lower bound of Lemma~\ref{lem:burn-in}. This ensures that the filtered process lies in the short-memory domain while preserving the self-similarity parameter identified by the KS criterion.

\begin{table}[!ht]
\centering
\caption{Empirical size in the long-memory regime.}
\label{tab:empirical_size}
\begin{scriptsize}
\begin{tabular}{c|c|cccccc}
\toprule
\hline
& & \multicolumn{3}{c}{\textbf{Unfiltered KS Test}} & \multicolumn{3}{c}{\textbf{GL-KS Test}} \\
\cline{3-8}
\multirow{2}{*}{Sign. Level}& & $H=0.51$ & $H=0.70$ & $H=0.90$ & $H=0.51$ & $H=0.70$ & $H=0.90$ \\
&\diagbox{$N$}{$\alpha$} & - & - & - & $0.26$ & $0.45$ & $0.65$\\
\hline
\multirow{5}{*}{$1\%$}
&$100$  & $4.44\%$ & $7.50\%$ & $47.12\%$ & $0.90\%$  & $1.08\%$ & $1.06\%$\\
&$250$  & $5.35\%$ & $6.07\%$ & $30.30\%$ & $1.16\%$ & $1.21\%$ & $0.98\%$\\
&$500$  & $5.84\%$ & $4.74\%$ & $19.49\%$ & $1.17\%$ & $1.16\%$ & $1.14\%$\\
&$1000$  & $6.87\%$ & $3.71\%$ & $12.31\%$ & $0.96\%$ & $1.08\%$ & $1.04\%$\\
&$5000$  & $6.19\%$ & $2.40\%$ & $5.14\%$ & $0.90\%$ & $1.06\%$ & $1.06\%$ \\
\hline
\multirow{5}{*}{$5\%$}
&$100$  & $17.48\%$ & $24.26\%$ & $76.95\%$ & $4.82\%$ & $5.11\%$ & $5.09\%$\\
&$250$ & $19.75\%$ & $19.30\%$ & $58.91\%$ & $4.80\%$ & $5.04\%$ & $5.01\%$\\
&$500$ & $20.54\%$ & $16.95\%$ & $43.78\%$ & $5.36\%$ & $5.04\%$ & $4.83\%$\\
&$1000$ & $22.10\%$ & $14.36\%$ & $31.33\%$  & $4.70\%$ & $5.12\%$ & $5.14 \%$\\
&$5000$ & $21.39\%$ & $9.45\%$ & $16.27\%$ & $ 4.95\%$ & $ 4.94\%$ & $5.22\%$\\
\hline
\multirow{5}{*}{$10\%$}
&$100$  & $30.39\%$ & $39.82\%$ & $88.51\%$ & $9.85\%$ & $9.96\%$ & $10.17\%$\\
&$250$  & $34.29\%$ & $33.30\%$ & $75.87\%$ & $9.99\%$ & $10.23\%$ & $9.77\%$\\
&$500$  & $35.26\%$ & $29.64\%$ & $61.28\%$ & $10.43\%$ & $10.15\%$ & $10.28\%$\\
&$1000$  & $36.57\%$ & $25.17\%$ & $46.74\%$ & $10.15\%$ & $10.47\%$ & $10.22\%$\\
&$5000$  & $35.88\%$ & $17.51\%$ & $27.79\%$ & $10.10\%$ & $10.12\%$ & $10.58\%$ \\
\bottomrule
\end{tabular}
\end{scriptsize}
\vspace{0.15cm} \begin{minipage}{0.94\textwidth} \footnotesize \emph{Notes.} Rejection frequencies are reported at the \(1\%\), \(5\%\), and \(10\%\) nominal levels under \(\mathcal H_0:H_{\mathrm{test}}=H_{\mathrm{true}}\), with \(a=20\). The unfiltered statistic is compared with the GL--KS statistic. In the GL--KS columns, \(\alpha\) is chosen so that \(H_{\mathrm{true}}-\alpha=0.25\), and the burn-in correction is applied branch-wise. The analysis was performed with $\mathcal M = 10,000$ Monte Carlo simulations.\end{minipage}
\end{table}

Critical values are obtained by Monte Carlo calibration under the corresponding null configuration. This is important because the limiting law is not the classical distribution-free two-sample KS law: the two samples are extracted from the same trajectory and the limiting covariance depends on the auto-covariance and cross-covariance structure of the filtered empirical process. The purpose of the simulation is therefore to compare the finite-sample rejection frequencies of the unfiltered and filtered statistics at the same nominal levels. Table~\ref{tab:empirical_size} reports empirical rejection frequencies at the \(1\%\), \(5\%\), and \(10\%\) nominal levels for \(H_{\mathrm{true}}\in\{0.51,0.70,0.90\}\), with scaling factor \(a=20\). The unfiltered statistic displays substantial size distortions in the long-memory regime. The distortion is especially severe for large values of \(H\), where the autocorrelation function decays very slowly and the standard \(\sqrt n\)-type empirical-process approximation is no longer appropriate. For example, when \(H=0.90\), the unfiltered rejection rates remain far above their nominal levels even for large sample sizes. By contrast, the GL--KS statistic remains well calibrated across all three persistent regimes. Its empirical rejection frequencies stay close to the nominal \(1\%\), \(5\%\), and \(10\%\) levels, with only minor finite-sample fluctuations. This confirms the main theoretical implication of Theorem~3.7: after fractional filtering, the statistic returns to a short-memory empirical-process regime and the standard \(\sqrt n\) normalization becomes appropriate again.

\begin{figure}[!ht]
    \centering
    \includegraphics[width=0.49\linewidth]{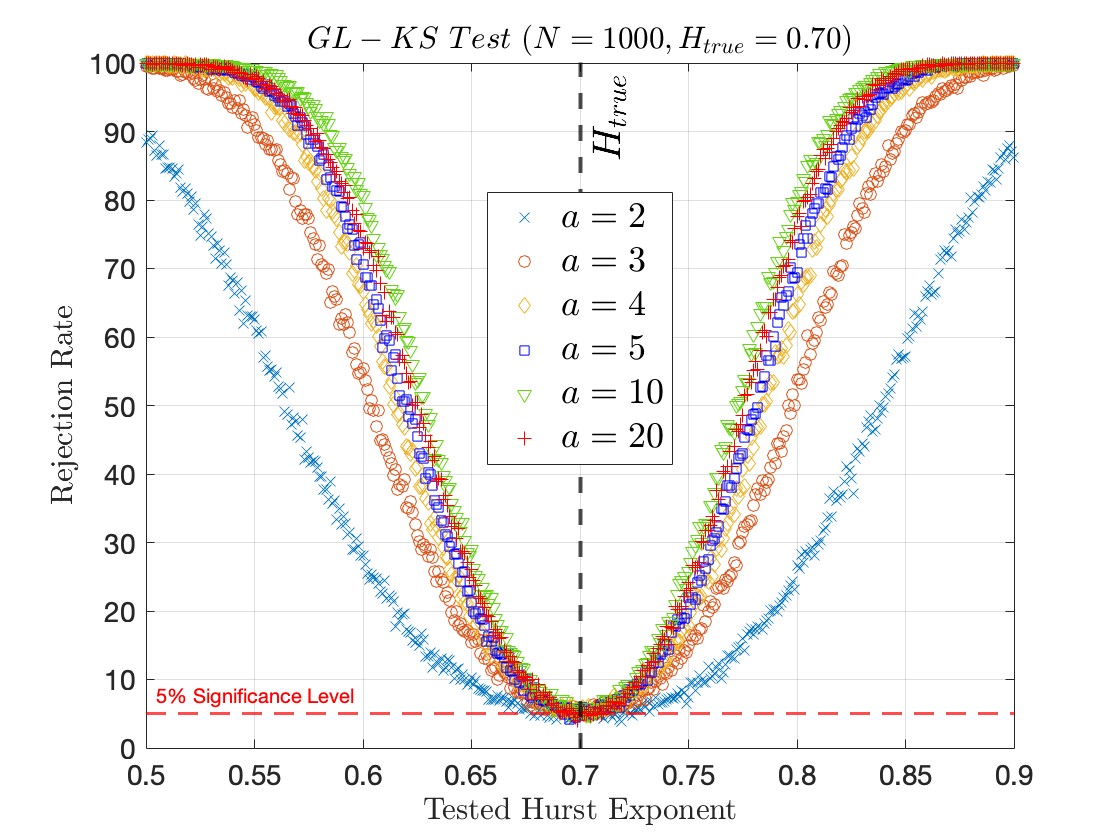}
    \includegraphics[width=0.49\linewidth]{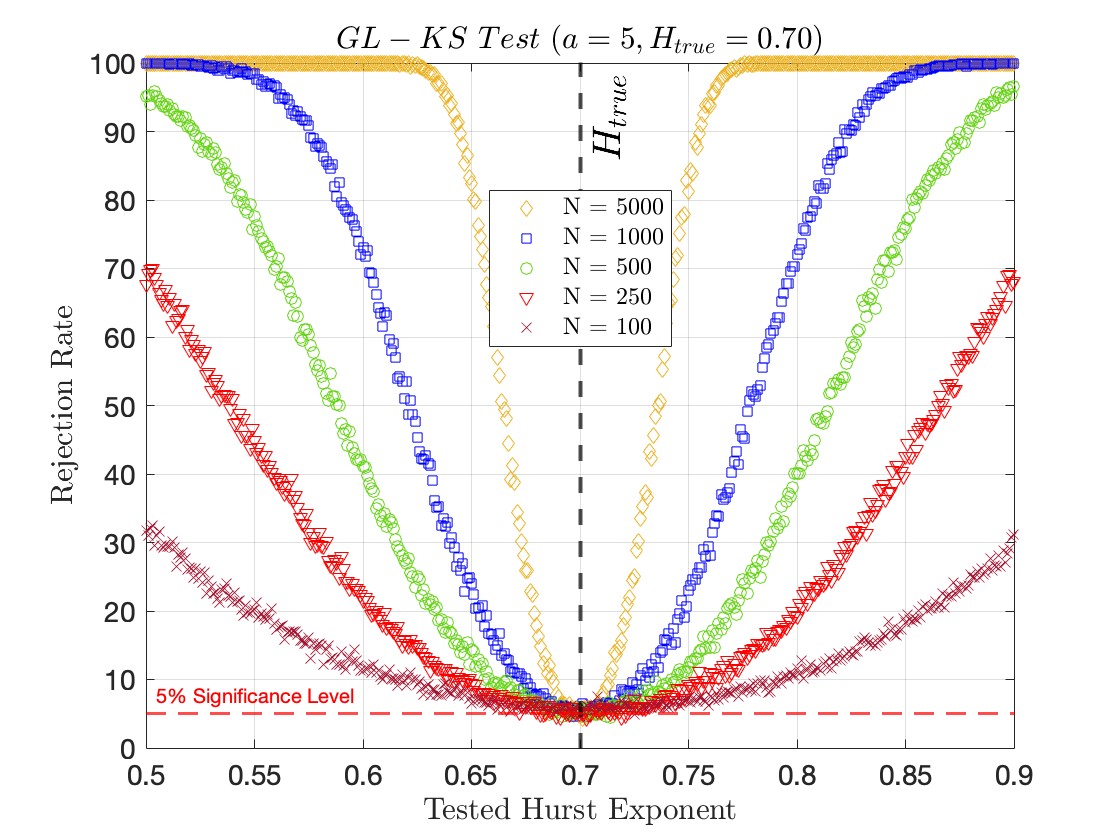}
    \caption{\scriptsize Empirical power of the GL-KS test against misspecified Hurst exponents. The true value $H_{\mathrm{true}}$ is marked by the vertical dashed line, while the horizontal dashed line denotes the 5\% significance level. The left panel shows the effect of the scaling factor $a$ for fixed sample size, whereas the right panel shows the sharpening of the rejection curve as the sample size $N$ increases.}
    \label{fig:PowerTest}
\end{figure}

The second experiment evaluates power. We fix the true value at $H_{\mathrm{true}}=0.70$ and vary the null value $H_{\mathrm{test}}$ over a grid around the true parameter. For each tested value, the multi-scale sample is rescaled by $a^{-H_{\mathrm{test}}}$. Hence, when $H_{\mathrm{test}}\neq H_{\mathrm{true}}$, the unit-scale and multi-scale samples no longer have the same distribution, and the test should reject. In the persistent part of the grid, the GL order is selected under the tested null so that $H_{\mathrm{test}}-\alpha=0.25.$ At the boundary and in the short-memory part of the grid, $\alpha=0$ is used, consistently with Remark~\ref{rmk:GL_for_SRD}. Figure~\ref{fig:PowerTest} reports the resulting empirical power curves. The minimum of the rejection curve occurs at $H_{\mathrm{test}}=H_{\mathrm{true}}$, where the rejection probability is close to the nominal significance level. As $H_{\mathrm{test}}$ moves away from $H_{\mathrm{true}}$, the rejection probability increases, producing the expected V-shaped power profile. The curve becomes steeper as the sample size increases, showing that the test becomes increasingly sensitive to local departures from the null. This behavior is consistent with the local identification argument in Proposition~\ref{prop:variance}: the deterministic drift of the criterion separates incorrect Hurst values from the true one, while the stochastic component shrinks at the standard empirical-process rate after GL filtering. The power experiment also highlights the role of the scaling factor \(a\). Larger values of \(a\) increase the deterministic separation between the unit-scale distribution and the incorrectly rescaled multi-scale distribution, because the local drift is proportional to \(\log a\). At the same time, very large scales reduce branch lengths and may increase finite-sample variability. The numerical evidence therefore supports the bias--variance interpretation discussed in Remark~\ref{rmk:role_scaling}: moderate-to-large scales improve identification, but excessively large scales can reduce effective sample information.

\subsection{Sensitivity to Burn-in Initialization}
\label{subsec:burn_in_sensitivity}

The final component of the computational analysis examines the finite-sample role of the burn-in exponent. Lemma~\ref{lem:burn-in} requires
$\gamma>\gamma_{\min}:=\frac{1}{2(1+\alpha-H)}$ in order for the pre-sample truncation error to be negligible at the empirical-process scale. In this sensitivity experiment we set $H=0.70$ and $\alpha=0.50$, so that $H-\alpha=0.20$ and $\gamma_{\min}=5/8$. We therefore vary $\gamma$ over a grid containing values below and above this threshold and record the empirical rejection frequency under the null hypothesis.

Figure~\ref{fig:gamma} shows the expected finite-sample trade-off. For $\gamma<\gamma_{\min}$, the burn-in deletion is not large enough to absorb the initialization error generated by the finite GL filter. The empirical size is therefore distorted. Once $\gamma$ crosses the theoretical lower bound, the rejection frequency stabilizes around the significance level, confirming the practical relevance of Lemma~\ref{lem:burn-in}. For very large values of $\gamma$, however, the effective sample size is unnecessarily reduced. This last deterioration is not a contradiction of the lemma: the lemma only gives a lower bound ensuring asymptotic negligibility of the truncation error, while finite samples also require preserving enough observations after the branch-wise deletion.

\begin{figure}[!ht]
    \centering
    \includegraphics[width=0.5\linewidth]{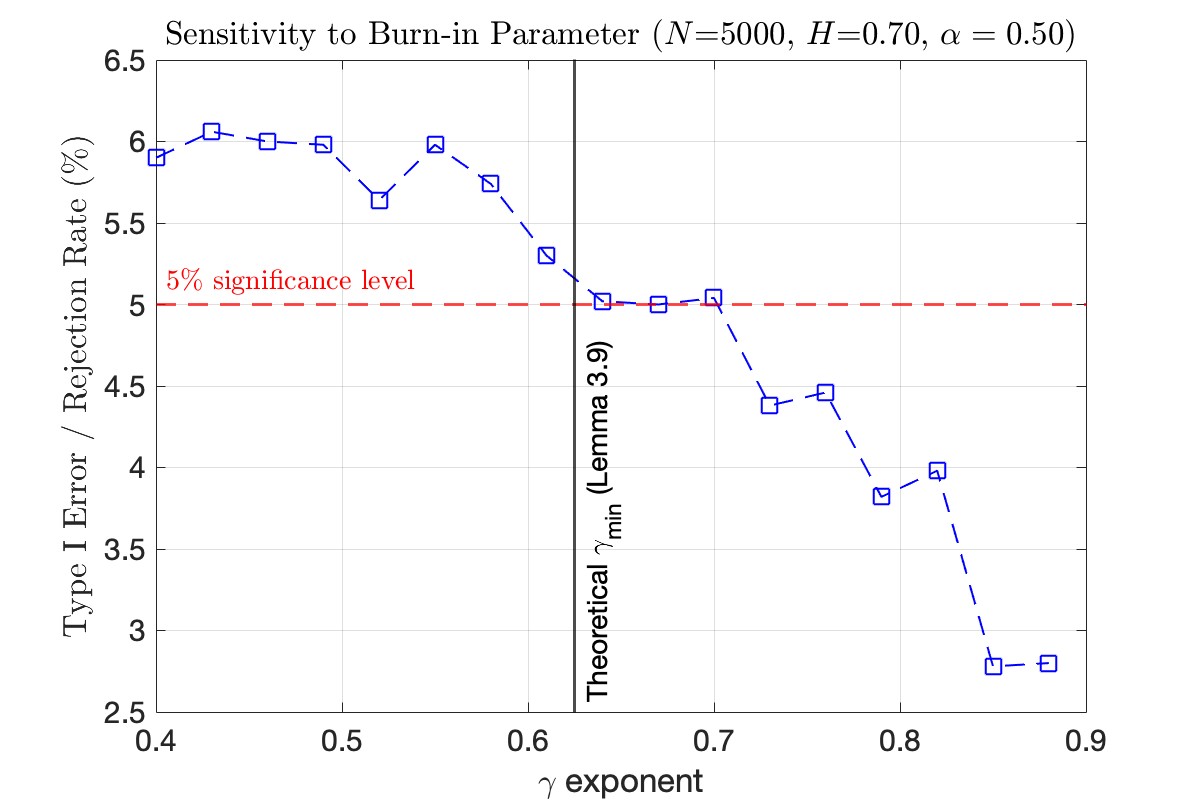}
    \caption{\scriptsize Sensitivity of the empirical size to the burn-in exponent $\gamma$. The vertical line marks the theoretical lower bound $\gamma_{\min}=1/[2(1+\alpha-H)]$, while the horizontal dashed line denotes the $5\%$ level. For $\gamma<\gamma_{\min}$, the pre-sample truncation error is not sufficiently absorbed by the burn-in deletion and the test over-rejects. Once $\gamma$ exceeds the theoretical threshold, the empirical size stabilizes around the nominal level; for very large $\gamma$, the effective sample size becomes unnecessarily depleted.}
    \label{fig:gamma}
\end{figure}

\subsection{Short-Memory Benchmark and Finite-Sample Accuracy of KS/GL-KS test}

We finally complete the Monte Carlo analysis by considering the short-memory and independent regimes \(H\leq1/2\), and by evaluating the finite-sample accuracy of the Hurst estimator. This serves two purposes: it verifies that the size distortions observed for \(H>1/2\) are genuinely caused by long-range dependence, and it validates the regime-adaptive rule of Remark~\ref{rmk:regime_adaptive}.

Accordingly, in the short-memory experiment we simulate fBm trajectories with \\$H_{\mathrm{true}}\in\{0.1,0.2,0.3,0.4,0.5\}.$ For each value of $H_{\mathrm{true}}$, the null hypothesis is $\mathcal H_0:H_{\mathrm{test}}=H_{\mathrm{true}}.$ The empirical statistic is the unfiltered short-memory statistic $D_{n,m}^{\star}=\sqrt{\frac{nm}{n+m}}\,D_{n,m},$ corresponding to the case $\alpha=0$ and $\Delta_h^{GL,0}=I$. Since no fractional filter is applied,
there is no pre-sample truncation error and therefore no burn-in deletion. The effective sample sizes coincide with the nominal crossed-sample sizes.

\begin{table}[!ht]
\centering
\caption{Empirical size in the short-memory and independent regimes.}
\label{tab:SRD}
\begin{scriptsize}
\begin{tabular}{c|c|ccccc}
\toprule
\hline
\cline{3-7}
Sign. Level& $N$ & $H=0.1$ & $H=0.2$ & $H=0.3$ & $H=0.4$ & $H=0.5$ \\
\hline
\multirow{5}{*}{$1\%$}
&$100$  & $1.13\%$ & $0.88\%$ & $1.07\%$ & $0.91\%$  & $1.05\%$ \\
&$250$  & $0.87\%$ & $1.03\%$ & $1.05\%$ & $1.00\%$ & $1.09\%$ \\
&$500$  & $0.99\%$ & $1.02\%$ & $1.00\%$ & $0.97\%$ & $1.06\%$ \\
&$1000$  & $1.08\%$ & $1.10\%$ & $1.10\%$ & $0.97\%$ & $1.06\%$\\
&$5000$  & $1.18\%$ & $1.07\%$ & $1.01\%$ & $0.92\%$ & $1.12\%$  \\
\hline
\multirow{5}{*}{$5\%$}
&$100$  & $5.21\%$ & $4.94\%$ & $5.03\%$ & $5.03\%$ & $4.94\%$ \\
&$250$ & $4.91\%$ & $4.90\%$ & $4.96\%$ & $4.91\%$ & $4.77\%$ \\
&$500$ & $5.34\%$ & $5.14\%$ & $4.87\%$ & $5.04\%$ & $5.21\%$ \\
&$1000$ & $5.14\%$ & $4.98\%$ & $4.91\%$  & $5.17\%$ & $5.10\%$ \\
&$5000$ & $4.99\%$ & $4.61\%$ & $5.14\%$ & $ 4.84\%$ & $ 4.90\%$\\
\hline
\multirow{5}{*}{$10\%$}
&$100$  & $10.46\%$ & $9.91\%$ & $9.78\%$ & $9.73\%$ & $9.89\%$ \\
&$250$  & $9.96\%$ & $10.56\%$ & $9.77\%$ & $10.13\%$ & $10.29\%$ \\
&$500$  & $10.28\%$ & $10.17\%$ & $9.82\%$ & $9.84\%$ & $9.84\%$ \\
&$1000$  & $9.98\%$ & $10.06\%$ & $10.08\%$ & $9.95\%$ & $10.19\%$ \\
&$5000$  & $10.22\%$ & $9.89\%$ & $10.18\%$ & $9.90\%$ & $9.81\%$ \\
\bottomrule
\end{tabular}
\end{scriptsize}
\vspace{0.15cm} \begin{minipage}{0.94\textwidth} \footnotesize \emph{Notes.} Rejection frequencies are reported at the \(1\%\), \(5\%\), and \(10\%\) nominal levels under \(\mathcal H_0:H_{\mathrm{test}}=H_{\mathrm{true}}\). Since \(H_{\mathrm{true}}\leq1/2\), the unfiltered short-memory statistic is used and no burn-in deletion is required. The analysis was performed with $\mathcal M = 10,000$ Monte Carlo simulations.\end{minipage}
\end{table}

Table~\ref{tab:SRD} shows that the unfiltered KS-type statistic is already well calibrated throughout the short-memory region. At all nominal significance levels, the empirical rejection frequencies remain close to their theoretical targets. This holds both in the anti-persistent cases $H<1/2$ and in the independent Brownian benchmark $H=1/2$.

\begin{table}[!ht]
\centering

\caption{Finite-sample accuracy of the Hurst estimator.}
\label{tab:hhat-accuracy}
\resizebox{\textwidth}{!}{%
\tiny
\begin{tabular}{rrrrrrrrrrrr}
\toprule
$H_0$ & $N$ & $a$ & $\alpha$ & $\gamma$ & $n_{\rm eff}$ & $m_{\rm eff}$ & Mean & Bias & Std & RMSE & MAE \\
\midrule

 0.10 & 1000 & 20 & 0.000 & -- & 999 & 980 & 0.1000 & -0.0000 & 0.0196 & 0.0196 & 0.0154 \\
 0.20 & 1000 & 20 & 0.000 & -- & 999 & 980 & 0.2006 & 0.0006 & 0.0248 & 0.0248 & 0.0193 \\
 0.30 & 1000 & 20 & 0.000 & -- & 999 & 980 & 0.3006 & 0.0006 & 0.0296 & 0.0296 & 0.0236 \\
 0.40 & 1000 & 20 & 0.000 & -- & 999 & 980 & 0.4018 & 0.0018 & 0.0342 & 0.0342 & 0.0271 \\
 0.50 & 1000 & 20 & 0.000 & -- & 999 & 980 & 0.5018 & 0.0018 & 0.0389 & 0.0389 & 0.0304 \\
 0.51 & 1000 & 20 & 0.260 & 0.697 & 877 & 681 & 0.5097 & -0.0003 & 0.0421 & 0.0421 & 0.0339 \\
 0.60 & 1000 & 20 & 0.350 & 0.697 & 877 & 681 & 0.6006 & 0.0006 & 0.0451 & 0.0451 & 0.0356 \\
 0.70 & 1000 & 20 & 0.450 & 0.697 & 877 & 681 & 0.6990 & -0.0010 & 0.0462 & 0.0461 & 0.0370 \\
 0.80 & 1000 & 20 & 0.550 & 0.697 & 877 & 681 & 0.8044 & 0.0044 & 0.0492 & 0.0494 & 0.0395 \\
 0.90 & 1000 & 20 & 0.650 & 0.697 & 877 & 681 & 0.9019 & 0.0019 & 0.0478 & 0.0478 & 0.0386 \\
 0.10 & 5000 & 20 & 0.000 & -- & 4999 & 4980 & 0.1000 & -0.0000 & 0.0087 & 0.0087 & 0.0069 \\
 0.20 & 5000 & 20 & 0.000 & -- & 4999 & 4980 & 0.1996 & -0.0004 & 0.0108 & 0.0108 & 0.0086 \\
 0.30 & 5000 & 20 & 0.000 & -- & 4999 & 4980 & 0.3001 & 0.0001 & 0.0127 & 0.0127 & 0.0100 \\
 0.40 & 5000 & 20 & 0.000 & -- & 4999 & 4980 & 0.4003 & 0.0003 & 0.0150 & 0.0150 & 0.0120 \\
 0.50 & 5000 & 20 & 0.000 & -- & 4999 & 4980 & 0.5008 & 0.0008 & 0.0170 & 0.0170 & 0.0135 \\
 0.51 & 5000 & 20 & 0.260 & 0.697 & 4623 & 4061 & 0.5100 & 0.0000 & 0.0175 & 0.0175 & 0.0138 \\
 0.60 & 5000 & 20 & 0.350 & 0.697 & 4623 & 4061 & 0.6004 & 0.0004 & 0.0174 & 0.0174 & 0.0139 \\
 0.70 & 5000 & 20 & 0.450 & 0.697 & 4623 & 4061 & 0.6993 & -0.0007 & 0.0192 & 0.0192 & 0.0152 \\
 0.80 & 5000 & 20 & 0.550 & 0.697 & 4623 & 4061 & 0.7990 & -0.0010 & 0.0194 & 0.0194 & 0.0155 \\
 0.90 & 5000 & 20 & 0.650 & 0.697 & 4623 & 4061 & 0.8999 & -0.0001 & 0.0200 & 0.0200 & 0.0158 \\
\bottomrule
\end{tabular}%
}
\vspace{0.15cm} \begin{minipage}{0.94\textwidth} \footnotesize \emph{Notes.} The table reports Monte Carlo summaries of \(\widehat H\): mean, bias, standard deviation, RMSE, and MAE. For \(H_0\leq1/2\), the estimator is computed with the unfiltered KS criterion. For \(H_0>1/2\), it is computed with the GL--KS criterion, with \(\alpha\) chosen so that \(H_0-\alpha=0.25\), and with branch-wise burn-in correction. The analysis uses \(\mathcal M=1000\) Monte Carlo simulations. \end{minipage}
\end{table}

The result is important for the interpretation of the full procedure. In the short-memory regime, the autocorrelation function is summable, so the empirical process converges at the standard $\sqrt n$ rate. The phase transition observed for $H>1/2$ is therefore not a finite-sample artifact of the KS criterion itself, but a consequence of the non-summable dependence structure of the underlying increments. Once the process is already short-memory, the GL correction would be unnecessary and would only introduce an artificial loss of observations.

Table~\ref{tab:hhat-accuracy} reports Monte Carlo summaries of \(\widehat H\) in both regimes. The estimator is essentially unbiased over the whole parameter range. In the short-memory region, the RMSE remains small; in the persistent region, the GL--KS estimator continues to track the true value accurately despite the effective-sample loss induced by the burn-in correction. These results confirm that the GL transformation changes the empirical fluctuation and the limiting distribution, but not the Hurst parameter identified by the minimum-distance criterion. Taken together, Tables~\ref{tab:SRD} and~\ref{tab:hhat-accuracy} support the regime-adaptive interpretation of the method: the unfiltered KS implementation is used for \(H\leq1/2\), whereas the GL--KS implementation is used in the persistent regime.

\section{Financial applications}\label{sec:Application}
In the present section, we apply the KS/GL--KS methodology to financial time series. The purpose of this section is twofold. First, we use the estimator to investigate the rough volatility hypothesis, namely the empirical evidence that volatility-related processes display Hurst exponents significantly below the Brownian regularity $H=1/2$. Second, we apply the same distribution-based framework to log-price trajectories in order to detect persistent, anti-persistent, and efficient market regimes.

These two applications are conceptually distinct. In the rough volatility setting, the relevant empirical question is whether the volatility process is rough; that is, whether its estimated Hurst exponent is below $1/2$. In the log-price setting, instead, the benchmark $H=1/2$ is interpreted as the Brownian or weak-form efficiency reference point. Deviations above or below this threshold are used as evidence of persistent or anti-persistent scaling regimes. The aim is not to provide a direct arbitrage test, but rather to construct a non-parametric diagnostic of departures from Brownian scaling across markets and time windows.

A key point of the empirical implementation is that the distributional estimator and the inferential distribution are selected in two distinct steps. First, the Hurst exponent is estimated from the distributional self-similarity criterion following the equation \eqref{eq:Est_Emp_H}. This step identifies the empirical scaling exponent but is not, by itself, a significance statement. Second, conditional on the estimated regime, we select the asymptotic distribution used for inference. If $\widehat H\leq 1/2$, we use the unfiltered short-memory KS limit described in Proposition \ref{prop:SRD}, corresponding to the case $\alpha=0$. If $\widehat H>1/2$, we activate the Gr\"{u}nwald--Letnikov filter and use the GL-KS short-memory limit with the corresponding burn-in correction (see Theorem \ref{thm:GL_KS_limit} and Lemma \ref{lem:burn-in}).

\subsection{Empirical Motivation}\label{subsec:Empirical_motivation}
The Hurst exponent provides a compact way to summarize the scaling behavior of a stochastic process. In financial applications, however, its interpretation depends on the object under study.

The advantage of the distribution-based KS/GL-KS framework is that it estimates $H$ through distributional self-similarity rather than through moment scaling alone.  The GL correction is activated only in the persistent regime, where the unfiltered long-memory limit is affected by slow finite-sample convergence. In short-memory and anti-persistent regimes, the same framework reduces to the unfiltered KS statistic with $\alpha=0$. This is useful in financial data, where heavy tails, regime changes, and structural breaks may affect moment-based estimates. Moreover, the GL correction developed in Section~\ref{sec:GL_derivative} allows the same testing logic to be applied in long-memory regimes without relying on the unstable finite-sample behavior of the unfiltered KS statistic.
\subsubsection{Rough volatility}\label{subsubsec:rough_vol}
The rough-volatility literature provides a natural testing ground for the KS/GL--KS estimator. A standard way to model stochastic volatility with fractional features is to assume that log-volatility follows, or is well approximated by, a fractional Ornstein--Uhlenbeck-type dynamics \citep{ComteRenault1998}. This is the case, for instance, in fractional stochastic volatility models, where the volatility factor is driven by fractional noise and mean reversion is introduced through an Ornstein--Uhlenbeck kernel \citep{cheridito2003b}:
\begin{equation}\label{eq:fOU}
\log\sigma_t = \mu+\eta\int_{-\infty}^t e^{-\lambda(t-s)}dB_s^H,
\end{equation}
where $\mu$ is the long-mean term, $\lambda>0$ the mean-reverting term, $\eta>0$ the diffusion parameter and $dB^H$ the fractional measure, respectively.

This point requires a clarification. The fractional Ornstein--Uhlenbeck process is stationary and mean-reverting; it is not globally self-similar. Therefore, applying the KS/GL--KS estimator in this setting does not amount to assuming that the fOU process itself is self-similar. Rather, the estimator is used to recover the local roughness parameter governing the small-scale behavior of the process. Indeed, over short time intervals, the mean-reversion component is negligible. If $\lambda$ denotes the mean-reversion speed and $a$ is the scale of the increment used in the multi-scale comparison, then in the regime $a\lambda \ll 1$ the exponential OU kernel is locally close to one. In this range, fOU increments behave approximately as fractional Brownian increments, with scaling governed by the same Hurst parameter $H$. The KS/GL--KS estimator is therefore applied in this local fBm-like regime.

In the rough-volatility application, the empirical question is whether the estimated roughness parameter lies below the Brownian regularity $H=1/2$. Evidence for $H<1/2$ is interpreted as roughness of the volatility path \citep{GatheralJaissonRosenbaum2018}. Since the rough-volatility hypothesis corresponds to $H<1/2$, the relevant empirical regime is already short-memory or anti-persistent. Therefore, no GL correction is required in this application.
Accordingly, we use the unfiltered short-memory version of the procedure described in Remark~\ref{rmk:GL_for_SRD}.
\subsubsection{Weak-Form Market Efficiency}
The second application concerns log-price trajectories. Let $X_t=\log P_t$ denote the log-price of a financial index. The empirical procedure is applied to multi-scale increments of the form $X_{t+a}-X_t,$ so that the analysis focuses on the scaling behavior of log-price changes rather than on the marginal distribution of the price level itself.

In this setting, the value $H=1/2$ plays the role of a Brownian benchmark and is naturally connected with the weak form of the Efficient Market Hypothesis (EMH).\footnote{According to \citep{Fama1970}, ``[...] a market in which prices always fully reflect available information is called efficient.'' Depending on the information set, different forms of market efficiency can be considered: weak-form efficiency, when current prices reflect all information contained in past prices; semi-strong-form efficiency, when prices also reflect all publicly available information, such as earnings announcements, stock splits, or macroeconomic news; and strong-form efficiency, when prices also reflect private information available only to some investors. Since the present analysis is based only on historical price trajectories, it is related to weak-form efficiency.} Under this benchmark, log-prices behave locally as a Brownian-type process and past price information should not generate persistent or anti-persistent scaling patterns.

The use of the Hurst exponent as a diagnostic for market efficiency is also consistent with the econophysics literature on scaling and multiscaling in financial markets. Starting from the empirical evidence of scaling laws in financial indices documented by \citep{MantegnaStanley2004}, subsequent contributions developed generalized Hurst exponent and multifractal approaches to detect departures from simple Brownian scaling in asset prices \citep{carbone2004time,DiMatteo2007}. More recently, time-varying Hurst--H\"{o}lder exponents have been used to describe local changes in market efficiency and to interpret financial markets as systems alternating between efficient and inefficient states \citep{BianchiPianese2018}. Related evidence on the interaction between price multiscaling and volatility roughness is discussed in \citep{BrandiDiMatteo2022}.

A market window is therefore classified as persistent when the confidence interval for $H$ lies entirely above $1/2$, anti-persistent when it lies entirely below $1/2$, and neutral when the interval contains $1/2$. Persistent windows indicate trend-like scaling behavior and possible departures from weak-form efficiency, while anti-persistent windows indicate mean-reverting or reversal-like scaling behavior. Neutral windows are those for which the data do not provide statistically significant evidence against the Brownian regularity $1/2$. This classification should be interpreted as a scaling diagnostic rather than as a direct test of arbitrage opportunities. A statistically significant deviation from $1/2$ indicates a departure from Brownian scaling, but it does not by itself identify an exploitable trading strategy.

\begin{table}[!ht]
\centering
\begin{tiny}
     \caption{Description of the log-volatility dataset.}
    \label{tab:vol_description}
    \begin{tabular}{llccc||cc}
    \toprule
    Ticker & Index &$N$& Start Date & End Date & ADF stat. & ADF $p$-value \\
    \midrule
    AEX & Amsterdam Exchange Index (NLD) &$4,714$&2000-01-03 &2018-06-27 & $-13.850$ &$<0.001$\\
    AORD & All Ordinaries (AUS) &$4,665$ &2000-01-04 &2018-06-27 & $-18.861$ & $<0.001$\\
    BFX & Bel 20 (BEL) &$4,712$ &2000-01-03 &2018-06-27 & $-14.364$ & $<0.001$\\
    BSESN & Bombay Stock Exchange Sensitive Index (IND) &$4,591$ &2000-01-03 &2018-06-27 & $-15.424$ & $<0.001$\\
    BVLG & PSI All Share Gross Return (PRT) &$1,453$ &2012-10-15 &2018-06-27 & $-9.362$ & $<0.001$\\
    BVSP & Bovespa Index (BRA) & $4,556$ & 2000-01-03&2018-06-27 & $-19.268$ & $<0.001$ \\
    DJI & Dow Jones Index (USA)& $4,635$ &2000-01-03 &2018-06-27 & $-15.354$ & $<0.001$\\
    FCHI & Cac 40 (FRA) & $4,713$ &2000-01-03 &2018-06-27 & $-14.405$ & $<0.001$ \\
    FTSE & Footsie 100 Index (GBR) & $4,660$ &2000-01-04 &2018-06-27 & $-15.886$ & $<0.001$\\
    FTSEMIB & FTSE Milano Indice di Borsa (ITA) & $2,307$ &2009-06-01 &2018-06-27 & $-13.069$ & $<0.001$ \\ 
    FTSTI & FTSE Straits Times Index (SGP) & $2,696$ &2000-01-03 &2018-06-27 & $-16.045$ & $<0.001$ \\ 
    GDAXI & Dax 30 (DEU) & $4,692$ &2000-01-03 &2018-06-27 & $-14.267$ & $<0.001$\\
    GSPTSE & Toronto Stock Exchange C. I. (CAN)  & $4,043$ &2002-05-02 &2018-06-27 & $-15.314$ & $<0.001$ \\
    HSI & Hang Seng Index (HKG) & $4,532$ &2000-01-04 &2018-06-27 & $-15.611$ & $<0.001$\\
    IBEX & Ibex 35 (ESP) & $4,681$ &2000-01-03 & 2018-06-27& $-14.458$ & $<0.001$ \\
    IXIC & NASDAQ Composite Index (USA) & $4,637$ &2000-01-03 &2018-06-27 & $-13.526$ & $<0.001$\\
    KS11 & Korea Composite Stock Price Index (KOR) & $4,550$ &2000-01-04 &2018-06-27 & $-11.902$ & $<0.001$\\
    KSE & Pakistan Stock Exchange Index (PAK) & $4,494$ &2000-01-04 &2018-06-27 & $-18.552$ & $<0.001$\\
    MXX & Mexico Exchange Index (MEX) & $4,640$ &2000-01-03 &2018-06-27 & $-19.473$ & $<0.001$ \\ 
    NSEI & Nifty 50 Index (IND) & $4,586$ &2000-01-03 &2018-06-27 & $-15.644$ & $<0.001$\\
    N225 & Nikkei 225 (JPN) & $4,501$ &2000-02-02 &2018-06-27 & $-16.238$ & $<0.001$\\
    OMXC20 & Omx Copenhagen 20 Index (DNK) & $3,167$ &2005-10-03 &2018-06-27 & $-15.023$ & $<0.001$\\
    OMXHPI & Omx Helsinki All-Share Index (FIN) & $3,197$ &2005-10-03 &2018-06-27 & $-12.885$ & $<0.001$\\
    OMXSPI & Omx Stockholm All-Share Index (SWE) & $3,196$ &2005-10-03 &2018-06-27 & $-11.685$ & $<0.001$\\
    OSEAX & Oslo Bors All-Share Index (NOR) & $4,192$ &2001-09-03 &2018-06-27 & $-16.314$ & $<0.001$\\
    RUT & Russell 2000 (USA) & $4,636$ &2000-01-03 &2018-06-27 & $-18.909$ & $<0.001$\\  
    SPX & S\&P 500 Index (USA) & $4,641$ &2000-01-03 &2018-06-27 & $-14.390$ &$<0.001$\\
    SSE & Shanghai Stock Exchange Composite Index (CHN)  & $4,461$ &2000-01-04 &2018-06-27 & $-14.730$ &$<0.001$\\
    SSMI & Swiss Market Index (CHE)& $4,636$ & 2000-01-04&2018-06-27 & $-13.219$ & $<0.001$ \\ 
    STOXX50E & Euro Stock 50 (EUR) & $4,713$ &2000-01-03 &2018-06-27 & $-17.973$ & $<0.001$ \\ 
    \bottomrule
    \end{tabular}
    \end{tiny}
    \vspace{0.2cm}
\begin{minipage}{0.95\textwidth}
    \scriptsize \textit{Notes.} The table reports the volatility-related series used in the rough-volatility application. $N$ denotes the number of available observations after preprocessing. The start and end dates indicate the calendar coverage of each series. The ADF column reports the Augmented Dickey--Fuller statistic, with the corresponding $p$-value reported in the last column.
   \end{minipage}
\end{table}

\subsection{Data}\label{subsec:Data}
The empirical analysis is based on two datasets, corresponding to the two applications described above. The first dataset is used for the rough-volatility exercise and contains volatility-related series for a set of major equity indices. The second dataset contains daily closing prices for a broader panel of international stock-market indices and is used to study persistence, anti-persistence, and neutral regimes in log-price dynamics.

For the rough-volatility application, we use daily log-realized volatility series computed from 5-minute intraday returns. The sample composition, reported in Table~\ref{tab:vol_description}, is taken from the Oxford-Man Institute Realized Library. Since the objective is to estimate the local roughness parameter, the GL--KS estimator is applied to volatility increments at short scales, consistently with the local fBm-like approximation discussed in the previous Subsection.

For the market-efficiency application, we consider the log-price process $X_t=\log P_t,$ where $P_t$ denotes the daily closing price of the corresponding index. The analysis is performed on multi-scale log-price increments $X_{t+a}-X_t$. Table~\ref{tab:prices_description} reports the list of indices, sample lengths, calendar coverage, and preliminary stationarity diagnostics for the associated return series.

\begin{table}[!ht]
\centering
\small
\caption{Description of the log-price dataset}
\label{tab:prices_description}
\begin{tiny}
\begin{tabular}{llrcc||cc}
\toprule
 Ticker & Index & $N$ & Start Date & End Date & ADF stat. & ADF $p$-value \\
\midrule
AEX & Amsterdam Exchange Index (NLD) & 10,922 & 1983-01-04 & 2025-10-16&-73.991 &$<$0.001 \\
AORD & All Ordinaries (AUS) & 6,294 & 2001-04-19 & 2025-10-16 & -55.765&$<$0.001 \\
ATG & Athens General Composite Index (GRC) & 5,857 & 1997-07-01 & 2021-09-30 &-92.268 &$<$0.001 \\
AXJO & Asx 200 (AUS) &8,278  &1992-11-23  &2025-08-22 & -64.972&$<$0.001  \\
BFX & Bel 20 (BEL) & 8,726 & 1991-04-09 & 2025-08-22 & -63.957&$<$0.001 \\
BSESN & Bombay Stock Exchange Sensitive Index (IND) & 6,934 & 1997-07-01 & 2025-08-22 &-59.095 &$<$0.001 \\
BUX & Budapest Stock Exchange Index (HUN) & 5,908 & 1997-07-01 & 2021-09-30 & -54.778& $<$0.001\\
BVSP & Bovespa Index (BRA) & 8,005 & 1993-04-27 & 2025-08-22&-62.106 &$<$0.001 \\
DJI & Dow Jones Index (USA) & 13,516 & 1979-12-25 &2025-10-15 &-83.831 &$<$0.001 \\
FCHI & Cac 40 (FRA) & 9,717 & 1987-07-10 & 2025-10-16 &-70.736 &$<$0.001 \\
FTSE & Footsie 100 Index (GBR) & 10,519 & 1984-01-03 &2025-08-22 &-74.554 &$<$0.001 \\
FTSEMIB&FTSE Milano Indice di Borsa (ITA) &7,097 &1998-01-02 &2025-10-16 &-71.837 &$<$0.001 \\
FTSTI&FTSE Straits Times Index (SGP) &9,408 &1987-12-28 &2025-08-22 &-63.742 &$<$0.001 \\
GDAXI & Dax 30 (DEU) & 9,520 & 1987-12-30 & 2025-08-22 &-70.405 &$<$0.001 \\
GSPTSE&Toronto Stock Exchange C. I. (CAN) &11,588 &1979-06-29 &2025-08-13 &-74.465 &$<$0.001 \\
HSI&Hang Seng Index (HKG) &11,367 &1979-12-25 & 2025-10-16&-75.867 &$<$0.001 \\
IBEX&Ibex 35 (ESP) &8,639 & 1991-09-09&2025-10-16 &-65.978 & $<$0.001\\
IMOEX& Moscow Exchange Index (RUS) & 2,795 & 2013-03-05 & 2024-06-14 &-37.995 & $<$0.001\\
IXIC&NASDAQ Composite Index (USA) &13,753 &1971-02-05 &2025-08-22 &-82.574 &$<$0.001 \\
JKSE& Jakarta Composite Index (IDN) &8,615 &1990-04-06 &2025-08-22 &-61.003 &$<$0.001\\
KLCI& Kuala Lumpur Composite Index (MYS) &7,795 &1993-12-03 & 2025-08-22& -58.853&$<$0.001 \\
KS11& Korea Composite Stock Price Index (KOR) & 7,067 &1996-12-11 &2025-08-22 &-59.636 &$<$0.001 \\
MXX&Mexico Exchange Index (MEX) &9,742 &1987-01-05 &2025-10-15 & -69.476&$<$0.001 \\
NSEI&Nifty 50 Index (IND) &7,451 &1995-11-06 &2025-10-16 &-61.212 &$<$0.001 \\
N225&Nikkei 225 (JPN) &14,909 &1965-01-05 &2025-08-22 &-88.298 &$<$0.001 \\
OMX&Omx Stockholm 30 (SWE) &4,202 &2008-11-20 &2025-08-22 &-47.975 &$<$0.001 \\
RUT&Russell 2000 (USA) &9,561 &1987-09-10 &2025-08-22 &-68.595 &$<$0.001 \\
SET&Stock Exchange of Thailand Index &7,000 &1996-12-11 &2025-08-22 &-54.773 &$<$0.001 \\
SPX&S\&P 500 Index (USA) &24,527 &1927-12-30 & 2025-08-13& -113.176&$<$0.001 \\
SSE& Shanghai Stock Exchange Composite Index (CHN)&6,819&1997-07-02 &2025-08-22 &-58.947 &$<$0.001 \\
SZSE&Shenzhen Stock Exchange Component Index (CHN) &6,786 & 1997-08-22& 2025-08-22&-57.928 & $<$0.001\\
SSMI&Swiss Market Index (CHE) &9,500 &1988-01-05 &2025-10-16 &-69.683 &$<$0.001 \\
STOXX50E&Euro Stock 50 (EUR) &4,612 &2007-03-30 &2025-08-22& -49.450 &$<$0.001 \\
TA-125&Tel Aviv 125 (ISR) &6,989 &1992-10-08 &2025-08-21 &-57.929 &$<$0.001 \\
TWII&Taiwan Weighted Index (TWN) & 6,900&1997-07-02 &2025-08-22 & -56.339& $<$0.001 \\
\bottomrule
\end{tabular}
\end{tiny}
\begin{minipage}{0.95\textwidth}
\scriptsize
\textit{Notes.} The table reports the international stock-market indices used in the log-price application. $N$ denotes the number of daily observations. The empirical analysis is performed on $X_t=\log P_t$, where $P_t$ is the daily closing price. The ADF column reports the Augmented Dickey--Fuller statistic computed on the corresponding log-return series, with the associated $p$-value reported in the last column.
\end{minipage}
\end{table}

The Augmented Dickey--Fuller (ADF) statistics reported in the data tables are used only as preliminary diagnostics. They confirm that the transformed series entering the empirical analysis are compatible with the standard stationarity requirements for increment-based scaling analysis. The subsequent inference on $H$ is entirely based on the GL--KS procedure described in the next subsection.

\subsection{Regime-Adaptive Empirical KS/GL-KS Methodology}
The empirical implementation follows the distribution-based estimation procedure introduced in Section~\ref{sec:GL_filter}. For each time series, we construct multi-scale increments and estimate the Hurst exponent by minimizing the Kolmogorov--Smirnov distance between the empirical distribution of the unit-scale increments and the rescaled empirical distribution of the crossed increments.

The empirical procedure is regime-adaptive. For each series, we first compute the Hurst estimate
\[
\widehat H=\argmin_{\theta\in\mathbb H} D_{n,m}(\theta),
\]
where $D_{n,m}(\theta)$ denotes the distributional KS criterion computed from unit-scale increments and rescaled crossed increments. This first step is used only to estimate the self-similarity exponent. Inference is performed conditionally on the estimated regime. If $\widehat H\leq 1/2$, the process is treated as short-memory or anti-persistent. We set $\alpha=0$, so that $\Delta_h^{GL,0}=I$, and use the short-memory limit of Proposition \ref{prop:SRD}. No burn-in correction is applied, and the effective sample sizes coincide with the nominal crossed-sample sizes. If $\widehat H>1/2$, the process is treated as persistent. In this case, the unfiltered long-memory limit is not used for empirical inference because of its slow finite-sample convergence. We instead apply the GL filter with an order $\alpha$ such that $\widehat H-\alpha<1/2$, and use the GL-KS limit of Theorem \ref{thm:GL_KS_limit}. The statistic is then computed after the branch-wise burn-in correction, with effective sample sizes $n_{\mathrm{eff}}$ and $m_{\mathrm{eff}}$. The GL filter is therefore an inferential device, not a different definition of the Hurst exponent. By Corollary \ref{cor:invariance}, the population minimizer of the distributional criterion is invariant under GL filtering; the filter changes the stochastic fluctuation and the limiting distribution, but not the target parameter $H_0$.

The interpretation of the estimator depends on the application. In the rough-volatility exercise, the object of interest is the local roughness of log-realized volatility. Since the rough-volatility hypothesis corresponds to $H<1/2$, the process is already in the short-memory or anti-persistent regime. Therefore, consistently with Remark~\ref{rmk:GL_for_SRD}, we set $\alpha=0$ and no burn-in correction is applied. In the log-price application, instead, the procedure is used to detect persistent, anti-persistent, and neutral market regimes relative to the Brownian benchmark $H=1/2$.

For the rough-volatility application, we also fit an auxiliary stationary fractional Ornstein--Uhlenbeck specification, as described in equation \eqref{eq:fOU}.
The fOU model provides a natural stationary framework for realized volatility with rough sample paths, as in \citep{wang2023modeling}. In our implementation, however, the Hurst exponent is not estimated from the fOU model. It is first obtained from the KS criterion. Then, conditionally on $H=\widehat H_{\mathrm{KS}}$, the mean-reversion speed $\lambda$ is estimated using the moment-based estimator of \citep[Eq.~(3.4)]{wang2023modeling}:
\begin{equation}\nonumber
    \widehat\lambda = \left(\frac{N\sum_{i=1}^{N}\log\sigma_{i}^2-\left(\sum_{i=1}^N\log\sigma_{i}\right)^2}{N^2\widehat\eta^2\widehat H_{KS}\Gamma(2\widehat H_{KS})}\right)^{-1/(2\widehat H_{KS})} \; \text{with } \widehat \eta = \sqrt{\frac{\sum_{i=1}^{N-2}(\log\sigma_{i+2}-2\log\sigma_{i+1}+\log\sigma_i)^2}{N(4-2^{2\widehat{H}_{KS})}}}.
\end{equation}

This auxiliary step is used only to assess the scale at which the local fBm approximation is applied. Specifically, the reported quantity $a\Delta\widehat\lambda$ measures the selected scale $a\Delta$ relative to the estimated mean-reversion horizon, where $\Delta$ represents the daily frequency ($\Delta=1$). Small values of $a\Delta\widehat\lambda$ indicate that mean reversion is negligible over the scale used in the GL--KS comparison, supporting the local fBm-like interpretation.

For each estimate, we report the point estimator $\widehat H$, its asymptotic standard error, the corresponding 95\% confidence interval, and the value of the normalized KS statistic evaluated at $\widehat H$. The associated $p_\delta$-value is used as a goodness-of-fit diagnostic for the distributional self-similarity relation at the estimated exponent.

\subsubsection{Application I: Rough volatility}\label{subsec:App1}
Table~\ref{tab:rough_volatility_H} reports the GL--KS estimates obtained from the daily log-realized volatility series. The estimation is performed at the fixed scale $a=10$. The last column reports $a\Delta\widehat\lambda$, computed from the auxiliary fOU fit described above. Since these values are close to zero, the selected scale is short relative to the estimated mean-reversion horizon, consistently with the local fBm-like approximation.

The results provide strong evidence in favour of rough volatility. Across all indices, the estimated Hurst exponents are well below the Brownian threshold $1/2$, and the corresponding confidence intervals lie entirely in the rough region. The estimates are concentrated around very low values of $H$, typically between approximately $0.05$ and $0.15$, confirming that the local dynamics of log-realized volatility are substantially rougher than Brownian motion.

\begin{table}[!ht]
\centering
\caption{GL--KS estimates of the Hurst exponent for log-volatility series.}
\label{tab:rough_volatility_H}
\begin{tiny}
\begin{tabular}{lcccccc}
\toprule
Ticker 
& $N$ 
& $\widehat H$  (95\% CI) & $SE\times 10^2$ &$D^\star_{n,m}(\hat H)$
& $p_\delta$-value 
& $a\Delta\widehat{\lambda}$ \\
\midrule
AEX & 4,714 & $0.1255\,(0.1057,\,0.1454)$& 1.0142& 0.9750& $0.0500$ & $1.1664\times 10^{-2}$ \\
AORD  & 4,665 & $0.0545\,(0.0358,\,0.0732)$&0.9529 &0.3703 & $0.9946$ & $2.1436\times10^{-6}$ \\
BFX  & 4,712 & $0.1220\,(0.1021,\,0.1419)$&1.0174 &0.9732 & $0.0491$ & $1.0581\times 10^{-2}$ \\
BSESN  & 4,591 & $0.1350\,(0.1134,\,0.1565)$&1.0994 &0.6509 & $0.5427$ & $1.1310\times 10^{-2}$ \\
BVLG  & 1,453 & $0.1480\,(0.1088,\,0.1872)$&2.0023 &0.6867 & $0.4547$ & $6.8824\times 10^{-6}$ \\
BVSP & 4,556 & $0.1148\,(0.0946,\,0.1351)$&1.0327 &0.8148 & $0.1900$ & $3.7549\times10^{-2}$ \\
DJI  & 4,635 & $0.1220\,(0.1024,\,0.1416)$& 1.0013& 0.9766&$0.0474$ & $6.0303\times10^{-3}$ \\
FCHI  & 4,713 & $0.1268\,(0.1074,\,0.1462)$&0.9902 & 1.1398& $0.0084$ & $1.0529\times10^{-2}$ \\
FTSE  & 4,660 & $0.1058\,(0.0858,\,0.1258)$& 1.0202&0.6342 & $0.5485$ & $2.1491\times10^{-6}$ \\
FTSEMIB & 2,307 & $0.1200\,(0.0913,\,0.1467)$&1.4668 & 1.0516&$0.0226$ & $3.3940\times10^{-3}$ \\
FTSTI & 2,696 & $0.0702\,(0.0454,\,0.0950)$&1.2645 &0.5352 &$0.7905$ & $3.7107\times10^{-6}$\\
GDAXI  & 4,692 & $0.1260\,(0.1062,\,0.1458)$&1.0094 & 0.7528&$0.2966$ & $4.8259\times 10^{-3}$  \\
GSPTSE & 4,043 & $0.0915\,(0.0712,\,0.1119)$&1.0384 & 0.6067&$0.6157$ & $2.4734\times10^{-6}$ \\
HSI & 4,532 & $0.0665\,(0.0476,\,0.0853)$& 0.0962& 0.5443&$0.7662$ & $2.2066\times10^{-6}$ \\
IBEX & 4,681 & $0.1160\,(0.0959,\,0.1361)$&1.0273 &1.1226 &$0.0090$ & $2.1369\times10^{-6}$ \\
IXIC  & 4,637 & $0.1531\,(0.1318,\,0.1743)$& 1.0856& 0.5069&$0.8833$ & $1.0789\times10^{-2}$ \\
KS11 & 4,550 & $0.1100\,(0.0893,\,0.1307)$& 1.0583& 0.8133&$0.1838$ & $2.1978\times10^{-6}$ \\
KSE & 4,494 & $0.1165\,(0.0964,\,0.1365)$& 1.0224& 0.4980&$0.8844$ & $3.9042\times10^{-2}$ \\
MXX & 4,640 & $0.0900\,(0.0714,\,0.1086)$& 0.9496& 0.4340&$0.9662$ & $2.1560\times10^{-6}$ \\
NSEI & 4,586 & $0.1250\,(0.1046,\,0.1453)$& 1.0361& 0.7928&$0.2318$ & $2.1806\times10^{-6}$ \\
N225  & 4,501 & $0.1267\,(0.1054,\,0.1479)$&1.0839 &0.4790 &$0.9172$ & $2.2171\times10^{-2}$ \\
OMXC20  & 3,167 & $0.1020\,(0.0785,\,0.1255)$&1.1966 &1.0092&$0.0273$ & $3.1581\times10^{-6}$ \\
OMXHPI  & 3,197 & $0.1430\,(0.1174,\,0.1685)$&1.3045 &0.5894&$0.6907$ & $3.1280\times10^{-6}$ \\
OMXSPI  & 3,196 & $0.1363\,(0.1115,\,0.1612)$&1.2675 &0.7102&$0.3981$ & $3.1289\times10^{-6}$ \\
OSEAX  & 4,192 & $0.0864\,(0.0660,\,0.1069)$&1.0413 &0.8073&$0.1726$ & $2.3856\times10^{-6}$ \\
RUT  & 4,636 & $0.1231\,(0.1027,\,0.1434)$&1.0372 & 0.4067&$0.9851$ & $9.0486\times10^{-3}$ \\
SPX  & 4,641 & $0.1463\,(0.1256,\,0.1671)$&1.0592 & 0.6828&$0.4713$ & $1.1308\times10^{-2}$ \\
SSE  & 4,461 & $0.1461\,(0.1238,\,0.1685)$&1.1395 & 0.4786&$0.9199$ & $1.2123\times10^{-2}$ \\
SSMI & 4,636 & $0.1240\,(0.1034,\,0.1446)$&1.0526 &0.8995 &$0.0965$ & $1.2161\times10^{-2}$ \\
STOXX50E & 4,713 & $0.1235\,(0.1041,\,0.1430)$&0.9910 &0.7214 &$0.3578$ & $8.1816\times10^{-3}$ \\
\bottomrule
\end{tabular}
\end{tiny}
\vspace{0.2cm}
\begin{minipage}{0.95\textwidth}
\scriptsize
\textit{Notes.} The table reports GL--KS estimates of the local Hurst exponent for daily log-realized volatility series. $N$ denotes the number of observations. The column $\widehat H$ reports the point estimate, with the corresponding 95\% confidence interval in parentheses. The standard error is reported as $SE\times 10^2$. $D^\star_{n,m}(\widehat H)$ is the normalized KS statistic evaluated at the estimated Hurst exponent. The $p_\delta$-value measures the empirical compatibility between the observed increments and the distributional self-similarity relation evaluated at $\widehat H$. The last column reports $a\Delta\widehat\lambda$, where $\widehat\lambda$ is the mean-reversion speed estimated from an auxiliary fOU model conditional on $H=\widehat H_{\mathrm{GLKS}}$. Since all estimated Hurst exponents lie below $1/2$, the rough-volatility application is implemented with $\alpha=0$, and no burn-in correction is applied.
\end{minipage}
\end{table}

The $p_\delta$-values quantify the empirical compatibility between the observed log-realized volatility increments and the distributional self-similarity relation evaluated at $\widehat H$. High values indicate that the estimated exponent provides a satisfactory GL--KS fit, whereas lower values point to a weaker agreement with the local fBm approximation. Some series display smaller $p_\delta$-values, suggesting that additional market-specific features may affect the local scaling structure. Nevertheless, this does not alter the main conclusion: the estimated confidence intervals remain far below $1/2$, providing robust evidence of rough volatility across the panel.

\subsubsection{Application II: log-prices and weak-form efficiency}\label{subsec:App2}

We now apply the GL--KS estimator to log-price trajectories. Differently from the rough-volatility application, the objective is not to test for roughness, but to assess whether the scaling behavior of log-prices is compatible with the Brownian benchmark $H=1/2$. Values significantly above this threshold are interpreted as evidence of persistent scaling, values significantly below it as evidence of anti-persistent scaling, and confidence intervals containing $1/2$ as neutral regimes.

Table~\ref{tab:efficient_states} reports the full-window estimates. The estimates are generally close to the Brownian benchmark, but several indices display confidence intervals entirely above $1/2$, indicating persistent scaling over the full sample. A smaller number of markets are classified as neutral, since their confidence intervals contain $1/2$. Isolated cases may display anti-persistent full-window behavior when the entire confidence interval lies below $1/2$.

\begin{table}[!ht] 
\centering 
\small \caption{Full-window GL--KS estimates of the Hurst exponent for log-price trajectories.} \label{tab:efficient_states} 
\begin{tiny}
\begin{tabular}{lrcrrccccrr} \toprule
Ticker & $N$ & Method &$n_{\text{eff}}$&$m_{\text{eff}}$& $\gamma$ & $\widehat H$ (95\% CI) &$SE\times 10^2$&$D_{n_{\text{eff}},m_{\text{eff}}}^{\star/\dagger}(\hat H)$&$p_\delta$-value&$p_{0.5}$-value\\ 
\midrule
AEX &10,922 &GL--KS &10,767&10,302 &0.5421 & 0.5235 (0.5023,0.5447)& 1.0812&0.8378&0.6773 & 0.0295\\
AORD &6,294 &GL--KS &6,185&5,854 &0.5353 & 0.5105 (0.4827,0.5384) &1.4198&0.7430&0.7947&0.4579\\
ATG & 5,857& GL--KS&5,724&5,357 &0.5633 & 0.5625 (0.5318,0.5932) &1.5653&0.8000&0.7733&$<$0.001\\
AXJO &8,278 & KS&8,277&8,257 & --&0.4921 (0.4684,0.5159)&1.2123& 0.6396&0.8957 &0.5150\\
BFX & 8,726& GL--KS&8,593& 8,186&0.5385 & 0.5167 (0.4934,0.5401)& 1.1907&1.8381&0.0117&0.1603\\
BSESN & 6,934&GL--KS &6,795 &6,400&0.5574 & 0.5520 (0.5248,0.5792)&1.3873 &1.0865&0.3906&$<$0.001\\ 
BUX & 5,908& GL--KS& 5,804 &5,488&0.5347 & 0.5093 (0.4814,0.5371)&1.4208&0.9039 & 0.5915&0.5133\\ 
BVSP & 8,005&KS & 8,004 &7,984& --& 0.4875 (0.4631,0.5120)&1.2456 & 0.5517&0.9593&0.3168\\ 
DJI & 13,516& GL--KS& 13,342 &12,816&0.5424 & 0.5221 (0.5036,0.5407)& 0.9480&1.3736&0.1209&0.0196\\ 
FCHI & 9,717&KS & 9,716 &9,696& --& 0.4964 (0.4747,0.5181)&1.1068&0.5783&0.9414&0.7425 \\ 
FTSE & 10,519&GL--KS & 10,383 &9,959& 0.5302& 0.5005 (0.4800,0.5210)&1.0453& 0.6489& 0.8908 &0.9635\\ 
FTSEMIB &7,097 & GL--KS& 6,986 &6,637& 0.5308&0.5015 (0.4760,0.5271)&1.3056& 1.3519& 0.1321&0.9065\\ 
FTSTI & 9,048& GL--KS& 9,241 &8,768&0.5591 &0.5550 (0.5314,0.5786)& 1.2043&0.8309 & 0.7092&$<$0.001\\ 
GDAXI & 9,520&KS & 9,519 &9,499& --& 0.4955 (0.4739,0.5172)& 1.1032&0.6557 &0.8828&0.6856\\ 
GSPTSE & 11,588& GL--KS& 11,397& 10,868&0.5612 &0.5588 (0.5377,0.5799)&1.0781 & 1.0730& 0.4005&$<$0.001\\ 
HSI & 11,367& GL--KS& 11,186 &10,667& 0.5565&0.5504 (0.5286,0.5723)& 1.1139& 1.3451&0.1534&$<$0.001\\ 
IBEX & 8,639& GL--KS& 8,495 &8,079&0.5482 &0.5359 (0.5124,0.5594)& 1.1986&1.1398 &0.3183 &0.0028\\ 
IMOEX & 2,795& GL--KS& 2,720 &2,495&0.5433 & 0.5260 (0.4809,0.5711)&2.3012 &0.9519 & 0.5604&0.2586\\ 
IXIC & 13,753& GL--KS& 13,524 &12,913&0.5699  &0.5739 (0.5546,0.5931)& 0.9821& 1.4419& 0.1107&$<$0.001\\
JKSE & 8,615&GL--KS & 8,448 &7,995&0.5645 &0.5645 (0.5399,0.5891) &1.2554&1.1177 &0.3658&$<$0.001 \\
KLCI & 7,795& GL--KS& 7,634 &7,195& 0.5668& 0.5685 (0.5419,0.5950)&1.3550&0.8348 &0.7228&$<$0.001\\
KS11 & 7,067&GL--KS & 6,953 &6,607& 0.5341&0.5082 (0.4822,0.5342)&1.3276 &0.9838 &0.4747&0.5386\\
MXX & 9,742& GL--KS& 9,587 &9,142&0.5489 &0.5365 (0.5139,0.5591)&1.1514 &1.0235 &0.4484 &0.0015\\
NSEI & 7,452&GL--KS & 7,310 &6,911&0.5548 &0.5472 (0.5207,0.5737) &1.3522& 1.1490&0.3211 &$<$0.001\\
N225 &14,909 &GL--KS & 14,706 &14,129&0.5528 &0.5435 (0.5255,0.5616)&0.9227 & 1.5890&0.0515&$<$0.001\\
OMX & 4,202& KS& 4,201 &4,181&-- & 0.4628 (0.4267,0.4990)&1.8440& 0.6105&0.9196&0.0439\\
RUT & 9,561& GL--KS& 9,402 &8,941& 0.5530& 0.5439 (0.5214,0.5664)& 1.1471&1.4036 &0.1179 &$<$0.001\\
SET & 7,000& GL--KS& 6,859 &6,460& 0.5585& 0.5540 (0.5268,0.5811)& 1.3864&1.2910 &0.2023 &$<$0.001\\
SPX & 24,527& GL--KS& 24,260 &23,507& 0.5526& 0.5433 (0.5292,0.5575)&0.7220& 2.3074&$<$0.001&$<$0.001\\
SSE & 6,819& GL--KS& 6,686 &6,299&0.5533 &0.5445 (0.5174,0.5716)&1.3831 &0.9087 &0.6078&0.0013\\
SZSE & 6,786& GL--KS& 6,662 &6,286& 0.5458&0.5305 (0.5040,0.5569)&1.3449 & 0.6175&0.9306 &0.0240\\
SSMI & 9,500& GL--KS& 9,362 &8,940& 0.5374&0.5145 (0.4928,0.5362) &1.1072& 0.7052&0.8352 &0.1918\\
STOXX50E & 4,612& GL--KS& 4,519 &4,232& 0.5361&0.5120 (0.4777,0.5463)&1.7488 &1.2953 &0.1770&0.4928\\
TA-125 & 6,989&GL--KS & 6,853 &6,469&0.5547 &0.5471 (0.5202,0.5740)&1.3727 &2.1247 &0.0027& $<$0.001 \\
TWII & 6,900&GL--KS & 6,787 &6,440& 0.5340& 0.5075 (0.4816,0.5335)&1.3256 &1.1628 &0.2752&0.5698\\
\bottomrule 
\end{tabular} 
\end{tiny}
\begin{minipage}{0.95\textwidth}
\scriptsize \textit{Notes.} The table reports full-window GL--KS estimates of the Hurst exponent obtained from the log-price process $X_t=\log P_t$, using multi-scale increments $X_{t+a}-X_t$ with scaling $a=20$. The columns $n_{\mathrm{eff}}$ and $m_{\mathrm{eff}}$ denote the effective sample sizes of the unit-scale and crossed multi-scale samples after the finite-sample burn-in deletion. The column $\gamma$ reports the burn-in exponent used in the GL implementation. The column $\widehat H$ reports the point estimate, with the corresponding 95\% confidence interval in parentheses. The standard error is reported as $SE\times 10^2$. $D^\dagger_{n_{\mathrm{eff}},m_{\mathrm{eff}}}(\widehat H)$ is the normalized GL--KS statistic evaluated at the estimated Hurst exponent, while $D^\star_{n_{\mathrm{eff}},m_{\mathrm{eff}}}(\widehat H)$ is the KS-SRD statistic. The $p_\delta$-value measures the goodness of fit of the distributional self-similarity relation at $\widehat H$. The $p_{0.5}$-value is the two-sided $p$-value for the Brownian benchmark test $\mathcal H_0^{\mathrm{BM}}=1/2$ against $\mathcal H_1^{\mathrm{BM}}\neq1/2$, computed from the asymptotic normal statistic $Z_{0.5}=(\widehat H-1/2)/SE(\widehat H)$. An index is classified as persistent if the confidence interval lies entirely above $1/2$, anti-persistent if it lies entirely below $1/2$, and neutral otherwise. For GL-KS rows, \(\alpha\) is chosen so that \(\widehat H-\alpha<1/2\). The burn-in exponent $\gamma$ is chosen above the theoretical lower bound $\gamma_{\min}=1/[2(1+\alpha-\widehat H)]$. For KS rows, $\alpha=0$, $\gamma$ is not defined, and no burn-in correction is applied.
\end{minipage}
\end{table}

The two $p$-values reported in the table have different meanings. The $p_\delta$-value measures the empirical compatibility between the observed multi-scale increments and the distributional self-similarity relation evaluated at $\widehat H$. It is therefore a goodness-of-fit diagnostic for the GL--KS scaling relation. The $p_{0.5}$-value is instead associated with the Brownian benchmark test

\[\mathcal H_0^{\mathrm{BM}}: H=1/2
\qquad \text{against} \qquad
\mathcal H_1^{\mathrm{BM}}: H\neq 1/2.\]
It is computed from the standardized statistic $Z_{0.5}=\frac{\widehat H-1/2}{SE(\widehat H)}$ using the asymptotic normal approximation. Hence, small values of $p_{0.5}$ indicate statistically significant departures from Brownian scaling. The direction of the departure is determined by the position of the confidence interval: values entirely above $1/2$ identify persistent regimes, while values entirely below $1/2$ identify anti-persistent regimes.

\section{Conclusion}\label{sec:Conclusion}
This paper introduced a Gr\"{u}nwald--Letnikov--Kolmogorov--Smirnov framework for testing and estimating the self-similarity parameter of fractional processes under long-range dependence. The main contribution is to show that the discrete Gr\"{u}nwald--Letnikov filter separates memory from scaling: it removes the low-frequency singularity responsible for slow convergence while preserving the self-similar structure needed for the distributional comparison across scales. As a result, the GL--KS statistic recovers a standard short-memory asymptotic regime, with stable finite-sample behavior confirmed by the Monte Carlo experiments.

The empirical analysis illustrates the usefulness of the method in finance. For log-realized volatility, the estimates provide evidence in favour of rough volatility, with Hurst exponents significantly below the Brownian threshold $H=1/2$. For log-prices, the method is used as a conservative diagnostic of weak-form market efficiency: confidence intervals containing $1/2$ indicate statistical compatibility with the Brownian benchmark, whereas intervals entirely above or below $1/2$ suggest persistent or anti-persistent departures from weak-form efficiency.

\vspace{30pt}
\bibliographystyle{plain}
\bibliography{Bibliography} 

@article{AngeliniBianchi2025,
  title={Kolmogorov--{S}mirnov estimation of self-similarity in long-range dependent fractional processes},
  author={Angelini, D. and Bianchi, S.},
  journal={Physica D: Nonlinear Phenomena},
  volume={476},
  pages={134697},
  year={2025},
  publisher={Elsevier}
}

@article{Bianchi2004,
author = {Bianchi, S.},
title = {A New Distribution-Based Test of Self-similarity},
journal = {Fractals},
volume = {12},
number = {03},
pages = {331-346},
year = {2004},
doi = {10.1142/S0218348X04002586}
}

@ARTICLE{BianchiPianese2018,
	AUTHOR       = "Bianchi, S. and  Pianese, A.",
	TITLE        = "Time-varying {H}urst–{H}{\"o}lder exponents and the dynamics of (in)efficiency in stock markets",
	JOURNAL      = "Chaos, Solitons \& Fractals",
	YEAR         = "2018",
	VOLUME       = "109",
	PAGES        = "64-75"
}

@article{BrandiDiMatteo2022,
author = {Brandi, G. and Di Matteo, T.},
title = {Multiscaling and rough volatility: An empirical investigation},
journal = {International Review of Financial Analysis},
volume = {84},
pages = {102324},
year  = {2022}
}

@article{ComteRenault1998,
    title = {Long memory in continuous-time stochastic volatility models},
    author = {Comte, F. and Renault, E.},
    journal = {Mathematical Finance},
    volume = {8},
    number = {4},
    pages = {291-323},
    year = {1998}
}

@article{DiMatteo2007,
  title={Multi-scaling in finance},
  author={Di Matteo, T.},
  journal={Quantitative finance},
  volume={7},
  number={1},
  pages={21--36},
  year={2007},
  publisher={Taylor \& Francis}
}

@ARTICLE{Fama1970,
    AUTHOR       = "Fama, E.F.",
	TITLE        = "Efficient Capital Markets: a Review of Theory and Empirical Work",
	JOURNAL      = "The Journal of Finance",
	YEAR         = "1970",
	VOLUME       = "25",
	NUMBER       = "2",
	PAGES        = "383-417"
}

@article{GatheralJaissonRosenbaum2018,
  title={Volatility is rough},
  author={Gatheral, J. and Jaisson, T. and Rosenbaum, M.},
  journal={Quantitative Finance},
  volume={18},
  number={6},
  pages={933--949},
  year={2018},
  publisher={Taylor \& Francis}
}

@ARTICLE{Kolmogorov1940,
	AUTHOR       = "Kolmogorov, A.N.",
	TITLE        = "Wienershe Spiralen und einige andere interessante Kurven im Hilbertishen Raum",
	JOURNAL      = "DAS of the URSS (Nat. Sciences)",
	YEAR         = "1940",
	VOLUME       = "26",
	PAGES        = "115-118"}

@ARTICLE{MandelbrotVanNess1968,
	AUTHOR       = "Mandelbrot, B.B. and Van Ness, J.W.",
	TITLE        = "Fractional {B}rownian Motions, Fractional Noises and Applications",
	JOURNAL      = "SIAM Review",
	YEAR         = "1968",
	VOLUME       = "10",
	NUMBER 	     = "4",
	PAGES        = "422-437"}

@BOOK{MantegnaStanley2004,
	author = {Mantegna, R.N. and Stanley, H.E.},
	title = {An Introduction to Econophysics. Correlations and Complexity in Finance},
	publisher = {{Cambridge University Press}},
	year = {2004}}

@article{TaqquTeverovskyWillinger1995,
author = {Taqqu, M.S. and Teverovsky, V. and Willinger, W.},
title = {Estimators for Long-Range Dependence: {A}n empirical study},
journal = {Fractals},
volume = {3},
number = {4},
pages = {785-798},
year = {1995}}

@article{TeverovskyTaqqu1997,
  title={Testing for long-range dependence in the presence of shifting means or a slowly declining trend, using a variance-type estimator},
  author={Teverovsky, V. and Taqqu, M.},
  journal={Journal of Time Series Analysis},
  volume={18},
  number={3},
  pages={279--304},
  year={1997},
  publisher={Wiley Online Library}}

@article{WoodChan1994,
  title={Simulation of stationary {G}aussian processes in $[0, 1]^d$},
  author={Wood, A.T.A. and Chan, G.},
  journal={Journal of Computational and Graphical Statistics},
  volume={3},
  number={4},
  pages={409--432},
  year={1994},
  publisher={Taylor \& Francis}}

@article{Arcones1994,
  title={Limit theorems for nonlinear functionals of a stationary {G}aussian sequence of vectors},
  author={Arcones, M.A.},
  journal={Annals of Probability},
  volume={22},
  number={4},
  pages={2242--2274},
  year={1994}
}

@article{dehling1989,
  title={The empirical process of some long-range dependent sequences with an application to {U}-statistics},
  author={Dehling, H. and Taqqu, M.S.},
  journal={Annals of Statistics},
  volume={17},
  number={4},
  pages={1767--1783},
  year={1989}
}

@book{giraitis2012large,
  title={Large sample inference for long memory processes},
  author={Giraitis, L. and Koul, H.L. and Surgailis, D.},
  year={2012},
  publisher={World Scientific}
}

@book{OuannasEtal2023,
author = {Ouannas, A. and Batiha, I.M. and Pham, V.-T.},
title = {Fractional Discrete Chaos},
publisher = {World Scientific},
year = {2023},
doi = {10.1142/13277},
address = {},
edition   = {},
URL = {https://www.worldscientific.com/doi/abs/10.1142/13277},
eprint = {https://www.worldscientific.com/doi/pdf/10.1142/13277}
}

@article{carbone2004time,
  title={Time-dependent {H}urst exponent in financial time series},
  author={Carbone, A. and Castelli, G. and Stanley, H.E.},
  journal={Physica A: Statistical Mechanics and its Applications},
  volume={344},
  number={1-2},
  pages={267--271},
  year={2004},
  publisher={Elsevier}
}

@article{cheridito2003b,
  title={Fractional {O}rnstein-{U}hlenbeck processes.},
  author={Cheridito, P. and Kawaguchi, H. and Maejima, M.},
  journal={Electronic Journal of Probability [electronic only]},
  volume={8},
  year={2003}
}

@article{wang2023modeling,
  title={Modeling and forecasting realized volatility with the fractional {O}rnstein--{U}hlenbeck process},
  author={Wang, X. and Xiao, W. and Yu, J.},
  journal={Journal of Econometrics},
  volume={232},
  number={2},
  pages={389--415},
  year={2023},
  publisher={Elsevier}
}

@article{taqqu1975weak,
  title={Weak convergence to fractional {B}rownian motion and to the {R}osenblatt process},
  author={Taqqu, M.S.},
  journal={Advances in Applied Probability},
  volume={7},
  number={2},
  pages={249--249},
  year={1975},
  publisher={Cambridge University Press}
}

@article{dobrushin1979non,
  title={Non-central limit theorems for non-linear functional of {G}aussian fields},
  author={Dobrushin, R.L. and Major, P.},
  journal={Zeitschrift f{\"u}r Wahrscheinlichkeitstheorie und verwandte Gebiete},
  volume={50},
  number={1},
  pages={27--52},
  year={1979},
  publisher={Springer}
}

@article{breuer1983central,
  title={Central limit theorems for non-linear functionals of {G}aussian fields},
  author={Breuer, P. and Major, P.},
  journal={Journal of Multivariate Analysis},
  volume={13},
  number={3},
  pages={425--441},
  year={1983},
  publisher={Elsevier}
}

@book{podlubny1998fractional,
  title={Fractional {D}ifferential {E}quations},
  author={Podlubny, I.},
  volume={198},
  year={1998},
  publisher={Elsevier}
}

@book{brockwell2009time,
  title={Time series: {T}heory and methods},
  author={Brockwell, P.J. and Davis, R.A.},
  year={2009},
  publisher={Springer science \& business media}
}

@article{Weron2002,
  author  = {Weron, R.},
  title   = {Estimating long-range dependence: Finite sample properties and confidence intervals},
  journal = {Physica A: Statistical Mechanics and its Applications},
  volume  = {312},
  number  = {1--2},
  pages   = {285--299},
  year    = {2002},
  doi     = {10.1016/S0378-4371(02)00961-5}
}

@article{Kantelhardt2001,
  author  = {Kantelhardt, J.W. and Koscielny-Bunde, E. and Rego, H.H.A. and Havlin, S. and Bunde, A.},
  title   = {Detecting long-range correlations with detrended fluctuation analysis},
  journal = {Physica A: Statistical Mechanics and its Applications},
  volume  = {295},
  pages   = {441--454},
  year    = {2001},
  doi     = {10.1016/S0378-4371(01)00144-3}
}

\appendix

\section{Proof of the short-memory benchmark}\label{app_A}
We prove Proposition~\ref{prop:SRD} in the balanced regime $n,m\to\infty, \frac{n}{m}\to 1.$ Let 
\begin{equation}\nonumber
\xi_i(x) := \ind_{\{X_i\le x\}}-\Phi(x), \qquad \eta_j(x) := \ind_{\{Y_j\le x\}}-\Phi(x),
\end{equation}
so that 
\begin{equation}\nonumber
F_n(x)-G_m(x) = \frac1n\sum_{i=1}^n \xi_i(x) - \frac1m\sum_{j=1}^m \eta_j(x). 
\end{equation}
Define the marginal empirical processes
\begin{equation}\nonumber
\alpha_n(x) := \frac1{\sqrt n}\sum_{i=1}^n \xi_i(x), \qquad \beta_m(x) := \frac1{\sqrt m}\sum_{j=1}^m \eta_j(x). 
\end{equation}
Then 
\begin{equation}\nonumber
\sqrt{\frac{nm}{n+m}} \left( F_n(x)-G_m(x) \right) = \sqrt{\frac{m}{n+m}}\,\alpha_n(x) - \sqrt{\frac{n}{n+m}}\,\beta_m(x). 
\end{equation}
Since $n/m\to 1$, we have $\sqrt{\frac{m}{n+m}}\to \frac1{\sqrt2},$ $\sqrt{\frac{n}{n+m}}\to \frac1{\sqrt2}.$
Therefore, 
\begin{equation}\nonumber
\sqrt{\frac{nm}{n+m}} \left( F_n(x)-G_m(x) \right) = \frac1{\sqrt2} \left( \alpha_n(x)-\beta_m(x) \right) +o_P(1). 
\end{equation}
We now identify the covariance structure of the limiting process. If $U$ and $V$ are standard Gaussian random variables with correlation $\rho$, then 
\begin{equation}\nonumber
\operatorname{Cov} \left( \ind_{\{U\le x\}}, \ind_{\{V\le y\}} \right) = \Phi_2(x,y;\rho)-\Phi(x)\Phi(y), 
\end{equation}
where $\Phi_2(\cdot,\cdot;\rho)$ denotes the distribution function of a centered bivariate normal vector with unit variances and correlation $\rho$. Hence, writing $\rho_X(k)=\operatorname{Corr}(X_0,X_k)$, we define 
\begin{equation}\nonumber
\Gamma_X(x,y) := \sum_{k\in\mathbb Z} \left[ \Phi_2(x,y;\rho_X(k))-\Phi(x)\Phi(y) \right] \qquad \Gamma_Y(x,y) := \sum_{k\in\mathbb Z} \left[ \Phi_2(x,y;\rho_Y(k))-\Phi(x)\Phi(y) \right],
\end{equation}
and the cross-covariance kernels are 
\begin{equation}\nonumber
\Gamma_{XY}(x,y) := \sum_{k\in\mathbb Z} \left[ \Phi_2(x,y;\rho_{XY}(k))-\Phi(x)\Phi(y) \right], \qquad \Gamma_{YX}(x,y) := \sum_{k\in\mathbb Z} \left[ \Phi_2(x,y;\rho_{YX}(k))-\Phi(x)\Phi(y) \right].
\end{equation}
By assumption, the auto-covariance and cross-covariance sequences of the underlying Gaussian increments are absolutely summable. Since the functions $z\mapsto \ind_{\{z\le x\}}-\Phi(x)$ are bounded and measurable, the corresponding covariance series defining $\Gamma_X,\Gamma_Y,\Gamma_{XY},\Gamma_{YX}$ are finite. Therefore the joint empirical process $(\alpha_n,\beta_m)$ satisfies a functional central limit theorem in $\ell^\infty(\mathbb R)\times \ell^\infty(\mathbb R).$ Thus 
\begin{equation}\nonumber
(\alpha_n,\beta_m) \Rightarrow (\alpha,\beta), \end{equation} 
where $(\alpha,\beta)$ is a centered Gaussian pair with covariance kernels 
\begin{equation}\nonumber
    \begin{cases}
        \operatorname{Cov}(\alpha(x),\alpha(y)) = \Gamma_X(x,y),\\
        \operatorname{Cov}(\beta(x),\beta(y)) = \Gamma_Y(x,y),\\
        \operatorname{Cov}(\alpha(x),\beta(y)) = \Gamma_{XY}(x,y),\\
        \operatorname{Cov}(\beta(x),\alpha(y)) = \Gamma_{YX}(x,y).
    \end{cases}
\end{equation}
Consequently, 
\begin{equation}\nonumber
\sqrt{\frac{nm}{n+m}} \left( F_n-G_m \right) \Rightarrow U \qquad \text{in } \ell^\infty(\mathbb R), 
\end{equation}
where $U(x) = \frac1{\sqrt2} \left( \alpha(x)-\beta(x) \right)$. The covariance kernel of $U$ is therefore
\begin{equation}\nonumber
    \operatorname{Cov}(U(x),U(y)) = \frac12\Gamma_X(x,y) + \frac12\Gamma_Y(x,y) - \frac12 \left[ \Gamma_{XY}(x,y)+\Gamma_{YX}(x,y) \right].
\end{equation}
Finally, by the continuous mapping theorem applied to the supremum norm, 
\begin{equation}\nonumber
\sqrt{\frac{nm}{n+m}}D_{n,m} = \sup_{x\in\mathbb R} \left| \sqrt{\frac{nm}{n+m}} \left( F_n(x)-G_m(x) \right) \right| \Rightarrow \sup_{x\in\mathbb R}|U(x)|. \end{equation}
This proves Proposition~\ref{prop:SRD}.

\section{Proof of the long-memory benchmark}\label{app_B}
We prove Proposition~\ref{prop:LRD} in the balanced regime $n,m\to\infty, \;\frac{n}{m}\to 1.$ Throughout this Appendix we assume $H>1/2$. Let
\begin{equation}\nonumber
h_x(z) := \ind_{\{z\le x\}}-\Phi(x), \qquad x\in\mathbb R.
\end{equation}
For $Z\sim \mathcal N(0,1)$, we have 
\begin{equation}\nonumber
\mathbb E[h_x(Z)Z] = \mathbb E\left[ Z\ind_{\{Z\le x\}} \right] = \int_{-\infty}^x z\phi(z)\,dz = -\phi(x).
\end{equation}
Hence $h_x(Z)$ admits the Gaussian projection decomposition 
\begin{equation}\label{eq:indicator1}
h_x(Z) = -\phi(x)Z+r_x(Z), 
\end{equation}
where $r_x(Z) := h_x(Z)+\phi(x)Z$ satisfies $\mathbb E[r_x(Z)]=0$, and $\mathbb E[r_x(Z)Z]=0.$

Applying the identity \eqref{eq:indicator1} to the first sample gives 
\begin{equation}\nonumber
\sum_{i=1}^n h_x(X_i) = -\phi(x)\sum_{i=1}^n X_i + \sum_{i=1}^n r_x(X_i).
\end{equation}
Therefore, 
\begin{equation}\nonumber
n^{1-H} \left( F_n(x)-\Phi(x) \right) = -\phi(x)n^{-H}\sum_{i=1}^n X_i + n^{-H}\sum_{i=1}^n r_x(X_i).
\end{equation}
Similarly, 
\begin{equation}\nonumber
m^{1-H} \left( G_m(x)-\Phi(x) \right) = -\phi(x)m^{-H}\sum_{j=1}^m Y_j + m^{-H}\sum_{j=1}^m r_x(Y_j).
\end{equation} 
We first consider the leading linear terms. Since the increments are Gaussian and long-range dependent with parameter $H>1/2$, the normalized linear partial sums satisfy 
\begin{equation}\nonumber
n^{-H}\sum_{i=1}^n X_i \Rightarrow Z_X, \quad \text{ and }\quad m^{-H}\sum_{j=1}^m Y_j \Rightarrow Z_Y,
\end{equation}
where $(Z_X,Z_Y)$ is a centered Gaussian vector with 
\begin{equation}\nonumber
\operatorname{Var}(Z_X)=\sigma_X^2, \qquad \operatorname{Var}(Z_Y)=\sigma_Y^2, \qquad \operatorname{Cov}(Z_X,Z_Y)=\sigma_{XY}.
\end{equation}
It remains to show that the residual terms are negligible at the normalization used above. Let $U$ and $V$ be standard Gaussian random variables with correlation $\rho$. Then 
\begin{equation}\nonumber
\operatorname{Cov} \left( \ind_{\{U\le x\}}, \ind_{\{V\le y\}} \right) = \Phi_2(x,y;\rho)-\Phi(x)\Phi(y). 
\end{equation}
Using the decomposition in \eqref{eq:indicator} we obtain 
\begin{equation}\nonumber
\operatorname{Cov}(r_x(U),r_y(V)) = \Phi_2(x,y;\rho)-\Phi(x)\Phi(y) - \rho\,\phi(x)\phi(y).
\end{equation}
Moreover, 
\begin{equation}\nonumber
\left. \frac{\partial}{\partial \rho} \Phi_2(x,y;\rho) \right|_{\rho=0} = \phi(x)\phi(y).
\end{equation}
Thus the first-order term in $\rho$ cancels, and for $|\rho|$ small, $\operatorname{Cov}(r_x(U),r_y(V)) = \mathcal O(\rho^2).$ For fGn in the long-memory regime,
\begin{equation}\nonumber
\rho_X(k) \sim C_X |k|^{2H-2} \quad \text{ and }\quad \rho_Y(k) \sim C_Y |k|^{2H-2}, \qquad |k|\to\infty. 
\end{equation}
Hence 
\begin{equation}\nonumber
\operatorname{Cov}(r_x(X_0),r_y(X_k)) = \mathcal O\left(|k|^{4H-4}\right),
\end{equation}
and analogously for the $Y$-sample. Consequently, 
\begin{equation}\nonumber
\operatorname{Var} \left( n^{-H}\sum_{i=1}^n r_x(X_i) \right) \le C n^{-2H} \sum_{i=1}^n\sum_{j=1}^n |\rho_X(i-j)|^2. 
\end{equation}
The right-hand side converges to zero. More precisely, 
\begin{equation}\nonumber
\left\| n^{-H}\sum_{i=1}^n r_x(X_i) \right\|_{L^2} = \begin{cases} \mathcal O\left(n^{1/2-H}\right), & 1/2<H<3/4, \\[4pt] \mathcal O\left(n^{-1/4}\sqrt{\log n}\right), & H=3/4, \\[4pt] \mathcal O\left(n^{H-1}\right), & 3/4<H<1. \end{cases} 
\end{equation}
The same estimates hold for the $Y$-sample, with $m$ in place of $n$. Using the monotonicity of the empirical distribution functions, together with standard discretization and Gaussian tail bounds, the previous pointwise bounds extend to the supremum over $x\in\mathbb R$. Therefore,
\begin{equation}\nonumber \sup_{x\in\mathbb R} \left| n^{-H}\sum_{i=1}^n r_x(X_i) \right| \to 0 \quad \text{ and } \quad \sup_{x\in\mathbb R} \left| m^{-H}\sum_{j=1}^m r_x(Y_j) \right| \to 0 \qquad \text{in probability}.
\end{equation}
It follows that, in $\ell^\infty(\mathbb R)$, $n^{1-H} \left( F_n-\Phi \right) \Rightarrow -\phi(\cdot)Z_X,$ and $m^{1-H} \left( G_m-\Phi \right) \Rightarrow -\phi(\cdot)Z_Y$. 

Now set $c_{n,m} := \frac{n^{1-H}m^{1-H}} {n^{1-H}+m^{1-H}}$. Then
\begin{equation}\nonumber 
c_{n,m} \left( F_n(x)-G_m(x) \right) = c_{n,m}n^{H-1} \left[ n^{1-H} \left( F_n(x)-\Phi(x) \right) \right] - c_{n,m}m^{H-1} \left[ m^{1-H} \left( G_m(x)-\Phi(x) \right) \right]. 
\end{equation}
Since $n/m\to1$, we have 
\begin{equation}\nonumber
c_{n,m}n^{H-1}\to \frac12, \qquad c_{n,m}m^{H-1}\to \frac12. 
\end{equation}
Therefore, 
\begin{equation}\nonumber
c_{n,m} \left( F_n(x)-G_m(x) \right) \Rightarrow -\phi(x)Z_0,
\end{equation} 
where $Z_0 = \frac12 \left( Z_X-Z_Y \right).$ The random variable $Z_0$ is centered Gaussian with variance 
\begin{equation}\nonumber \sigma_0^2 = \frac14 \left( \sigma_X^2+\sigma_Y^2-2\sigma_{XY} \right). 
\end{equation}
Finally, by the continuous mapping theorem, 
\begin{equation}\nonumber
D_{n,m}^{\diamond}=c_{n,m}D_{n,m} = \sup_{x\in\mathbb R} \left| c_{n,m} \left( F_n(x)-G_m(x) \right) \right| \Rightarrow \sup_{x\in\mathbb R} \phi(x)|Z_0|.
\end{equation} 
Since $\sup\limits_{x\in\mathbb R}\phi(x) = \phi(0) = \frac1{\sqrt{2\pi}},$ we obtain 
\begin{equation}\nonumber
D_{n,m}^{\diamond}:=c_{n,m}D_{n,m} \Rightarrow \frac1{\sqrt{2\pi}}|Z_0|.
\end{equation}
This proves Proposition~\ref{prop:LRD}.

\end{document}